\documentclass[aps,pra,twocolumn,superscriptaddress,longbibliography]{revtex4-2}

\usepackage{amssymb,amsmath,amsthm,bbm,bbold}
\usepackage{amsfonts}
\usepackage{graphicx}
\usepackage{mathtools}
\usepackage{hyperref}
\usepackage{float}


\usepackage[normalem]{ulem}
\usepackage{color}
\usepackage{xcolor}


\def\RR{\mathbbm{R}}

\def\H{\mathcal{H}}
\def\Ho{\mathcal{H}_1}
\def\HN{\wedge^N\Ho}

\newcommand{\F}{\mathcal{F}}
\newcommand{\bd}[1]{\boldsymbol{#1}}
\newcommand{\wb}{\bd{w}}
\newcommand{\g}{\gamma}

\newcommand{\G}{\Gamma}

\newcommand*\xbar[1]{\hbox{\vbox{
       \hrule height 0.6pt 
       \kern0.3ex
       \hbox{%
         \kern-0.2em
         \ensuremath{#1}%
         \kern 0.0em
         }}}}

\newcommand*\xxbar[1]{\hbox{\vbox{
       \hrule height 0.6pt 
       \kern0.3ex
       \hbox{%
         \kern-0.0em
         \ensuremath{#1}%
         \kern 0.0em
         }}}}

\newcommand{\Fw}{\mathcal{F}_{\!\wb}}
\newcommand{\Fbw}{\,\xbar{\mathcal{F}}_{\!\wb}}
\newcommand{\Gbw}{\,\xbar{\mathcal{G}}_{\!\wb}}

\newcommand{\E}{\mathcal{E}^N}
\newcommand{\p}{\mathcal{P}^1_N}
\newcommand{\e}{\mathcal{E}^1_N}
\newcommand{\Ew}{\mathcal{E}^N\!(\wb)}
\newcommand{\ew}{\mathcal{E}^1_N(\wb)}

\newcommand{\Ebw}{\,\xxbar{\mathcal{E}}^N\!(\wb)}
\newcommand{\ebw}{\,\xxbar{\mathcal{E}}^1_N(\wb)}

\newcommand{\ebwS}{\,\scriptsize{\xbar{\mathcal{E}}}\normalsize^1_N\hspace{-0.3mm}(\wb)}
\newcommand{\EbwS}{\,\scriptsize{\xbar{\mathcal{E}}}\normalsize^N\hspace{-0.7mm}(\wb)}

\newcommand{\Tr}{\mathrm{Tr}}
\DeclareMathOperator*{\argmin}{arg\,min}

\newcommand{\bra}[1]{\mbox{$\langle #1 |$}}
\newcommand{\ket}[1]{\mbox{$| #1 \rangle$}}

\newtheorem{thm}{Theorem}


\begin{document}
\title{Foundation of one-particle reduced density matrix functional theory for excited states}

\author{Julia Liebert}
\affiliation{Department of Physics, Arnold Sommerfeld Center for Theoretical Physics,
Ludwig-Maximilians-Universität München, Theresienstrasse 37, 80333 München, Germany}
\affiliation{Munich Center for Quantum Science and Technology (MCQST), Schellingstrasse 4, 80799 M\"unchen, Germany}

\author{Federico Castillo}
\affiliation{Max Planck Institute for Mathematics in the Sciences, Inselstraße 22, 04103 Leipzig, Germany}

\author{Jean-Philippe Labb\'e}
\affiliation{Department of Physics, Arnold Sommerfeld Center for Theoretical Physics,
Ludwig-Maximilians-Universität München, Theresienstrasse 37, 80333 München, Germany}
\affiliation{Munich Center for Quantum Science and Technology (MCQST), Schellingstrasse 4, 80799 M\"unchen, Germany}
\affiliation{Institut für Mathematik, Freie Universität Berlin, Arnimallee 2, 14195, Berlin, Germany}

\author{Christian Schilling}
\email{c.schilling@physik.uni-muenchen.de}
\affiliation{Department of Physics, Arnold Sommerfeld Center for Theoretical Physics,
Ludwig-Maximilians-Universität München, Theresienstrasse 37, 80333 München, Germany}
\affiliation{Munich Center for Quantum Science and Technology (MCQST), Schellingstrasse 4, 80799 M\"unchen, Germany}

\date{\today}

\begin{abstract}
In [Phys.~Rev.~Lett.~127, 023001 (2021)] a reduced density matrix functional theory (RDMFT) has been proposed for calculating energies of selected eigenstates of interacting many-fermion systems. Here, we develop a solid foundation for this so-called $\bd{w}$-RDMFT and present the details of various derivations.
First, we explain how a generalization of the Ritz variational principle to ensemble states with fixed weights $\bd{w}$ in combination with the constrained search would lead to a universal functional of the one-particle reduced density matrix. To turn this into a viable functional theory, however, we also need to implement an exact convex relaxation. This general procedure includes Valone's pioneering work on ground state RDMFT as the special case $\bd{w}=(1,0,\ldots)$. Then, we work out in a comprehensive manner a methodology for deriving a compact description of the functional's domain. This leads to a hierarchy of generalized exclusion principle constraints which we illustrate in great detail.
By anticipating their future pivotal role in functional theories and to keep our work self-contained, several required concepts from convex analysis are introduced and discussed.
\end{abstract}

\maketitle

\section{Introduction}\label{sec:intro}
Based on a generalization\cite{Gilbert75} of the Hohenberg-Kohn theorem \cite{HK64}, one-particle reduced density matrix functional theory (RDMFT) has been proposed as an extension of density functional theory (DFT). By involving the full one-particle reduced density matrix (1RDM) $\g$, it immediately facilitates the exact description not only of the potential but also of the kinetic energy. Most importantly, RDMFT can describe strong correlation effects in a more directly manner since they manifest themselves in the form of fractional occupation numbers of the 1RDM. This distinguishes RDMFT compared to DFT as a more suitable approach to strongly correlated many-fermion systems  and explains why RDMFT has become an active research field in recent years \cite{C00,M07,TLMH15,PG16,SKB17,S18,MTP18,GUL18,SS19,BTNRR19,MPU19,GWK19,SBM19,C20b,C20a,G20,M21,Su21a}.
While the accuracy of ground state calculations compares favourably to those of DFT \cite{LM08}, no proper foundation for calculating energies of excited states exists yet.
For instance, a comprehensive justification of a fully dynamical RDMFT is lacking since it is unclear how the Runge-Gross theorem justifying time-dependent DFT can be extended to 1-RDMs and non-local potentials \cite{PG16}. Moreover,
resorting to the adiabatic approximation exploited through linear response techniques turns out to be rather challenging \cite{GGB12,PG16}.

All these points have recently motivated an extension of RDMFT to excited states \cite{SP21}. This so-called $\wb$-RDMFT is
inspired by the work \cite{GOK88a,GOK88b,GOK88c} in which Gross, Oliviera and Kohn extended ground state DFT to a (time-independent) DFT for targeting excited states. It is namely based on a generalization of the Ritz variational principle to ensemble states with spectrum $\wb$, used within the Levy-Lieb constraint search formalism \cite{LE79,Li83}, to establish a universal functional of the 1RDM.
The general hope is that this $\wb$-RDMFT overcomes the recent limitations of DFT to weakly correlated systems. At the same time, similarly to
time-independent DFT for excited states \cite{GOK88a,GOK88b,GOK88c} it shall circumvent the deficiencies of adiabatic time-dependent DFT  \cite{T79,GOK88b,F15B,YPBU17,GP17,SB18,GKP18,GP19,KF19,Fromager2020-DD,Loos2020-EDFA,GSP2020} and RDMFT \cite{Pernal07a,Pernal07b,Pernal07c,Appel2007,Giesbertz10}. In particular, $\bd w$-RDMFT will have a mathematically rigorous foundation in striking contrast to time-dependent RDMFT.

In the present work, we develop a solid foundation for $\bd{w}$-RDMFT and provide in addition detailed derivations and illustrations of all key results of the Letter \onlinecite{SP21}. For this and to keep our work self-contained, we first introduce and comprehensively explain in Sec.~\ref{sec:math} several required concepts from convex analysis. Since we expect that those concepts
will also play a pivotal role in the future development of functional theories, we dedicate to them a separate section.
Yet, the reader shall feel free to skip Sec.~\ref{sec:math} and come back to it whenever specific concepts from convex analysis are required for the discussion of RDMFT in the subsequent sections.
In Sec.~\ref{sec:introwrdmft}, we explain in great detail how a generalization of the Ritz variational principle in combination with the constrained search formalism leads to a universal functional of the one-particle reduced density matrix. To turn this into a viable functional theory, we apply in Sec.~\ref{sec:relaxw} an exact convex relaxation scheme, leading to the so-called $\wb$-RDMFT. In Sec.~\ref{sec:characterize}, we thoroughly explain how the functional's domain can be described in a compact manner.
This key achievement of our work leads to a hierarchical generalization of Pauli's famous exclusion principle which we comprehensively illustrate in Sec.~\ref{sec:examples} (see in particular Eqs.~\eqref{inr=2},\eqref{inr=3},\eqref{inr=4}, Fig.~\ref{fig:ew} and Tab.~\ref{tab:r}). Finally, in Sec.~\ref{sec:latticeDFT} we explain how those generalized exclusion principle constraints would also affect the analogous DFT.

\section{Key concepts from convex analysis}\label{sec:math}
In this section, we introduce and explain different basic concepts from convex analysis, such as
convex hulls, vector majorization, permutohedra, variational principles, exact convex relaxation and conjugation.
All these concepts will be essential for our work. They will namely allow us to circumvent or solve various conceptual problems
which have hampered the development of an RDMFT for excited states.
At the same time,  we also feel that a decent understanding and a more regular use of concepts from convex analysis would facilitate the development of a more profound foundation for functional theories in general. The more recent works \onlinecite{Li83,TKSETH12,KETH14,KLTH21} which are concerned with density functional theory provide further evidence for this expectation.
Yet, the reader shall feel free to skip the present section and come back to it whenever specific concepts from convex analysis are required in subsequent sections.

\subsection{Convex sets, convex hulls and lower convex envelops}\label{sec:conv}
A subset $S\subset \RR^d$ of the $d$-dimensional Euclidean space $\RR^d$ is called convex if for any two points $x,y \in S$ and any $0\leq q\leq 1$ the point $q x +(1-q)y$ also belongs to $S$.  The convex hull $\mbox{conv}(S)$ of a set $S$ is given by all convex combinations $\sum_{i}p_i x_i$ of finitely many points $x_i \in S$, where $0\leq p_i\leq 1$, $\sum_i p_i=1$. According to Carath\'eodory's theorem (see, e.g., textbook \onlinecite{R97}) one can restrict here to convex combinations of at most $d+1$ elements. Equivalently, as it is illustrated in Fig.~\ref{fig:convS}, $\mbox{conv}(S)$ is the smallest convex subset in $\RR^d$ which contains $S$ and thus $\mbox{conv}(S)= \bigcap \,\{\tilde{S}\subset \RR^d\,|\,S \subset \tilde{S},\, \tilde{S}\, \mbox{convex}\}$.

Let $f:S\rightarrow \RR$ be a real-valued function defined on a convex set $S$.
For the sake of completeness, we would like to stress that the following results and concepts would even hold for functions taking values in $\RR \cup \{\infty\}$.
\begin{figure}[htb]
\frame{\includegraphics[scale=0.47]{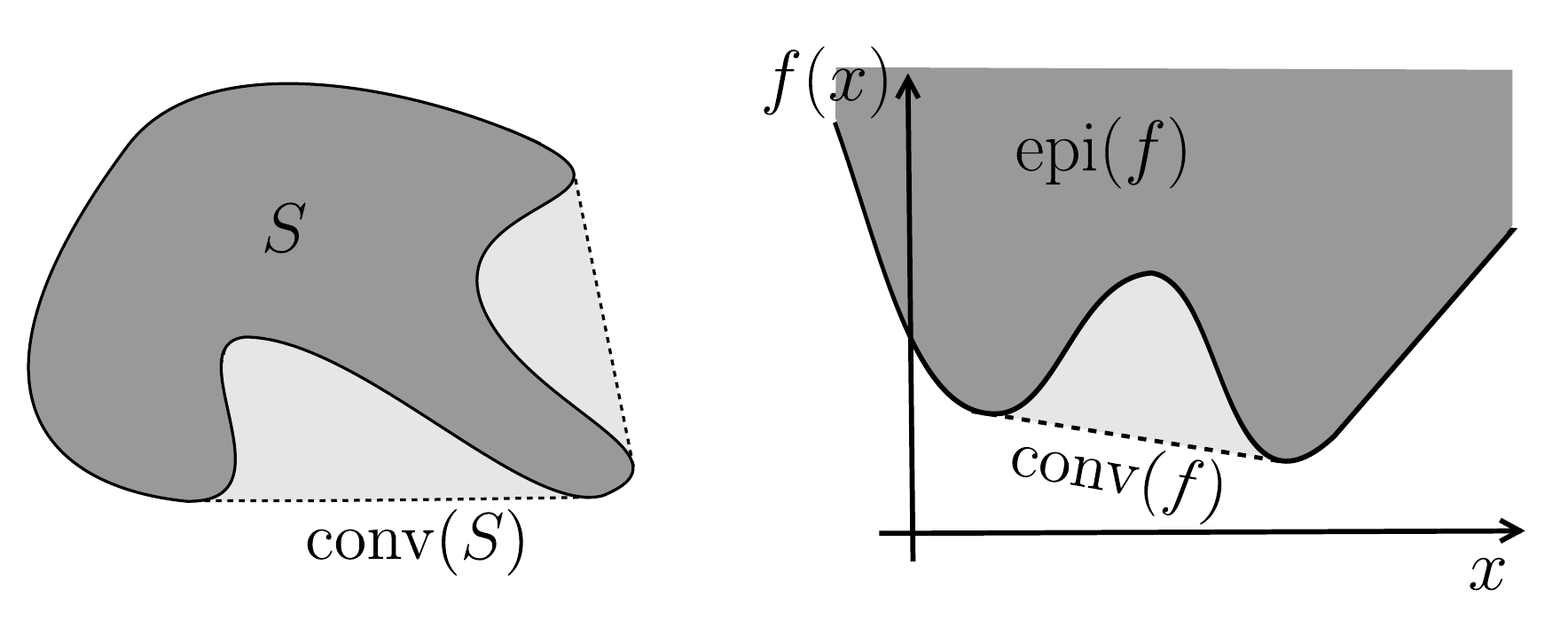}}
\caption{Left: Illustration of the convex hull $\mbox{conv}(S)$, i.e.~the smallest convex set which contains $S$. Right: Illustration of the lower convex envelop $\mbox{conv}(f)$ of a function $f$, defined through the convex hull (light gray) of the epigraph $\mbox{epi}(f)$ (gray).}
\label{fig:convS}
\end{figure}
The function $f$ is called convex if for any $x,y \in S$ and any $0 \leq q \leq 1$ we have
\begin{equation}\label{fconvex}
f\big(q x +(1-q)y\big) \leq q f(x)+(1-q)f(y)\,.
\end{equation}
Another useful concept used in the following is the one of the epigraph $\mbox{epi}(f)$ of a function $f$ (whose domain $S$ might be  non-convex),
\begin{equation}\label{epi}
\mbox{epi}(f) \equiv \{(x,y)\in \RR^d\times \RR\,|\,y \geq f(x)\}\,\subset \, \RR^{d+1}\,.
\end{equation}
The epigraph is the set of all points $(x,y)\in \RR^d\times \RR$ lying on or above the graph of $f$ (see also right side of Fig.~\ref{fig:convS}). Conversely, the epigraph of some function $f$ determines the function $f$ as well: For each value $x \in S$ one just needs to determine the smallest value $y$ under the condition $(x,y)\in \mbox{epi}(f)$, leading to $f(x)$. For more details we refer the reader to the textbook \onlinecite{R97}.

A subset $S \subset \RR^d$ is called closed if it contains all its limit points. To be more specific, for any sequence $(x_i)_{i=1}^\infty  \subset S$ converging to some $x \in \RR^d$, $\lim_{i\rightarrow \infty} x_i = x$, one has $x \in S$. The set $S$ is called compact if it is closed and bounded. Moreover, an extremal point of $S$  is a point $x \in S$ which cannot be written as a convex combination of other points in $S$. The significance of extremal points is emphasized by the Krein-Milman theorem. It states that any point $x$ in a compact set $S$ can be written as a (not necessarily unique) convex combination of extremal points of $S$.

The lower convex envelop $\mbox{conv}(f)$ of a function $f$ on $S$ is defined as the largest convex function $g$ on $S$ with $g \leq f$. Equivalently,
$\mbox{conv}(f)$ is the function corresponding to the convex hull of the epigraph of $f$, i.e., one exploits the relation $\mbox{epi}(\mbox{conv}(f))= \mbox{conv}(\mbox{epi}(f))$ (which also explains why one uses the same term, `\mbox{conv}' for both the convex hull and the lower convex envelop). In particular, the latter definition of the epigraph amounts to
\begin{equation}\label{convf}
\mbox{conv}(f)(x) = \inf \left\{\sum_i q_i f(x_i)\Big| \sum_i q_i x_i = x  \right\}\,,
\end{equation}
i.e., the infimum of $\sum_i q_i f(x_i)$ with respect to all possible convex decompositions $\sum_i q_i x_i$ of $x$. As a corollary of Carath\'eodory's theorem (see Corollary 17.1.5 in textbook \onlinecite{R97}) one could restrict such decompositions to at most $d+1$ elements. We illustrate the lower convex envelope of a function in the right panel of Fig.~\ref{fig:convS}. Since $f$ is not a convex function, also its epigraph (dark grey) is a non-convex set, whereas the epigraph of $\mathrm{conv}(f)$ (light and dark grey) represents a convex set.

\subsection{Vector majorization}\label{sec:major}
Let $\bd{v}, \bd{w}\in \RR^d$ be two vectors and denote by $\bd{v}^\downarrow, \bd{w}^\downarrow$ the vectors with the same entries, but sorted in descending order. We say that $\bd{w}$ majorizes $\bd{v}$ weakly if and only if for all $k=1,2,\ldots,d$:
\begin{equation}
v^\downarrow_1+\ldots + v^\downarrow_k \leq w^\downarrow_1+\ldots +w^\downarrow_k\,.
\end{equation}
If in addition the equality holds for $k=d$, $\sum_{j=1}^{d}v_j = \sum_{j=1}^{d}w_j$, we say that $\bd{w}$ majorizes $\bd{v}$, $\bd{v} \prec \bd{w}$. On the set $\RR^d$ the majorization $\prec$ defines a pre-order. Indeed, it is (i) transitive, $\bd{u} \prec \bd{v}$ and $\bd{v} \prec \bd{w}$ implies $\bd{u} \prec \bd{w}$ and (ii) reflexive,  $\bd{v} \prec \bd{v}$ for all $\bd{v}$.
Yet, it is not a total order, since not any two $\bd{u},\bd{v}$ are related and even not a partial order since it is not antisymmetric, i.e.,
$\bd{u} \prec \bd{v}$ and $\bd{v} \prec \bd{u}$ does not necessarily implies $\bd{u} = \bd{v}$ (but only $\bd{u}^\downarrow = \bd{v}^\downarrow$). Vector majorization will be a crucial tool to translate the RDMFT for excited states introduced in Sec.~\ref{sec:introwrdmft} into a practical method and reveals important relations between different sets of density matrices in Sec.~\ref{sec:relaxw}.

In 1923, Schur \cite{Schur1923} revealed an important connection between majorization and matrices (see also Ref.~\onlinecite{Horn54}).
\begin{thm}[Schur-Horn]\label{thm:Schur}
Let $A\in\RR^{d\times d}$ be a positive semi-definite Hermitian matrix, $\bd a\in\RR^d $ the vector of diagonal elements of $A$ and $\bd\lambda\in\RR^d$ the vector of eigenvalues of $A$. Then,
\begin{equation}
\bd a\prec \bd \lambda\,.
\end{equation}
\end{thm}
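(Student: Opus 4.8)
The plan is to use the spectral theorem to express the diagonal of $A$ as a doubly stochastic average of its eigenvalues, and then to translate ``doubly stochastic average'' into majorization. First I would diagonalize: since $A$ is Hermitian, the spectral theorem supplies a unitary $U$ with $A = U\,\mathrm{diag}(\bd{\lambda})\,U^\dagger$. Reading off the diagonal entries gives
\begin{equation}
a_i = A_{ii} = \sum_{j=1}^{d} |U_{ij}|^2\,\lambda_j\,,
\end{equation}
so that $\bd{a} = D\bd{\lambda}$ with $D_{ij}\equiv|U_{ij}|^2$. Since $U$ is unitary, each row and each column of $D$ sums to one and all entries are nonnegative, i.e.\ $D$ is doubly stochastic. (Note that positive semi-definiteness of $A$ is not actually needed for the majorization; it only ensures $\lambda_j\geq 0$.)

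The second step is to show that $\bd{a}=D\bd{\lambda}$ with $D$ doubly stochastic already forces $\bd{a}\prec\bd{\lambda}$. The trace condition is immediate, $\sum_i a_i = \Tr(A) = \sum_j \lambda_j$, so only the partial-sum inequalities remain. For these I would invoke the Ky Fan / Rayleigh--Ritz variational principle, according to which the sum of the $k$ largest eigenvalues equals the maximum of $\Tr(AP)$ over all rank-$k$ orthogonal projections $P$. Choosing $P$ to be the coordinate projection onto the $k$ basis directions that carry the largest diagonal entries produces one admissible competitor, whence $a^\downarrow_1+\ldots+a^\downarrow_k \leq \lambda^\downarrow_1+\ldots+\lambda^\downarrow_k$ for every $k$. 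Together with the trace identity this is exactly $\bd{a}\prec\bd{\lambda}$.

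A second, more convex-geometric route fits the framework of this paper particularly well. By the Birkhoff--von Neumann theorem the doubly stochastic matrix $D$ is a convex combination of permutation matrices, so $\bd{a}=D\bd{\lambda}$ is a convex combination of permutations of $\bd{\lambda}$ and hence lies in the convex hull of those permutations (the permutohedron generated by $\bd{\lambda}$). Since by Rado's theorem this polytope coincides with the set of all vectors majorized by $\bd{\lambda}$, the conclusion $\bd{a}\prec\bd{\lambda}$ follows at once.

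I expect the only real obstacle to be the equivalence used in the second step, namely that a doubly stochastic transform can never increase any of the ordered partial sums. The diagonalization in the first step is entirely routine, and the majorization content is exactly the classical theorem of Hardy, Littlewood and P\'olya (equivalently Birkhoff--von Neumann combined with Rado's description of the permutohedron). In a self-contained write-up I would prefer to establish the partial-sum inequalities directly via the Ky Fan principle rather than cite these results, since that keeps the argument elementary and does not presuppose the detailed structure of the Birkhoff polytope.
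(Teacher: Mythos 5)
Your proposal is correct, but note that the paper itself never proves this statement: Theorem~\ref{thm:Schur} is quoted as a classical result with citations to Schur and Horn, so there is no in-paper argument to compare against. Both of your routes are sound, and both are pleasantly self-contained relative to the paper's own toolkit: the first step (diagonalize, read off $a_i=\sum_j |U_{ij}|^2\lambda_j$, observe that $D_{ij}=|U_{ij}|^2$ is doubly stochastic) is the standard Schur argument, and the passage from ``doubly stochastic image'' to majorization is precisely the forward direction of Theorem~\ref{thm:HLP}, which you could cite verbatim instead of re-deriving it via Birkhoff--von Neumann (Theorem~\ref{thm:Birk}) plus Rado (Theorem~\ref{thm:Rado}) --- though that derivation is also valid and fits the paper's convex-geometric spirit. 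One small structural remark: your Ky Fan route does not actually use the doubly stochastic matrix at all; choosing the coordinate projection onto the $k$ coordinates with largest diagonal entries and invoking the maximum form of Ky Fan's principle (Sec.~\ref{sec:variational}) bounds $a^\downarrow_1+\ldots+a^\downarrow_k$ by $\lambda^\downarrow_1+\ldots+\lambda^\downarrow_k$ directly from $A$, so if you take that route you can delete step one entirely and the proof becomes a three-line argument plus the trace identity. Your observation that positive semi-definiteness is irrelevant to the majorization (it only guarantees $\lambda_j\geq 0$) is also correct; the theorem holds for arbitrary Hermitian $A$.
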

In Sec.~\ref{sec:latticeDFT}, we will use the Schur-Horn theorem to show that the constraints on the domain of the universal functional in $\wb$-ensemble RDMFT apply as well to excited state DFT for lattice systems.

A matrix $M \in \RR^{m\times d}$ with non-negative entries is called row stochastic (column stochastic) if for each row (column) all entries sum up to one. A square matrix $M \in \RR^{d\times d}$ is called doubly stochastic (or bistochastic) if it is at the same time row and column stochastic. The following three theorems \cite{HLP53,AU82,B46,vN53} will be crucial for our work, especially for the development of RDMFT for excited states in Sec.~\ref{sec:introwrdmft} and Sec.~\ref{sec:relaxw}.
\begin{thm}[Hardy, Littlewood, P\'olya]\label{thm:HLP}
Let $\wb, \wb'\in \RR^d$. Then,
\begin{eqnarray}
\wb' \prec \wb \quad &\Leftrightarrow & \quad \exists M \in \RR^{d\times d}\,\mbox{double stochastic}\!: \nonumber \\
&& \quad \wb' = M \wb\,.
\end{eqnarray}
\end{thm}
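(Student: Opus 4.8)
The plan is to prove the two implications separately, since the doubly stochastic matrix $M$ enters each direction quite differently.

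For the implication ``$\Leftarrow$'', suppose $\wb' = M\wb$ with $M$ doubly stochastic. I would establish the partial-sum inequalities directly. After relabelling so that $w_1' \geq \ldots \geq w_d'$, the $k$-th partial sum can be written as
\begin{equation*}
\sum_{i=1}^k w_i' = \sum_{j=1}^d c_j\, w_j\,, \qquad c_j \equiv \sum_{i=1}^k M_{ij}\,.
\end{equation*}
Column stochasticity gives $0 \le c_j \le 1$ and row stochasticity gives $\sum_j c_j = k$; maximising the linear functional $\sum_j c_j w_j$ over this polytope of admissible coefficients places all weight on the $k$ largest entries of $\wb$, so that $\sum_{i=1}^k w_i' \le \sum_{j=1}^k w_j^\downarrow$. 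Finally $\sum_i w_i' = \sum_i w_i$ follows immediately from column stochasticity, upgrading the weak majorization to $\wb' \prec \wb$. (Equivalently one could apply Jensen's inequality to an arbitrary convex test function, or decompose $M$ into permutation matrices; but the estimate above is self-contained.)

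For the converse ``$\Rightarrow$'', I would build $M$ as a finite product of elementary doubly stochastic maps. Sorting both vectors in descending order, I would locate the first index $j$ at which $w_j^\downarrow > w_j'^\downarrow$ (such an index exists whenever $\wb \neq \wb'$, since the partial sums dominate while the total sums agree) and the first later index $k > j$ with $w_k^\downarrow < w_k'^\downarrow$. On the coordinate pair $(j,k)$ I would apply a \emph{T-transform} $T = (1-\lambda)\Id + \lambda Q_{jk}$, where $Q_{jk}$ transposes those two coordinates and $\lambda \in [0,1]$ is tuned so that exactly one of the two touched entries reaches its target value without overshooting. Each such $T$ is manifestly doubly stochastic, it preserves the sum of the two entries while moving them closer together, and I would verify that $T\wb^\downarrow$ still majorizes $\wb'$ while strictly reducing a suitable discrepancy (for instance the number of coordinates in which the current vector disagrees with $\wb'^\downarrow$). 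Iterating produces $\wb' = T_m \cdots T_1 \wb$, and setting $M = T_m \cdots T_1$, together with the closure of doubly stochastic matrices under multiplication, completes the argument.

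The main obstacle is the bookkeeping in the converse direction: one must check that a single T-transform never destroys the majorization relation with $\wb'$, and that the chosen monotone quantity genuinely decreases at each step so the procedure terminates after finitely many transforms. Making this termination argument airtight --- in particular arguing that the ``Robin Hood'' transfer can always be taken to saturate at least one target coordinate --- is where the real work lies; the forward direction and the closure of doubly stochastic matrices under multiplication are routine by comparison.
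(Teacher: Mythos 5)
The paper does not actually prove Theorem~\ref{thm:HLP}: it is quoted as a classical result with citations to Hardy--Littlewood--P\'olya and Marshall--Olkin, so there is no in-paper argument to compare against. Judged on its own merits, your proposal is the standard textbook proof and it is sound. The ``$\Leftarrow$'' direction is complete as written: the coefficients $c_j=\sum_{i=1}^k M_{ij}$ satisfy $0\le c_j\le 1$ (partial column sums) and $\sum_j c_j = k$ (row sums), and maximizing $\sum_j c_j w_j$ over that polytope indeed concentrates the weight on the $k$ largest entries of $\wb$, while column stochasticity gives equality of the total sums. The ``$\Rightarrow$'' direction via T-transforms is the classical Muirhead/HLP argument (Lemma 2.B.1 in Marshall--Olkin); your choice of indices and of $\lambda$ is the right one, and the bookkeeping you flag --- that $T\wb^{\downarrow}$ still majorizes $\wb'^{\downarrow}$ and that the number of coordinates agreeing with the target strictly increases --- does go through, so the plan terminates. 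Two small repairs: first, the T-transforms act on the \emph{sorted} vectors, so what you actually produce is $\wb'^{\downarrow}=T_m\cdots T_1\,\wb^{\downarrow}$; to get $\wb'=M\wb$ you must compose with the permutation matrices relating $\wb$ to $\wb^{\downarrow}$ and $\wb'^{\downarrow}$ to $\wb'$, which is harmless since permutation matrices are doubly stochastic and the set is closed under products. Second, note that within the paper's own toolkit the ``$\Leftarrow$'' direction has a one-line alternative: by Theorem~\ref{thm:Birk} write $M=\sum_\pi p_\pi \Pi_\pi$ as a convex combination of permutation matrices, so $\wb'=\sum_\pi p_\pi\,\pi(\wb)$ lies in the permutohedron $P_{\wb}$, and Theorem~\ref{thm:Rado} gives $\wb'\prec\wb$; your direct linear-programming estimate has the virtue of being independent of both of those theorems, which matters if one worries about logical circularity (Rado's theorem is often itself derived from HLP).
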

\begin{thm}[Uhlmann]\label{thm:Uhl}
Let $\G$ and $\G'$ be two density operators on a $D$-dimensional Hilbert space.
Then there exist unitary operators $U_i$ and weights $0\leq p_i\leq 1$, $\sum_{i}p_i=1$, such that
\begin{equation}
\G' = \sum_{i=1}^D p_i U_i \G U_i^\dagger
\end{equation}
if and only if the spectrum of $\G$ majorizes the spectrum of $\G'$, $\mbox{spec}(\G') \prec \mbox{spec}(\G)$.
\end{thm}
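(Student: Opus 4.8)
The plan is to prove both implications by reducing the operator identity to a statement about the eigenvalue vectors $\bd\lambda = \mathrm{spec}(\G)$ and $\bd\lambda' = \mathrm{spec}(\G')$, and then invoking the Hardy--Littlewood--P\'olya theorem (Theorem~\ref{thm:HLP}). The guiding observation is that $\{U\G U^\dagger \,|\, U \text{ unitary}\}$ is precisely the set of density operators sharing the spectrum of $\G$, so the right-hand side of the claimed identity expresses $\G'$ as a point in the convex hull of the unitary orbit of $\G$; majorization is then the natural invariant controlling membership in this convex hull.

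For the forward direction ($\Rightarrow$), I would diagonalize $\G = \sum_k \lambda_k \ket{\psi_k}\bra{\psi_k}$ and $\G' = \sum_j \lambda'_j \ket{\phi_j}\bra{\phi_j}$ in their respective eigenbases and read off the eigenvalues of $\G'$ as the diagonal elements $\lambda'_j = \bra{\phi_j}\G'\ket{\phi_j}$. Substituting $\G' = \sum_i p_i U_i\G U_i^\dagger$ yields $\lambda'_j = \sum_k M_{jk}\lambda_k$ with $M_{jk} = \sum_i p_i |\bra{\phi_j}U_i\ket{\psi_k}|^2$. The first key step is to verify that, for each $i$, the matrix with entries $|\bra{\phi_j}U_i\ket{\psi_k}|^2$ is doubly stochastic; this follows from completeness and orthonormality of $\{\ket{\psi_k}\}$ and $\{\ket{\phi_j}\}$ together with the unitarity of $U_i$. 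Since the doubly stochastic matrices form a convex set, the convex combination $M$ is again doubly stochastic, and $\bd\lambda' = M\bd\lambda$ then gives $\mathrm{spec}(\G')\prec\mathrm{spec}(\G)$ directly via Theorem~\ref{thm:HLP}.

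For the reverse direction ($\Leftarrow$), I would first use Theorem~\ref{thm:HLP} to obtain a doubly stochastic $M$ with $\bd\lambda' = M\bd\lambda$, and then the classical Birkhoff--von Neumann theorem to write this as a convex combination $\bd\lambda' = \sum_i p_i P_i\bd\lambda$ over permutation matrices $P_i$. Because permuting the diagonal is realized by conjugation with the (orthogonal, hence unitary) matrix $P_i$, linearity of $\mathrm{diag}(\cdot)$ yields $\mathrm{diag}(\bd\lambda') = \sum_i p_i P_i\,\mathrm{diag}(\bd\lambda)\,P_i^T$. Writing $\G = V\,\mathrm{diag}(\bd\lambda)\,V^\dagger$ and $\G' = V'\,\mathrm{diag}(\bd\lambda')\,V'^\dagger$ and setting $U_i = V' P_i V^\dagger$ then transports this identity back to $\G' = \sum_i p_i U_i\G U_i^\dagger$, since $P_i^T = P_i^\dagger$.

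The main obstacle I anticipate is not the logic of either implication but the sharp bookkeeping needed to obtain \emph{exactly} $D$ unitaries as stated: a naive combination of Birkhoff--von Neumann with Carath\'eodory's theorem in the full space of doubly stochastic matrices, whose affine dimension is $(D-1)^2$, produces far too many permutations. The remedy is to argue directly at the level of the eigenvalue vectors, observing that $\bd\lambda'$ lies in the permutohedron $\mathrm{conv}\{P\bd\lambda\}$, which is contained in the hyperplane $\{\sum_j x_j = \sum_j \lambda_j\}$ of dimension $D-1$; Carath\'eodory's theorem applied there bounds the number of required extremal points $P_i\bd\lambda$ by $(D-1)+1 = D$. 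This is precisely the permutohedron-and-Carath\'eodory machinery assembled in Sec.~\ref{sec:math}, and checking that it tightens the count to $D$ is the delicate point of the argument.
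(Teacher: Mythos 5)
Your proof is correct, but there is nothing in the paper to compare it against: Theorem~\ref{thm:Uhl} is quoted there without proof, alongside Theorems~\ref{thm:HLP} and \ref{thm:Birk}, with a pointer to the literature (Refs.~\onlinecite{HLP53,AU82,B46,vN53}). Your argument is a sound, self-contained derivation built entirely from the tools of Sec.~\ref{sec:math}. The forward direction is the standard one: each matrix $\bigl(|\bra{\phi_j}U_i\ket{\psi_k}|^2\bigr)_{jk}$ is doubly stochastic by unitarity and completeness of the two eigenbases, so $M=\sum_i p_i\bigl(|\bra{\phi_j}U_i\ket{\psi_k}|^2\bigr)_{jk}$ is doubly stochastic, $\bd{\lambda}'=M\bd{\lambda}$, and Theorem~\ref{thm:HLP} yields $\mathrm{spec}(\G')\prec\mathrm{spec}(\G)$. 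The reverse direction via $P_i\,\mathrm{diag}(\bd{\lambda})\,P_i^{T}=\mathrm{diag}(P_i\bd{\lambda})$ and $U_i=V'P_iV^\dagger$ is likewise correct. Most importantly, you identified and resolved the one genuinely delicate point, namely that the statement asks for at most $D$ unitaries: a naive application of Birkhoff--von Neumann plus Carath\'eodory inside the Birkhoff polytope (affine dimension $(D-1)^2$) would give $(D-1)^2+1$ terms, far too many. Your fix --- use Rado's theorem (Theorem~\ref{thm:Rado}) to place $\bd{\lambda}'$ in the permutohedron $P_{\bd{\lambda}}$, which lies in the hyperplane $\sum_j x_j=\sum_j\lambda_j$ of affine dimension $D-1$, and apply Carath\'eodory within that hyperplane to extract at most $D$ permuted vectors $P_i\bd{\lambda}$ --- is exactly right, and it dovetails with the paper's own remark in Sec.~\ref{sec:polytopes} that a permutohedron in $\RR^d$ has dimension at most $d-1$.
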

For the sake of completeness, we recall that the density operators $\Gamma$ are by definition hermitian, positive semidefinite and normalized to one, i.e.~$\Tr[\Gamma]=1$.
Uhlmann's theorem is the quantum generalization of Theorem \ref{thm:HLP}. Indeed, the classical case is part of the more general quantum context and would correspond to two density operators $\G$ and $\G'$ being diagonal with respect to the same reference basis. In that case,  Theorem \ref{thm:HLP} could be applied to the vector of diagonal entries of $\G$ and $\G'$, leading to the same predictions as Uhlmann's theorem.

\begin{thm}[Birkhoff, von Neumann]\label{thm:Birk}
The extremal points of the convex set of doubly stochastic matrices in $\RR^{d\times d}$ are given by the $d!$ many permutation matrices.
\end{thm}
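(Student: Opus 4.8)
The plan is to establish the two inclusions separately: first that every permutation matrix is an extremal point of the Birkhoff polytope $B_d$ of doubly stochastic matrices in $\RR^{d\times d}$, and second that every extremal point is a permutation matrix. The first inclusion is the easy direction. I would exploit that $B_d$ is contained in the unit cube $[0,1]^{d\times d}$, since every entry of a doubly stochastic matrix lies between $0$ and $1$. A permutation matrix $P$ has all entries in $\{0,1\}$, so if one writes $P = qA + (1-q)B$ with $0 < q < 1$ and $A,B \in B_d$, then each vanishing entry $P_{ij}=0$ forces $A_{ij}=B_{ij}=0$ and each entry $P_{ij}=1$ forces $A_{ij}=B_{ij}=1$ (using $A_{ij},B_{ij}\in[0,1]$). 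Hence $A=B=P$, so $P$ cannot be a nontrivial convex combination and is extremal.

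For the second inclusion I would show the stronger statement that every $M \in B_d$ is a convex combination of permutation matrices; this immediately implies that a non-permutation doubly stochastic matrix is a nontrivial convex combination of other points of $B_d$, hence not extremal, so that the only candidates for extremal points are the permutation matrices. The argument proceeds by induction on the number of strictly positive entries of $M$. The base case is a matrix with exactly $d$ positive entries, which double stochasticity forces to be a permutation matrix.

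The inductive step rests on the key combinatorial lemma: any $M \in B_d$ admits a permutation $\sigma$ with $M_{i,\sigma(i)} > 0$ for all $i$. I would prove this by applying Hall's marriage theorem to the bipartite graph joining row $i$ to column $j$ whenever $M_{ij}>0$. Hall's condition follows from a counting argument using double stochasticity: for any set $S$ of rows, writing $N(S)$ for the columns supporting a positive entry in some row of $S$, one has $|S| = \sum_{i\in S}\sum_j M_{ij} = \sum_{i\in S}\sum_{j\in N(S)} M_{ij} \leq \sum_{j\in N(S)}\sum_i M_{ij} = |N(S)|$, so $|N(S)| \geq |S|$ and a perfect matching exists. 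Given such $\sigma$ with permutation matrix $P_\sigma$, set $c = \min_i M_{i,\sigma(i)} > 0$. Then $M - cP_\sigma$ has nonnegative entries and constant row and column sums $1-c$; if $c=1$ then $M=P_\sigma$, and otherwise $(M-cP_\sigma)/(1-c)$ is doubly stochastic with at least one more vanishing entry than $M$. The induction hypothesis expresses the latter as a convex combination of permutation matrices, and reinserting gives $M = cP_\sigma + (1-c)\sum_k q_k P_k$, completing the step.

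The main obstacle is the positive-diagonal lemma, i.e. the passage from double stochasticity to the existence of $\sigma$, since this is where the genuine combinatorial content lies; the counting verification of Hall's condition is the crux, after which the invocation of Hall's theorem and the inductive bookkeeping are routine. I would also note that one should avoid using Theorem \ref{thm:HLP} here to prevent circularity, as the Hardy--Littlewood--P\'olya characterization is most naturally derived from the present theorem rather than the reverse.
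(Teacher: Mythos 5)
Your proof is correct, but there is nothing in the paper to compare it against: the paper states the Birkhoff--von Neumann theorem as a known classical result, citing the original literature (Birkhoff 1946, von Neumann 1953), and offers no proof of its own --- the only accompanying text is the remark that permutation matrices are themselves doubly stochastic. What you have written is the standard textbook argument, and it is complete: the extremality of permutation matrices via the observation that the Birkhoff polytope sits inside the unit cube $[0,1]^{d\times d}$, and the converse via the stronger decomposition statement, proved by induction on the number of positive entries, with Hall's marriage theorem (checked by the row/column-sum counting argument) supplying the permutation $\sigma$ in the support and the subtraction $M - cP_\sigma$ driving the induction. All steps check out, including the base case (exactly $d$ positive entries forces one entry equal to $1$ per row and per column) and the strict decrease in the number of positive entries at each step. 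Your closing remark on circularity is also well taken and consistent with the paper's logical structure: the paper uses Theorem \ref{thm:Birk} (together with Theorems \ref{thm:HLP} and \ref{thm:Uhl}) as independent inputs, so deriving it from Hall's theorem rather than from the Hardy--Littlewood--P\'olya characterization is the right choice. One cosmetic point: the statement asserts there are exactly $d!$ extremal points, so strictly you should add the (trivial) observation that distinct permutations give distinct matrices.
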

We would like to recall that a permutation matrix is a square matrix which has in each row and column exactly one entry $1$ and $d-1$ entries $0$. Consequently, it is in particular also doubly stochastic.

\subsection{Permutohedra and polytopes in general}\label{sec:polytopes}
A quite specific and important class of (compact) convex sets is the one of convex polytopes. A convex polytope $P \subset \RR^d$ is a subset of a Euclidean space which is given as the convex hull of finitely many extremal points. This family of extremal points (vertices) defines the polytope in its so-called $v$-representation. Equivalently, a polytope can be defined as the intersection of finitely many halfspaces, i.e., a finite family of linear constraints ($h$-representation). Translating the $v$-representation of a polytope into its $h$-representation is computationally quite demanding for general polytopes in higher dimensions with a large number of vertices (see, e.g., Ref.~\onlinecite{AF92}). For each polytope, there exists
a minimal $h$-representations which is unique up to rescaling of the linear inequalities. The inequalities belonging to the minimal $h$-representation are called facet-defining.

Let $\wb\in \RR^d$ be a vector. We define the corresponding \emph{permutohedron} $P_{\wb}$ as the convex hull of all permuted versions of the vector $\wb$,
\begin{equation}
  P_{\wb} \equiv \mbox{conv}\left(\{\pi(\wb)\}_{\pi \in \mathcal{S}^d}\right)\,.
\end{equation}
Here, $\mathcal{S}^d$ denotes the group of permutations $\pi$ of $d$ elements (i.e., $\pi$ permutes the $d$ entries of the vector $\wb$).
Clearly, the permutohedron is permutation-invariant since
$\pi(P_{\wb})=P_{\pi(\wb)} = P_{\wb}$ for any $\pi \in \mathcal{S}^d$.

Notice, the permutohedron $P_{\wb}\subset \RR^d$ lies in the hyperplane of vectors $\bd{v}\in \RR^d$ fulfilling $\sum_{i=1}^{d}v_i=\sum_{i=1}^{d}w_i$. Consequently, it has dimension of at most $d-1$.

A crucial theorem states \cite{R52}
\begin{thm}[Rado]\label{thm:Rado}
Let $\wb \in \RR^d$. Then,
\begin{equation}
\bd{v} \in P_{\wb} \quad \Leftrightarrow \quad \bd{v} \prec \wb\,.
\end{equation}
\end{thm}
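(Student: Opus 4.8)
The plan is to deduce Rado's theorem directly from the two structural results already at our disposal, namely the Hardy--Littlewood--P\'olya theorem (Theorem~\ref{thm:HLP}) and the Birkhoff--von Neumann theorem (Theorem~\ref{thm:Birk}), by recognizing the permutohedron as the image of the doubly stochastic matrices under a single linear map. Concretely, writing $P_\pi$ for the permutation matrix implementing $\pi\in\mathcal{S}^d$, so that $\pi(\wb)=P_\pi\wb$, I would first establish the identity
\begin{equation}
P_{\wb}=\{M\wb\,|\,M\in\RR^{d\times d}\ \mbox{doubly stochastic}\}\,.
\end{equation}
Granting this, the theorem is immediate: by Theorem~\ref{thm:HLP} the right-hand set is exactly $\{\bd{v}\,|\,\bd{v}\prec\wb\}$, so $\bd{v}\in P_{\wb}$ if and only if $\bd{v}\prec\wb$.

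The heart of the argument is therefore the displayed identity, which I would prove in two steps. First, the set of doubly stochastic matrices is compact (its entries lie in $[0,1]$ and it is cut out by finitely many linear equalities and inequalities) and convex, so by the Krein--Milman theorem it is the convex hull of its extremal points; by Theorem~\ref{thm:Birk} these are precisely the $d!$ permutation matrices. Hence every doubly stochastic $M$ can be written as $M=\sum_\pi p_\pi P_\pi$ with weights $p_\pi\geq 0$, $\sum_\pi p_\pi=1$, and conversely every such convex combination is doubly stochastic. Second, since $M\mapsto M\wb$ is linear, it carries this convex hull of matrices to the convex hull of the image points, $\{M\wb\}=\mbox{conv}(\{P_\pi\wb\}_\pi)=\mbox{conv}(\{\pi(\wb)\}_\pi)=P_{\wb}$, which is the claimed identity.

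I expect the two implications to be genuinely symmetric under this scheme, so there is no single hard direction; the only point requiring mild care is the passage from the extremal-point characterization of Theorem~\ref{thm:Birk} to the statement that \emph{every} doubly stochastic matrix is a convex combination of permutations, which is exactly where compactness and Krein--Milman enter. As an independent cross-check of the forward implication $\bd{v}\in P_{\wb}\Rightarrow\bd{v}\prec\wb$ one can argue elementarily: for any index set $S$ with $|S|=k$ and any $\pi$ one has $\sum_{i\in S}(\pi(\wb))_i\leq w^\downarrow_1+\ldots+w^\downarrow_k$, so averaging a convex decomposition $\bd{v}=\sum_\pi p_\pi\pi(\wb)$ over $\pi$ and maximizing over $S$ yields $v^\downarrow_1+\ldots+v^\downarrow_k\leq w^\downarrow_1+\ldots+w^\downarrow_k$, while the total sums agree since each permutation preserves $\sum_i w_i$. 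The reverse implication, which amounts to realizing an arbitrary majorized vector as such a convex combination, is the substantive content and is precisely what Theorem~\ref{thm:HLP} supplies for free.
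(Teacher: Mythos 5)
Your proof is correct, but there is nothing in the paper to compare it against: the paper states Theorem \ref{thm:Rado} as a classical result with a citation to Rado's 1952 paper and never proves it, using it instead as a black box in Sec.~\ref{sec:vtoh} and in the proof of its generalization, Theorem \ref{thm:Radogener}, in Appendix \ref{app:Rado2proof}. Your derivation is the standard route and is fully self-contained given the results quoted in Sec.~\ref{sec:major}: you identify $P_{\wb}$ with the image of the Birkhoff polytope under the linear map $M\mapsto M\wb$, using compactness, convexity and Krein--Milman together with Theorem \ref{thm:Birk} to write every doubly stochastic matrix as a convex combination of permutation matrices (and conversely), and then Theorem \ref{thm:HLP} converts membership in that image into the majorization $\bd{v}\prec\wb$. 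The elementary cross-check of the forward implication (bounding $\sum_{i\in S}(\pi(\wb))_i$ by $w^\downarrow_1+\ldots+w^\downarrow_k$ and averaging over the convex decomposition) is also sound. One point you pass over silently: the identification $\pi(\wb)=P_\pi\wb$ requires fixing a convention for how $P_\pi$ acts on coordinates, but since the set of permutation matrices is closed under transposition and you quantify over all $\pi$, this has no effect on the argument. Incidentally, your observation that $P_{\wb}=\{M\wb\,|\,M\ \mbox{doubly stochastic}\}$ is slightly stronger than Rado's statement itself and is exactly the kind of reformulation that makes the paper's own generalization (Theorem \ref{thm:Radogener}) natural, since there the convex hull of several permutation orbits is again analyzed through majorization by convex combinations of the generating vectors.
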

This will be a key tool for characterizing in Sec.~\ref{sec:vtoh} the domain of the universal functional in $\wb$-RDMFT which will take effectively the form of either a permutohedron or a generalization thereof. 

\subsection{Variational principles}\label{sec:variational}
Let $H$ be a Hermitian operator (e.g., a Hamiltonian) on a $D$-dimensional Hilbert space $\H$ with eigenvalues $E_1 \leq E_2 \leq \ldots \leq E_D$. The Ky Fan variational principle \cite{F49} then states that the sum of the $k$ smallest eigenvalues can be expressed in terms of a variational principle,
\begin{equation}
\sum_{j=1}^k E_j = \min_{V \leq \H, \mathrm{dim}(V)=k}\Tr[P_V H]\,.
\end{equation}
Here, $P_V$ denotes the orthogonal projection operator onto the $k$-dimensional subspace $V$ and the minimizer is given by the span of the eigenstates of the $k$-smallest eigenvalues, $E_1,\ldots,E_k$. Replacing the minimization by a maximization would yield the sum of the $k$ largest eigenvalues. Clearly, Ky Fan's principle includes in the form of $k=1$ the Rayleigh-Ritz variational principle. Moreover, it is worth noticing that the Ky Fan variational principle has been rediscovered and used in Ref.~\onlinecite{T79} to establish an equiensemble DFT.

A straightforward extension of Ky Fan's principle is the variational principle introduced and used in the formulation of $\wb$-ensemble density functional theory \cite{GOK88a,GOK88b,GOK88c}:
\begin{thm}[Gross, Oliviera, Kohn]\label{thm:GOK}
Let $H$ be a Hermitian operator (typically an $N$-fermion Hamiltonian) on a $D$-dimensional Hilbert space $\H$ with increasingly ordered eigenvalues $E_1 \leq E_2 \leq \ldots \leq E_D$ and $\wb \in \RR^D$ with decreasingly ordered entries. Then, by denoting the family of density operators with spectrum $\wb$ by $\Ew$ the following variational principle holds
\begin{equation}
E_{\wb} \equiv \sum_{j=1}^{D} w_j E_j = \min_{\G \in \Ew} \Tr[\G H ]\,.
\end{equation}
In particular, the minimizer $\G_{\wb}$ on the right-hand side is given by the state $\sum_{j=1}^{D} w_j \ket{\Psi_j}\!\bra{\Psi_j}$, where $\ket{\Psi_j}$ is the eigenstate of $H$ corresponding to the eigenvalue $E_j$.
\end{thm}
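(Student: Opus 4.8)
The plan is to reduce the operator-valued minimization to a linear optimization over a permutohedron, using the Schur--Horn theorem to constrain the diagonal of $\G$ in the eigenbasis of $H$. First, I would fix an orthonormal eigenbasis $\{\ket{\Psi_j}\}_{j=1}^D$ of $H$ with $H\ket{\Psi_j}=E_j\ket{\Psi_j}$, which exists because $H$ is Hermitian (a consistent choice can be made even under degeneracies). For an arbitrary $\G\in\Ew$, evaluating the trace in this basis and using $H\ket{\Psi_j}=E_j\ket{\Psi_j}$ gives
\begin{equation}
\Tr[\G H]=\sum_{j=1}^D E_j\,d_j\,,\qquad d_j\equiv\bra{\Psi_j}\G\ket{\Psi_j}\,.
\end{equation}
The vector $\bd{d}=(d_1,\ldots,d_D)$ consists of the diagonal entries of the positive semidefinite operator $\G$, whose spectrum is by hypothesis $\wb$. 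Applying the Schur--Horn theorem (Theorem~\ref{thm:Schur}) to $\G$ then yields the majorization $\bd{d}\prec\wb$.

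Second, I would translate this majorization into geometry via Rado's theorem (Theorem~\ref{thm:Rado}), which gives $\bd{d}\in P_{\wb}$. Writing $\bd{E}=(E_1,\ldots,E_D)$, the objective $\Tr[\G H]=\bd{E}\cdot\bd{d}$ is a linear functional of $\bd{d}$, and a linear functional over the polytope $P_{\wb}=\mbox{conv}(\{\pi(\wb)\}_{\pi\in\mathcal{S}^D})$ attains its minimum at a vertex. Hence
\begin{equation}
\min_{\G\in\Ew}\Tr[\G H]\;\geq\;\min_{\bd{d}\in P_{\wb}}\bd{E}\cdot\bd{d}\;=\;\min_{\pi\in\mathcal{S}^D}\sum_{j=1}^D E_j\,w_{\pi(j)}\,.
\end{equation}
The remaining combinatorial minimum is settled by the rearrangement inequality: since the $E_j$ are ordered increasingly and the $w_j$ decreasingly, the oppositely sorted pairing $\pi=\mathrm{id}$ is optimal, producing the lower bound $\sum_{j=1}^D w_j E_j=E_{\wb}$.

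Finally, I would check that this bound is saturated by the proposed minimizer $\G_{\wb}=\sum_{j=1}^D w_j\ket{\Psi_j}\!\bra{\Psi_j}$. This operator is manifestly an element of $\Ew$, being diagonal in the chosen basis with the prescribed spectrum $\wb$, and its diagonal vector is exactly $\bd{d}=\wb$, so that $\Tr[\G_{\wb}H]=\sum_{j=1}^D w_j E_j=E_{\wb}$. Combined with the lower bound, this shows the infimum equals $E_{\wb}$ and is attained at $\G_{\wb}$.

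I expect the difficulty to lie in conceptual bookkeeping rather than in any hard estimate. The two points to handle carefully are that Schur--Horn must be applied to $\G$, which is automatically positive semidefinite so that no shift of $H$ is required, and that the opposite ordering conventions of $\bd{E}$ and $\wb$ have to be tracked so that the rearrangement inequality indeed selects $\pi=\mathrm{id}$. Degeneracies in $H$ or repeated entries in $\wb$ merely enlarge the set of minimizers without changing the optimal value, so the explicit minimizer $\G_{\wb}$ remains valid in all cases.
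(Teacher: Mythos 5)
Your proof is correct. A preliminary remark: the paper itself does not prove Theorem~\ref{thm:GOK} at all --- it states it with a citation to the original Gross--Oliviera--Kohn papers --- so there is no in-paper proof to compare against; your argument must be judged on its own, and it holds up. It is, however, a genuinely different route from the classical one. You assemble the proof entirely from the paper's own convex-analysis toolkit: Schur--Horn (Theorem~\ref{thm:Schur}) applied to $\G$ in the eigenbasis of $H$ gives $\bd{d}\prec\wb$ for the diagonal vector $\bd{d}$, Rado's theorem (Theorem~\ref{thm:Rado}) converts this into $\bd{d}\in P_{\wb}$, linearity of $\bd{E}\cdot\bd{d}$ pushes the minimum to a generating point $\pi(\wb)$ of the permutohedron, and the rearrangement inequality selects $\pi=\mathrm{id}$; attainment by $\G_{\wb}$ is then immediate. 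The classical GOK argument instead writes any $\G\in\Ew$ as $\sum_j w_j\ket{\phi_j}\!\bra{\phi_j}$ with orthonormal $\ket{\phi_j}$, applies Abel summation $\sum_j w_j a_j=\sum_k(w_k-w_{k+1})\sum_{j\le k}a_j$ (with $w_{D+1}\equiv 0$, $a_j=\bra{\phi_j}H\ket{\phi_j}$), and invokes Ky Fan's principle --- stated in the paper immediately before Theorem~\ref{thm:GOK} --- for each partial sum, using $w_k-w_{k+1}\ge 0$. Both are rigorous in this finite-dimensional setting; yours has the appeal of staying inside the majorization/permutohedron framework the paper develops and later exploits (Secs.~\ref{sec:characterize} and \ref{sec:latticeDFT}), while the Ky Fan/Abel route is more elementary and avoids Schur--Horn entirely. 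One cosmetic point if you want your write-up airtight: the paper states Schur--Horn for real matrices, whereas $\G$ expressed in the eigenbasis of $H$ is in general complex Hermitian; the theorem holds verbatim in that generality (and positive semidefiniteness of $\G$ is automatic), so nothing breaks, but the extension deserves a sentence.
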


It is exactly this variational principle due to Gross, Oliviera and Kohn which together with the constrained search formalism allows one to develop an $\wb$-ensemble density or density-matrix functional theory, targeting the excited states.

\subsection{Legendre-Fenchel transformation}\label{sec:conjug}
Let $f$ be a function defined on the Euclidean space $\RR^d$ which takes values in $\RR \cup\{\infty\}$. Allowing $f$ to take infinite values ``has the advantage that technical nuisances about effective domains can be suppressed almost entirely''\cite{R97} in the context of convex analysis. Nonetheless, we assume $f$ to be proper, $f \not \equiv \infty$ (i.e., at least for some $x \in \RR^d$ it takes finite values). The conjugate $f^*$ (also called Legendre-Fenchel transform) of $f$ is defined as
\begin{equation}
f^*(y)=\sup_{x\in \mathbb{R}^d}\big[\langle y,x \rangle-f(x)\big]\,.
\end{equation}
Here, $\langle \cdot,\cdot \rangle$ denotes the standard inner product on $\RR^d$ which equips $\RR^d$ with a notion of geometry (``distances and angles''). The function $f^*$ is defined on $\RR^d$ in the sense that it is a map from $\RR^d$ to $\RR \cup\{\infty\}$ (like $f$).

One of the crucial theorems in convex analysis relates $f$ with its biconjugation:
\begin{thm}[biconjugation]\label{thm:biconjug}
Let $f: \RR^d \rightarrow \RR\cup\{\infty\}$ be proper. Then,
\begin{equation}
f^{**} = \mathrm{cl}\left(\mathrm{conv}(f)\right)\,,
\end{equation}
where $\mbox{cl}$ denotes the closure operation.
\end{thm}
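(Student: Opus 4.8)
The plan is to establish the two inequalities $f^{**} \le \mathrm{cl}(\mathrm{conv}(f))$ and $f^{**} \ge \mathrm{cl}(\mathrm{conv}(f))$ separately, after first recording two elementary but decisive facts about the conjugate. First I would note that, for \emph{any} proper $f$, the biconjugate $f^{**}(x)=\sup_y[\langle y,x\rangle - f^*(y)]$ is a pointwise supremum of the affine functions $x\mapsto \langle y,x\rangle - f^*(y)$; consequently $f^{**}$ is automatically convex and lower semicontinuous, i.e.\ a closed convex function. Second, I would extract from the definition of $f^*$ the Fenchel--Young inequality $f(x)+f^*(y)\ge\langle y,x\rangle$, valid for all $x,y$, which upon taking the supremum over $y$ yields $f^{**}\le f$ directly.

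The key reformulation is to identify $f^{**}$ with the supremum of all affine minorants of $f$. An affine function $x\mapsto \langle y,x\rangle-\beta$ lies below $f$ exactly when $\beta\ge\langle y,x\rangle - f(x)$ for every $x$, i.e.\ when $\beta\ge f^*(y)$; hence the pointwise supremum of all affine minorants of $f$ equals $\sup_y[\langle y,x\rangle - f^*(y)] = f^{**}(x)$. This observation already delivers one inequality: since $f^{**}$ is, by the first paragraph, a closed convex function satisfying $f^{**}\le f$, and since $\mathrm{cl}(\mathrm{conv}(f))$ is the largest closed convex function majorized by $f$ (any closed convex $h\le f$ satisfies $h\le\mathrm{conv}(f)$ and then $h=\mathrm{cl}(h)\le\mathrm{cl}(\mathrm{conv}(f))$), we obtain $f^{**}\le\mathrm{cl}(\mathrm{conv}(f))$.

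For the reverse inequality it suffices to show that $g\equiv\mathrm{cl}(\mathrm{conv}(f))$ is itself recovered as the supremum of its affine minorants, because every affine minorant of $g$ is (via $g\le f$) in particular an affine minorant of $f$ and thus bounded above by $f^{**}$; taking the supremum then gives $g\le f^{**}$. This is the genuinely geometric step, and it is where I expect the main obstacle to lie. I would argue through the epigraph: the function $g$ has a closed convex epigraph $\mathrm{epi}(g)\subset\RR^{d+1}$, and for any point $(x_0,\alpha)$ with $\alpha < g(x_0)$, hence $(x_0,\alpha)\notin\mathrm{epi}(g)$, the separating hyperplane theorem (a form of Hahn--Banach) produces a hyperplane strictly separating $(x_0,\alpha)$ from $\mathrm{epi}(g)$. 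Rewriting this hyperplane as the graph of an affine function then yields an affine minorant of $g$ whose value at $x_0$ exceeds $\alpha$; letting $\alpha\uparrow g(x_0)$ shows the affine minorants reproduce $g$ pointwise.

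The delicate point in the last step is that the separating hyperplane may be \emph{vertical}, i.e.\ of the form $\langle y,x\rangle=c$ with no component along the function-value axis, in which case it is not the graph of any affine function and produces no minorant. I would handle this in the standard way: because $f$ is assumed proper, $g$ is a proper closed convex function and hence possesses at least one nonvertical affine minorant (equivalently $f^*$ is proper); one may then add a small multiple of such a genuine minorant to tilt any vertical separating hyperplane into a nonvertical one while preserving the strict separation at $(x_0,\alpha)$. A clean implementation packages this as the lemma that a proper closed convex function equals the supremum of its affine minorants, which combined with the identification of the second paragraph closes the argument. Finally I would remark that the closure $\mathrm{cl}$ cannot be dropped, since $f^{**}$ is always lower semicontinuous and can therefore coincide with $\mathrm{conv}(f)$ only when the latter is already closed.
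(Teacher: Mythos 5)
Your proposal cannot be measured against an in-paper argument, because the paper gives none: Theorem \ref{thm:biconjug} is stated there as a classical result of convex analysis (the Fenchel--Moreau/biconjugation theorem), implicitly deferred to the textbook Ref.~\onlinecite{R97}. What you supply is exactly the standard textbook proof: (i) $f^{**}$ is closed and convex as a pointwise supremum of affine functions; (ii) Fenchel--Young gives $f^{**}\le f$; (iii) $f^{**}$ is the supremum of all affine minorants of $f$; (iv) $\mathrm{cl}(\mathrm{conv}(f))$ is the greatest closed convex minorant of $f$, whence $f^{**}\le \mathrm{cl}(\mathrm{conv}(f))$; (v) the reverse inequality by separating a point $(x_0,\alpha)$ with $\alpha<g(x_0)$ from the closed convex epigraph of $g\equiv\mathrm{cl}(\mathrm{conv}(f))$, tilting a vertical separating hyperplane into a nonvertical one when necessary. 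Steps (i)--(iv) are correct as written, and step (v) is the right idea.

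There is, however, one genuinely false assertion inside step (v): properness of $f$ does \emph{not} imply that $g$ (equivalently $f^*$) is proper. Take $f(x)=-x^2$ on $\RR$: it is proper, indeed finite everywhere, yet it admits no affine minorant, so $f^*\equiv+\infty$ and $f^{**}\equiv-\infty$; likewise every convex minorant of $f$ is identically $-\infty$ (by midpoint convexity, any convex $g\le -x^2$ with $g(x_0)>-\infty$ satisfies $g(x_0)\le -x_0^2-t^2$ for all $t$, a contradiction), so $\mathrm{conv}(f)\equiv-\infty$. In this degenerate situation your tilting device has no nonvertical minorant to borrow, and the lemma ``a proper closed convex function equals the supremum of its affine minorants'' does not apply to $g$; the theorem only survives by the convention that the closure of an improper convex function is the constant $-\infty$ (a value that, strictly speaking, already falls outside the paper's declared codomain $\RR\cup\{\infty\}$). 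The clean repair is a case split: if $f$ has no affine minorant, both sides are $\equiv-\infty$ and the identity holds trivially by convention; if $f$ has an affine minorant $a$, then $a\le\mathrm{conv}(f)$ and hence $a\le g$, so $g$ is proper and your separation argument goes through verbatim. In every application made in this paper the second case is automatic, since the relevant $f$ equals $+\infty$ outside a compact set and is continuous, hence bounded below, on it.
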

Let us briefly explain the definition of the closure operator: Given the epigraph of a function $f$ (recall Eq.~\eqref{epi}), one can extract the function $f$ from it. The function $\mbox{cl}(f)$ is defined as the function corresponding to the closure of the subset $\mbox{epi}(f)\subset \RR^{d+1}$. This in particular means that the closure operation only affects the function at those points where it is not continuous.

\subsection{Relaxing minimizations to convex minimizations}\label{sec:relax}
One of the main achievements of our work will be to circumvent the highly involved $\wb$-ensemble $N$-representability constraints (as introduced in Sec.~\ref{sec:introwrdmft}) by relaxing the universal functional to a larger domain. To explain the underlying procedure, let us consider a
function $f: S \rightarrow \RR$ defined on a subset $S\subset \RR^d$ for which we would like to determine its infimum,
\begin{equation}\label{inf1}
  f_0\equiv \inf_{x \in S} f(x)\,.
\end{equation}
Whenever $S$ is compact (i.e., it is a bounded and closed subset of $\RR^d$) and $f$ continuous the infimum is attained. In that case we can replace ``inf'' in Eq.~\eqref{inf1}  by ``min''.

We could extend the minimization in Eq.~\eqref{inf1} to the whole Euclidean space $\RR^d$ by extending the function $f$ according to
\begin{equation}\label{fext}
  f_{\mathrm{ext}}(x) \equiv \Big\{\begin{array}{lr}
        f(x), & \text{for } x \in S\\
        \infty, & \text{for } x \not \in S       \end{array}\,,
\end{equation}
without affecting the infimum,
\begin{equation}\label{inf2}
  f_0 = \inf_{x \in \RR^d} f_{\mathrm{ext}}(x)\,.
\end{equation}
Furthermore, replacing $f_{\mathrm{ext}}$ by its lower convex envelop $\mbox{conv}(f_{\mathrm{ext}})$, i.e., the largest convex function $g$ on $\RR^d$ with $g \leq f_{\mathrm{ext}}$ (recall Sec.~\ref{sec:conv}) would not change the infimum,
\begin{equation}\label{inf3}
  f_0= \inf_{x \in \RR^d} \big[\mbox{conv} (f_{\mathrm{ext}})(x)\big]\,.
\end{equation}
The conclusion of those considerations is that we can restrict the class of minimization problems of the type \eqref{inf1} to minimization problems involving only convex functions on convex domains, i.e., for any $f$ on some $S\subset \RR^d$ we relax the problem to $\xbar{f}\equiv \mbox{conv}(f)$ on $\xbar{S}\equiv \mbox{conv}(S)$. This so-called \emph{exact convex relaxation} \cite{R97} has two advantages: The minima of $\xbar{f}$ are always global ones and the description of convex sets $\xbar{S}$ is typically easier than the one of non-convex sets. To be more specific, in the context of $\wb$-ensemble RDMFT, this will relax the highly involved $\wb$-ensemble $N$-representability constraints to \emph{easy-to-calculate} Pauli-like constraints.

\subsection{Characterization of compact convex sets: Duality correspondence}\label{sec:dual}
Our work involves the maximization (or minimization) of linear functions $L_k(\cdot)\equiv \langle \cdot ,k\rangle$ over a (convex compact) set $S\subset \RR^d$. It is instructive to understand how the maximum would change as function of $k \in \RR^d$. Formally, this means to study the so-called support function
\begin{equation}\label{supp}
\chi_S^\ast(k) \equiv \sup \big(\{\langle x ,k\rangle|x \in S\}\big)\,,
\end{equation}
where `sup' stands for supremum.
Clearly, this covers also the case of minimizations:
\begin{equation}\label{suppinf}
\inf \big(\{\langle x ,k\rangle|x \in S\}\big) = - \chi_S^\ast(-k)\,.
\end{equation}
The reason for denoting the support function by $\chi_S^\ast$ is that it is the conjugate of the indicator function which is defined as
\begin{equation}\label{ind}
\chi_S(x)   \equiv \Big\{\begin{array}{lr}
        0, & \text{for } x \in S\\
        \infty, & x \not \in S       \end{array}\,.
\end{equation}
Indeed, calculating the Legendre-Fenchel transform/conjugate of $\chi_S$ leads to
\begin{eqnarray}
\chi_S^\ast(k) &\equiv& \sup_{x \in \RR^d}\left[\langle x, k \rangle - \chi_S(x)\right] \nonumber \\
&=& \sup_{x \in S} \langle x, k \rangle \,.
\end{eqnarray}

For convex compact sets even a stronger relation holds (see, e.g., textbook \onlinecite{R97}):
\begin{thm}[duality correspondence]\label{thm:dual}
For any convex, compact set $S\subset \RR^d$, the support function $\chi_S$ and the indicator function $\chi_S^\ast$ are conjugate to each other
\begin{equation}
\left(\chi_S^\ast\right)^\ast= \left(\chi_S\right)^{\ast\ast}=\chi_S\,.
\end{equation}
This means that not only the support function is the conjugate of the indicator function but also the indicator function follows as the conjugate of the support function, establishing a one-to-one correspondence between them.
\end{thm}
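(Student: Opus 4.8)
The plan is to reduce the entire statement to the biconjugation theorem (Theorem~\ref{thm:biconjug}), after which only two elementary properties of the indicator function remain to be checked. The first equality $(\chi_S^\ast)^\ast = (\chi_S)^{\ast\ast}$ is purely notational: the symbol $\chi_S^\ast$ has already been identified in Eq.~\eqref{supp} as the support function, i.e.~as the conjugate of the indicator function $\chi_S$, so applying the Legendre-Fenchel transform once more simply produces the biconjugate $(\chi_S)^{\ast\ast}$. All of the content therefore sits in the second equality $(\chi_S)^{\ast\ast} = \chi_S$.

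To establish it, I would first invoke Theorem~\ref{thm:biconjug}, which gives $(\chi_S)^{\ast\ast} = \mathrm{cl}\big(\mathrm{conv}(\chi_S)\big)$ provided $\chi_S$ is proper; properness holds because $S$ is nonempty, so $\chi_S$ takes the finite value $0$ somewhere. It then suffices to show separately that $\mathrm{conv}(\chi_S) = \chi_S$ and that $\mathrm{cl}(\chi_S) = \chi_S$, i.e.~that $\chi_S$ is already convex and closed.

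For convexity I would verify the defining inequality~\eqref{fconvex} directly: if both arguments lie in $S$, convexity of $S$ keeps the convex combination inside $S$ and both sides vanish; if at least one argument lies outside $S$, the right-hand side equals $+\infty$ and the inequality is vacuous. Hence $\chi_S$ is convex and $\mathrm{conv}(\chi_S) = \chi_S$. For closedness I would argue at the level of epigraphs: from Eq.~\eqref{epi} one reads off $\mathrm{epi}(\chi_S) = S \times [0,\infty)$, a product of two closed sets and hence closed in $\RR^{d+1}$; since a function is closed precisely when its epigraph is closed, this gives $\mathrm{cl}(\chi_S) = \chi_S$. Chaining the three facts yields $(\chi_S)^{\ast\ast} = \mathrm{cl}\big(\mathrm{conv}(\chi_S)\big) = \mathrm{cl}(\chi_S) = \chi_S$, which completes the proof.

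I do not expect a genuine obstacle here, since Theorem~\ref{thm:biconjug} does the analytic heavy lifting and the remaining verifications are routine; the only point requiring care is the bookkeeping about hypotheses. Compactness of $S$ is in fact stronger than needed for the identity itself---closedness alone, together with convexity and nonemptiness, already suffices---but it is the natural setting here because boundedness of $S$ is exactly what guarantees that the intermediate object $\chi_S^\ast$ is finite everywhere, so that the maximizations over $S$ considered throughout this section are genuinely attained rather than merely approached.
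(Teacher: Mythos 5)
Your proof is correct. There is, however, no in-paper argument to compare it against: the paper states Theorem~\ref{thm:dual} without proof, referring the reader to the textbook Ref.~\onlinecite{R97}; the only computation actually carried out in Sec.~\ref{sec:dual} is the one-line evaluation $\chi_S^\ast(k)=\sup_{x\in S}\langle x,k\rangle$, i.e.\ the first conjugation. Your derivation supplies the missing second half in the most natural self-contained way, namely by feeding $\chi_S$ into the paper's own biconjugation theorem (Theorem~\ref{thm:biconjug}) and checking that the right-hand side collapses: properness (from $S\neq\emptyset$), convexity of $\chi_S$ against Eq.~\eqref{fconvex} (vacuous when an argument lies outside $S$, trivial otherwise), and closedness via $\mathrm{epi}(\chi_S)=S\times[0,\infty)$, which is closed because $S$ is closed. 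All three verifications are sound, and this is essentially the textbook argument the citation points to. Two small points of bookkeeping are worth making explicit. First, as literally stated the theorem allows $S=\emptyset$, for which $\chi_S\equiv\infty$ is improper and Theorem~\ref{thm:biconjug} does not apply; your properness check silently assumes $S\neq\emptyset$, and that assumption should be stated. Second, your closing observation is accurate and clarifies the role of the hypotheses: closedness, convexity and nonemptiness already give $(\chi_S)^{\ast\ast}=\chi_S$, while compactness (boundedness) is what makes the support function finite on all of $\RR^d$ and the suprema in Eq.~\eqref{supp} attained, which is how the paper uses it later when characterizing $\ebw$. (Incidentally, the theorem's wording swaps the two names---$\chi_S$ is the indicator and $\chi_S^\ast$ the support function; your proof reads the definitions the right way.)
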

It will be exactly this duality correspondence which allows us in combination with some additional tools to characterize below the domain of the universal excited state functional. Remarkably, this will then lead to a hierarchical generalization of Pauli's famous exclusion principle.

\section{Introducing $\wb$-ensemble reduced density-matrix functional theory}\label{sec:introwrdmft}
\subsection{The fundamental motivation of RDMFT}\label{sec:RDMFTmotiv}
To first motivate RDMFT from a more general point of view, we recall that each scientific field restricts to quantum systems which are typically characterized by the same (possibly effective) pair interaction $W$. Just to name the most important ones in non-relativistic quantum mechanics, these are the Coulomb interaction in quantum chemistry, the Hubbard onsite interaction in discretized  models in solid state physics and hard-core interaction in quantum optics, particularly in the field of ultracold atoms. Consequently, the family of relevant Hamiltonians \emph{in each} of those fields is parameterized by the particle number  $N$, the one-particle Hamiltonian $h$ (i.e., the kinetic energy/hopping operator and the external potential) and the coupling constant. Restricting to systems of fixed particle number $N$ and absorbing the coupling constant into the one-particle Hamiltonian leads to Hamiltonians of the form
\begin{equation}\label{ham}
H(h) = h +  W\,.
\end{equation}
Here, the fixed interaction $W$ could in principle be even any $p$-particle interaction with $2\leq p \leq N$. We restrict ourselves to finite $N$-fermion Hilbert spaces $\mathcal{H}_N \equiv \wedge^N\mathcal{H}_1$, where $\mathcal{H}_1$ is the underlying finite $d$-dimensional one-particle Hilbert space. In practice, $\mathcal{H}_1$ follows by choosing a truncated finite basis set.

Since the class \eqref{ham} of Hamiltonians is fully parameterized by $h$, various properties of any eigenstate of $H(h)$ depend on $h$ only.
This is nothing else than a trivial consequence of a scientific community's decision to restrict the electron correlation problem to a specific class of Hamiltonians of the form \eqref{ham}. Contrary to the common perception in ground state DFT or RDMFT, this is therefore not an astonishing implication of the Hohenberg-Kohn \cite{HK64}  or Gilbert theorem \cite{G75}. Moreover, it is not surprising that the ground state problem (for a fixed interaction $W$) will involve the one-particle reduced density matrix (1RDM) $\g$ only since the latter is the conjugate variable of the one-particle Hamiltonian $h$.
The corresponding ground state theory is called reduced density matrix functional theory (RDMFT). From a historic point of view, it is based on Gilbert's generalization \cite{G75} of the famous Hohenberg-Kohn theorem \cite{HK64} to non-local external ``potentials''.

\subsection{RDMFT for ground states}\label{sec:RDMFTgs}
We briefly recall ground state RDMFT. For this, we introduce the set $\mathcal{P}^N$ of {\em pure} $N$-fermion states $\G\equiv \ket{\Psi}\!\bra{\Psi}$. By referring to the Ritz variational principle the ground state energy $E(h)$ of $H(h)$ can be expressed as  \cite{LE79,Li83},
\begin{eqnarray}\label{Levy}
  E(h) &=& \min_{\G \in \mathcal{P}^N} \mbox{Tr}_N[(h+W)\G] \nonumber \\
 &=&  \min_{\gamma\in \mathcal{P}^1_N}\Big[\mbox{Tr}_1[h\gamma]+\F(\g)\Big]\,,
\end{eqnarray}
where
\begin{equation}
\mathcal{P}^1_N\equiv N \mbox{Tr}_{N-1}[\mathcal{P}^N]\,.
\end{equation}
The underlying \emph{pure} functional
\begin{eqnarray}\label{Fp}
\F(\g) \equiv \min_{\mathcal{P}^N\ni \G \mapsto \gamma}\mbox{Tr}_N[W \G]
\end{eqnarray}
is universal in the sense that it depends only on the fixed interaction $W$ but not on the one-particle Hamiltonian $h$. In \eqref{Fp} the expression $\G \mapsto \gamma$ means to minimize only with respect to those $N$-fermion pure states $\G \in \mathcal{P}^N$ which map to the given 1RDM $\g$.
This pure RDMFT due to Levy \eqref{Levy} based on \eqref{Fp} is, however, not practical. The reason for this is that the domain $\mathcal{P}^1_N$ of pure $N$-representable 1RDMs $\g$ is too involved. Only recently, a formal solution of the one-body pure $N$-representability problem has been found \cite{KL06,AK08}. Yet, the corresponding constraints defining $\mathcal{P}^1_N$  (also called generalized Pauli constraints) could be calculated so far only for systems of up to five electrons and eleven spin-orbitals \cite{BD72,KL06,AK08,S18atoms}. Since also the number of those constraints increases drastically as function of the particle number $N$ and basis set size $d$ one may need to rely in quantum chemical applications on approximate description of $\mathcal{P}^1_N$ \cite{MT17}.

A much more promising and simpler idea due to Valone \cite{V80} is to relax the constrained search in \eqref{Levy},\eqref{Fp}
to the convex set of \emph{all} ensemble states
\begin{equation}\label{EN}
\mathcal{E}^N \equiv \big\{\Gamma:\mathcal{H}_N \rightarrow \mathcal{H}_N\,| \,\mbox{linear}, \Gamma \geq 0, \mbox{Tr}[\Gamma]=1\big\}\,.
\end{equation}
The reader shall recall that $\Gamma \geq 0$ means that $\Gamma$ is positive semidefinite, i.e., $\bra{\Psi}\Gamma \ket{\Psi} \geq 0$ for all $\ket{\Psi} \in \wedge^N \mathcal{H}_1$. Moreover, in case of finite dimensional Hilbert spaces, hermiticity follows from positive semidefiniteness.
In analogy to \eqref{Levy} and \eqref{Fp}, this then leads to a universal \emph{ensemble} functional
\begin{eqnarray}\label{Fe}
\xbar{\F}(\g)  &\equiv&  \min_{\E \ni \G \mapsto \g}\mbox{Tr}_N[W \G ]\,,
\end{eqnarray}
which is defined on the larger domain
\begin{equation}\label{E1}
\e \equiv  N \mbox{Tr}_{N-1}[\E]
\end{equation}
of ensemble $N$-representable 1RDMs. The set $\e$ is convex (in contrast to $\mathcal{P}^1_N$) which follows from the linearity of the partial trace map $\mbox{Tr}_{N-1}[\cdot]$ and the convexity of $\mathcal{E}^N$. Since the set $\e$ is described by the simple Pauli exclusion principle constraints \cite{K60,C63}, restricting various eigenvalues $\lambda_i$ of $\g$ according to
\begin{equation}\label{PC}
0\leq \lambda_i \leq 1,
\end{equation}
Valone's work \cite{V80} from 1980 has marked the starting point of a \emph{practically feasible} ground state RDMFT.

Let us finally comment on a crucial relation between pure and ensemble ground state RDMFT. First,
the compact convex set $\mathcal{E}^1_N$ is equal to the convex hull of $\mathcal{P}^1_N$ (recall the definition of the convex hull in Sec.~\ref{sec:conv}),
\begin{equation}\label{E1P1}
\mathcal{E}^1_N = \mathrm{conv}(\mathcal{P}^1_N)\,.
\end{equation}
This follows directly from $\mathcal{E}^N = \mathrm{conv}(\mathcal{P}^N)$ (see, e.g., Theorem 2.6 in the book \onlinecite{CY00}) in combination with the linearity of the partial trace $\mbox{Tr}_{N-1}[\cdot]$.
Moreover, the universal ground state functionals $\mathcal{F}$ and $\xbar{\mathcal{F}}$ are related to the ground state energy $E(h)$ through the Legendre-Fenchel transformation discussed in Sec.~\ref{sec:conjug} (see also Ref.~\onlinecite{S18} for further details). Consequently, the concept  of biconjugation (see Theorem \ref{thm:biconjug}) implies \cite{S18}
\begin{thm}[Schilling]\label{thm:CS} The pure ($\mathcal{F}$) and ensemble functional (\,$\xbar{\F}$) in ground state RDMFT are related according to
\begin{equation}
\xbar{\F}=\mathrm{conv}(\mathcal{F})\,,
\end{equation}
i.e., \,$\xbar{\F}$ is the lower convex envelope (see Sec.~\ref{sec:conv}) of $\mathcal{F}$.
\end{thm}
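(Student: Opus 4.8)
The plan is to prove the identity $\xbar{\F}=\mathrm{conv}(\F)$ by a direct two-sided inequality, exploiting two linearities — that of the partial-trace map $\G \mapsto \g \equiv N\Tr_{N-1}[\G]$ and that of the energy functional $\G \mapsto \Tr_N[W\G]$ — together with the fact that $\E=\mathrm{conv}(\PN)$. First I would unfold the definition \eqref{convf} of the lower convex envelope. Since $\mathrm{dom}(\F)=\p$ and $\mathrm{conv}(\p)=\e$ by \eqref{E1P1}, the envelope $\mathrm{conv}(\F)$ is defined on precisely the domain $\e$ of $\xbar{\F}$, and reads
\begin{equation}
\mathrm{conv}(\F)(\g)=\inf\Big\{\textstyle\sum_i q_i\,\F(\g_i)\,\Big|\,\sum_i q_i\g_i=\g,\ \g_i\in\p\Big\}\,.
\end{equation}

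For the inequality $\xbar{\F}\geq\mathrm{conv}(\F)$, I would take the ensemble minimizer $\G\in\E$ attaining $\xbar{\F}(\g)$ and decompose it, using $\E=\mathrm{conv}(\PN)$, as $\G=\sum_i p_i\G_i$ with $\G_i\in\PN$. Linearity of the partial trace then gives the convex decomposition $\g=\sum_i p_i\g_i$ into pure 1RDMs $\g_i\in\p$, while linearity of the energy yields $\Tr_N[W\G]=\sum_i p_i\Tr_N[W\G_i]\geq\sum_i p_i\F(\g_i)$, where the last step only uses $\F(\g_i)\leq\Tr_N[W\G_i]$ (each $\G_i$ maps to $\g_i$ but need not be the energy minimizer). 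The right-hand side is a competitor in the infimum above, hence $\xbar{\F}(\g)\geq\mathrm{conv}(\F)(\g)$.

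For the reverse inequality, I would start from an arbitrary convex decomposition $\g=\sum_i q_i\g_i$ with $\g_i\in\p$, select for each $i$ a pure minimizer $\G_i\in\PN$ with $\G_i\mapsto\g_i$ and $\Tr_N[W\G_i]=\F(\g_i)$, and form $\G\equiv\sum_i q_i\G_i$. Again by $\E=\mathrm{conv}(\PN)$ this lies in $\E$, and by linearity of the partial trace it maps to $\g$; it is therefore a competitor for $\xbar{\F}(\g)$, giving $\xbar{\F}(\g)\leq\Tr_N[W\G]=\sum_i q_i\F(\g_i)$. Taking the infimum over all such decompositions yields $\xbar{\F}(\g)\leq\mathrm{conv}(\F)(\g)$, and combining the two bounds completes the proof.

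The main point requiring care is the bookkeeping across the two levels of decomposition ($N$-fermion operators versus 1RDMs), and in particular keeping straight that the two directions use the definition of $\F$ asymmetrically: the forward direction needs only the inequality $\F(\g_i)\leq\Tr_N[W\G_i]$, whereas the reverse direction crucially relies on the minimum in \eqref{Fp} being \emph{attained}. The latter is unproblematic here because $\mathcal{H}_1$ is finite-dimensional, so $\PN$ and its preimages under the partial trace are compact and $W$ is bounded; consequently all infima are in fact minima, no closure operation (contrast Theorem \ref{thm:biconjug}) is needed, and the identity $\xbar{\F}=\mathrm{conv}(\F)$ holds exactly.
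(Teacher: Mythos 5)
Your proof is correct, but it takes a genuinely different route from the paper. The paper does not prove Theorem~\ref{thm:CS} by a decomposition argument at all: it observes that $\F$ and $\xbar{\F}$ share the same Legendre--Fenchel conjugate, fixed by the ground-state energy $E(h)$ of Eq.~\eqref{Levy}, and then invokes biconjugation (Theorem~\ref{thm:biconjug}) together with convexity and closedness of $\xbar{\F}$ to conclude $\xbar{\F}=\F^{**}=\mathrm{cl}\left(\mathrm{conv}(\F)\right)=\mathrm{conv}(\F)$, deferring the details to Ref.~\onlinecite{S18}. Your two-sided inequality --- built on $\E=\mathrm{conv}(\PN)$, linearity of $\G\mapsto N\Tr_{N-1}[\G]$ and of $\G\mapsto\Tr_N[W\G]$, and attainment of the minimum in \eqref{Fp} --- is instead essentially the strategy the paper uses later for the \emph{general} constrained-search representation \eqref{Fbwmajor} of the relaxed functional $\Fbw$ in Sec.~\ref{sec:relaxw}; specializing that argument to $\wb=\wb_0$, where $\mathrm{conv}\big(\mathcal{E}^N(\wb_0)\big)=\mathrm{conv}(\PN)=\E$, recovers exactly Theorem~\ref{thm:CS}. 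What your route buys: it is elementary and self-contained, it never needs to establish lower semicontinuity of $\xbar{\F}$ or worry about the closure operation appearing in biconjugation, and it isolates the single analytic ingredient (attainment of minima, guaranteed by finite dimensionality and compactness of $\PN$ and its fibers) on which exactness rests. What the paper's route buys: by passing through the conjugate $E(h)$ it embeds the theorem in the duality framework of Secs.~\ref{sec:conjug} and \ref{sec:dual} that the paper subsequently exploits to characterize $\ebw$, and it generalizes to situations where explicit pure-state decompositions of minimizers are not available. Both arguments are rigorous; yours makes the finite-dimensional mechanism transparent, while the paper's exhibits the structural reason the identity holds.
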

According to Theorem \ref{thm:CS}, Valone's approach is nothing else than a convex relaxation (recall Sec.~\ref{sec:relax}) of Levy's pure RDMFT. This observation has far-reaching consequences since it will allow us to propose in the next section a practically feasible RDMFT for excited states. For this, we employ the GOK variational principle, Theorem \ref{thm:GOK}, and then turn this into a viable $\wb$-ensemble RDMFT by applying convex relaxation.

\subsection{$\wb$-ensemble RDMFT}\label{sec:wRDMFT}
In analogy to ensemble DFT (GOK-DFT) for excited states by Gross, Oliviera and Kohn \cite{GOK88a,GOK88b}
we apply the variational principle described by Theorem \ref{thm:GOK} to the Hamiltonian $H(h)$ \eqref{ham}. For this, we denote
the increasingly ordered eigenvalues of $H(h)$ by $E_1(h)\leq E_2(h)\leq \ldots \leq E_D(h)$ and recall that $\wb$ is a tuple of decreasingly ordered weights $w_1 \geq w_2 \geq \ldots \geq w_D\geq 0$, $\sum_{j=1}^D w_j=1$, where $D \equiv \mbox{dim}(\HN)=\binom{d}{N}$.

We first observe that having access to the function
\begin{eqnarray}\label{Ew}
E_{\wb}(h) &\equiv& \sum_{j=1}^{D} w_j E_j(h) \nonumber \\
&=& E_1(h)+ \sum_{j=2}^{D} w_j \big[E_j(h) - E_1(h)\big]
\end{eqnarray}
would obviously allow us to determine all the eigenenergies and the gaps between them:
\begin{equation}\label{Ewext}
\frac{\partial E_{\wb}(h)}{\partial w_j}=E_j(h)-E_1(h)\quad,\,j=2,\ldots,D\,.
\end{equation}
The partial derivative in Eq.~\eqref{Ewext} is defined in this context with respect to the second line of \eqref{Ew}, i.e., after having used the normalization of $\wb$ to unity to replace $w_1$.
In particular, these derivatives are independent of $\wb$ since $E_{\wb}$ depends only linearly on the weights. Equivalently, one could therefore extract the energy gaps $E_j(h)-E_1(h)$ by considering an appropriate gradient triangle. The individual excited state energies $E_j(h)$ would follow immediately by using $E_1(h)= E_{(1,0,\ldots)}(h)$.

The GOK variational principle, Theorem \ref{thm:GOK}, together with the constrained search formalism leads directly to some (rather impractical) form of $\wb$-ensemble RDMFT. To work this out, we first define the underlying spaces of density operators.
These are on the $N$-fermion level
\begin{equation}\label{ENw}
\Ew \equiv \big\{\Gamma \in \mathcal{E}^N|\, \mbox{spec}^\downarrow \left(\Gamma\right)=\wb\big\}\,,
\end{equation}
the set of all $N$-fermion density operators with spectrum $\wb$.
Furthermore, tracing out all except one fermion then leads to
\begin{equation}\label{E1w}
\ew \equiv N \mbox{Tr}_{N-1}[\Ew]\,.
\end{equation}
We refer to $\gamma \in \ew$ as being \emph{$\wb$-ensemble $N$-representable}. For the following considerations, we would like to emphasize as well that both sets  $\Ew$ and $\ew$ are not convex: A convex sum of two density operators with spectrum $\wb$ has typically a spectrum different than $\wb$. Due to the linearity of the partial trace map $N\Tr_{N-1}[\cdot]$ the non-convexity of $\ew$ is therefore not surprising. More rigorously, this can be deduced from Klyachko's work on the mixed single-body quantum marginal problem \cite{KL04,KL09,AK08}.

To define a universal functional for targeting energies of excited states, we express $E_{\wb}(h)$ according to Theorem \ref{thm:GOK} as a variational principle. By referring to the class of Hamiltonians of the form \eqref{ham} (i.e., with a fixed interaction $W$, fixed particle number $N$ and a variable one-particle Hamiltonian $h$) we directly find
\begin{eqnarray}\label{Levy1}
  E_{\wb}(h) &=& \min_{\Gamma\in \Ew} \mbox{Tr}_N[(h+W)\Gamma] \nonumber \\
 &=&  \min_{\gamma\in \ew}\Big[\mbox{Tr}_1[h\gamma]+\Fw(\gamma)\Big]\,,
\end{eqnarray}
where
\begin{equation}\label{Fw}
\Fw(\gamma) \equiv \min_{\Ew\ni\Gamma\mapsto \gamma}\mbox{Tr}_N[W\Gamma]\,.
\end{equation}
Just to reiterate, our approach is not based on a generalization of the Hohenberg-Kohn or Gilbert theorem to $\wb$-ensembles but exploits the constrained search formalism \eqref{Levy1} instead. As a consequence, we circumvent the corresponding $v$-representability problem by extending the functional's domain from 1RDMs of $\wb$-minimizers to all $\wb$-ensemble $N$-representable 1RDMs, i.e., to the set $\ew$. Furthermore, because of the latter, we can ignore any subtleties due to possible degeneracies within the energy spectrum.

The domain of the $\wb$-ensemble functional $\mathcal{F}_{\wb}(\gamma)$ defined in Eq.~\eqref{Fw} contains all $\wb$-ensemble $N$-representable 1RDMs $\gamma\in \ew$. As we will explain in the next section, it would be a hopeless endeavor to determine an efficient description of $\ew$. Instead, we need to apply a convex relaxation scheme to circumvent those conceptual difficulties,
eventually leading to a  \emph{viable} functional theory. Before that, we conclude the present section by stating
\begin{thm}\label{thm:FwtoFpe}
The relation of $\Fw$ to the pure $(\F)$ and ensemble $(\,\xbar{\F})$ ground state functional (recall \eqref{Fp},\eqref{Fe}) is given by
\begin{eqnarray}
\F(\g) &= & \mathcal{F}_{\!\wb_0}(\g)\,, \,\,\quad \quad \forall \g \in\mathcal{E}^1_N(\wb_0)\equiv \mathcal{P}^1_N \label{Fprel}\\
\xbar{\F}(\g) &=& \min_{\wb} \Fw(\g)\,, \quad \forall \g \in \mathcal{E}^1_N\,, \label{Ferel}
\end{eqnarray}
where
\begin{equation}
\wb_0\equiv (1,0,0,\ldots)
\end{equation}
and we extended here $\Fw$ to $\e$ by setting $\Fw(\g) = \infty$ for all $\g \not \in \ew$.
\end{thm}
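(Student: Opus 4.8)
<br />

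The plan is to prove the two identities separately, in each case unpacking the definitions of the functionals and exploiting the structure of the underlying sets of density operators. For the first identity \eqref{Fprel}, I would start from the observation that the special weight vector $\wb_0=(1,0,0,\ldots)$ forces the spectrum of any $\Gamma\in\mathcal{E}^N(\wb_0)$ to be $(1,0,\ldots,0)$, which means $\Gamma$ is a rank-one projector, i.e.\ a pure state $\Gamma=\ket{\Psi}\!\bra{\Psi}$. Hence $\mathcal{E}^N(\wb_0)=\mathcal{P}^N$, and consequently $\mathcal{E}^1_N(\wb_0)=N\Tr_{N-1}[\mathcal{E}^N(\wb_0)]=N\Tr_{N-1}[\mathcal{P}^N]=\mathcal{P}^1_N$. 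With these set identities in hand, comparing the defining minimizations \eqref{Fw} and \eqref{Fp} over the same constrained set $\{\Gamma\in\mathcal{P}^N:\Gamma\mapsto\gamma\}$ shows that $\mathcal{F}_{\wb_0}$ and $\mathcal{F}$ minimize the identical objective $\Tr_N[W\Gamma]$ over the identical feasible set, so they coincide on $\mathcal{P}^1_N$.

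For the second identity \eqref{Ferel}, the key is to recognize that every ensemble state $\Gamma\in\mathcal{E}^N$ has \emph{some} spectrum $\wb$ (decreasingly ordered, nonnegative, summing to one), so the full ensemble set decomposes as a disjoint union over admissible weight vectors, $\mathcal{E}^N=\bigcup_{\wb}\mathcal{E}^N(\wb)$. The plan is to rewrite the ensemble ground state functional \eqref{Fe} by splitting its single minimization over $\Gamma\in\mathcal{E}^N$ mapping to $\gamma$ into an outer minimization over the spectrum $\wb$ and an inner minimization over states with that fixed spectrum:
\begin{equation}
\xbar{\F}(\g)=\min_{\mathcal{E}^N\ni\G\mapsto\g}\Tr_N[W\G]=\min_{\wb}\;\min_{\mathcal{E}^N(\wb)\ni\G\mapsto\g}\Tr_N[W\G]=\min_{\wb}\Fw(\g)\,.
\end{equation}
The inner minimization is exactly $\Fw(\gamma)$ by definition \eqref{Fw}, so the nested minimization collapses to $\min_{\wb}\Fw(\gamma)$. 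Here the extension convention $\Fw(\gamma)=\infty$ for $\gamma\notin\mathcal{E}^1_N(\wb)$ is what makes this clean: for a fixed $\gamma\in\mathcal{E}^1_N$ and a given $\wb$, if no state of spectrum $\wb$ maps to $\gamma$ then that value of $\wb$ simply does not contribute to the outer minimum, consistent with $\Fw(\gamma)=\infty$.

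The main obstacle I anticipate lies in justifying the interchange/splitting of the minimization in the second part with full rigor, namely that the outer minimum over $\wb$ is actually attained and that the union $\mathcal{E}^N=\bigcup_{\wb}\mathcal{E}^N(\wb)$ is exhaustive over precisely the admissible weight simplex. One must verify that for any $\gamma\in\mathcal{E}^1_N$ there exists at least one admissible $\wb$ with a preimage mapping to $\gamma$ (guaranteeing the right-hand side is finite), and that the set of admissible spectra is compact so that the outer $\min_{\wb}$ is genuinely a minimum rather than an infimum. A secondary subtlety is bookkeeping the ordering convention on $\wb$: since spectra are defined as decreasingly ordered, the outer minimization ranges only over ordered tuples, but because $\mathcal{F}_{\wb}$ is defined via the spectrum this introduces no ambiguity. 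I would handle the attainment issues by appealing to compactness of $\mathcal{E}^N$ together with continuity of $\Tr_N[W\,\cdot\,]$, reducing everything to the elementary fact that every density operator possesses a well-defined spectrum lying in the probability simplex.
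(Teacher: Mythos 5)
Your proposal is correct and follows essentially the same route as the paper: identity \eqref{Fprel} via the set equality $\mathcal{E}^N(\wb_0)=\mathcal{P}^N$, and identity \eqref{Ferel} by splitting the constrained search $\min_{\E\ni\G\mapsto\g}$ into an outer minimization over spectra $\wb$ and an inner minimization over $\Ew\ni\G\mapsto\g$, with the $\infty$ convention handling empty feasible sets exactly as the paper does. Your additional remarks on attainment and compactness are sound but go beyond what the paper's proof spells out.
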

\begin{proof}
Equation \eqref{Fprel} follows directly from the definition of $\F$ and  $\mathcal{F}_{\!\wb_0}$, and by recalling that
\begin{equation}
\mathcal{P}^N = \mathcal{E}^N(\wb_0)\,.
\end{equation}
To prove \eqref{Ferel}, let $\g \in \mathcal{E}^1_N$. According to the constrained search formalism we have
\begin{eqnarray}
\xbar{\F}(\g) &\equiv& \min_{\E \ni\G \mapsto \g}\Tr[W \G] \nonumber \\
&=& \min_{\wb}\min_{\Ew \ni \G \mapsto \g}\Tr[W \G] \nonumber \\
&\equiv& \min_{\wb} \Fw(\g)\,,
\end{eqnarray}
where we set  $\min_{\Ew \ni \G \mapsto \g}\Tr[W \G] \equiv \infty$  whenever \mbox{$\{ \G \in \Ew \,|\, \G \mapsto \g \}= \emptyset$}.
\end{proof}

\section{Relaxation of $\wb$-ensemble RDMFT}\label{sec:relaxw}
The RDMFT as introduced in the previous section based on Eqs.~\eqref{Levy1},\eqref{Fw} would in particular require an efficient
description of the domain $\ew$. To elaborate on this, we first observe that the non-convex set $\ew$ is described by purely spectral constraints, i.e., we would need to know only the eigenvalues (\emph{natural occupation numbers}) of a 1RDM $\g$ to decide whether it belongs to $\ew$ or not. This is due to the following unitary equivalence
\begin{equation}\label{unitequiv}
\g \in \ew \quad \Rightarrow \quad u \g u^\dagger \in \ew\,,
\end{equation}
valid for all unitaries $u$ on the one-particle Hilbert space.
Indeed, when $\Ew \ni \G \mapsto \g$ then $u^{\otimes^N} \G (u^\dagger)^{\otimes^N} \mapsto u \g u^\dagger$, where $u^{\otimes^N} \G \, (u^\dagger)^{\otimes^N} \in \Ew$ since $u^{\otimes^N}$ is a unitary on the $N$-fermion Hilbert space (and therefore does not change the spectrum of $\G$). While the unitary equivalence simplifies the description of $\ew$ considerably, this task is unfortunately still quite challenging. Actually, there exists from a more abstract mathematical point of view a general solution to that problem in the form of Klyachko's seminal work on the mixed single-body quantum marginal problem \cite{KL04,KL09,AK08}. Yet from a practical point of view the recent algorithms are not efficient enough to determine the spectral set, $\mbox{spec}^\downarrow\left(\ew\right)$, which in turn takes the form of a polytope. Even if we knew that polytope in its hyperplane representation for reasonably large active spaces sizes, it might be difficult to take into account all those $\wb$-ensemble generalized Pauli constraints due to their sheer number. It is therefore one of the crucial contributions of our present work, to develop and propose a procedure which allows us to circumvent all those fundamental mathematical and computational problems. The underlying idea can be seen as a generalization of Valone's work \cite{V80} which allowed one to relax in ground state RDMFT the underlying domain $\p$ of pure $N$-representable 1RDMs to the convex set $\e$ of ensemble $N$-representable 1RDMs. Valone's work was based on the fact that the purity constraint on the $N$-fermion quantum state in the variational ground state problem could be skipped. Our more general perspective allows us to identify Valone's idea as a rather natural relaxation procedure with similarly striking impact on $\wb$-ensemble RDMFT: For any $\wb$ (including $\wb= \wb_0$, i.e., ground state RDMFT), we can relax, as explained in Sec.~\ref{sec:relax}, the problem of minimizing a non-convex functional $\Fw$ on a highly involved non-convex set $\ew$ to a convex minimization problem:
\begin{thm}\label{thm:relax}
For the class of Hamiltonians $H(h)$ \eqref{ham} and any choice of decreasingly ordered ensemble weights $\wb$ we can determine the energy $E_{\wb}$ \eqref{Ew} and the 1RDM of the corresponding $\wb$-minimizer $\G_{\wb}$ by minimizing the relaxed energy functional $\langle h, \g\rangle + \Fbw(\g)$,
\begin{equation}
E_{\wb}= \min_{\g \in \ebwS} \left[\langle h, \g\rangle + \Fbw(\g) \right]\,,
\end{equation}
over its domain
\begin{equation}\label{ebw}
\ebw \equiv \mathrm{conv}(\ew)\,.
\end{equation}
The relaxed functional $\Fbw$ is given as the lower convex envelop of the original functional $\Fw$ (as explained in Sec.~\ref{sec:relax}) and it is universal in the sense that it does not depend on the variable one-particle Hamiltonian $h$.
\end{thm}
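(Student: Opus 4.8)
The plan is to read off the right-hand side of the constrained-search identity~\eqref{Levy1} as the minimization of the single function $f(\g)\equiv\langle h,\g\rangle+\Fw(\g)$ over the compact but non-convex set $\ew$, and then to apply the exact convex relaxation scheme of Sec.~\ref{sec:relax} verbatim. Having already extended $\Fw$ to the ambient space of Hermitian one-particle operators by $\Fw\equiv+\infty$ off $\ew$ (as in Theorem~\ref{thm:FwtoFpe}), the same extension turns $f$ into a function on all of that space without altering its infimum, so that $E_{\wb}=\inf_{\g}f(\g)$. The relaxation identities~\eqref{inf2},\eqref{inf3} then guarantee that replacing $f$ by its lower convex envelope leaves the infimum unchanged, $E_{\wb}=\inf_{\g}\big[\mathrm{conv}(f)(\g)\big]$.

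The decisive algebraic step is to peel off the affine part, i.e.\ to show that $\mathrm{conv}(\langle h,\cdot\rangle+\Fw)=\langle h,\cdot\rangle+\mathrm{conv}(\Fw)$. This holds because $\g\mapsto\langle h,\g\rangle$ is affine: if $g$ denotes the largest convex minorant of $\Fw$, then $\langle h,\cdot\rangle+g$ is convex and bounded above by $f$, while conversely any convex $\tilde g\le f$ yields a convex minorant $\tilde g-\langle h,\cdot\rangle\le\Fw$ of $\Fw$, whence $\tilde g\le\langle h,\cdot\rangle+g$. Thus $\langle h,\cdot\rangle+g$ is the lower convex envelope of $f$, which identifies the relaxed functional as $\Fbw\equiv\mathrm{conv}(\Fw)$ and, crucially, shows that $\Fbw$ does \emph{not} depend on $h$ --- precisely the asserted universality.

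Next I would pin down the effective domain of $\Fbw$ using the decomposition formula~\eqref{convf}: $\mathrm{conv}(\Fw)(\g)$ equals the infimum of $\sum_i q_i\Fw(\g_i)$ over convex decompositions $\sum_i q_i\g_i=\g$, and since $\Fw$ is finite (indeed bounded, as $\Tr_N[W\G]$ is bounded on the compact $\ew$) exactly on $\ew$, this infimum is finite if and only if $\g$ admits such a decomposition with all $\g_i\in\ew$. Hence $\mathrm{dom}(\Fbw)=\mathrm{conv}(\ew)=\ebw$ in accordance with~\eqref{ebw}, and because the relaxed integrand is $+\infty$ off $\ebw$ the infimum over the ambient space collapses to the minimization over $\ebw$ claimed in the theorem.

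It remains to confirm attainment and to recover the minimizer, and here the original GOK minimizer does the work for free. The 1RDM $\g_{\wb}\equiv N\Tr_{N-1}[\G_{\wb}]$ of the $\wb$-minimizer $\G_{\wb}$ of Theorem~\ref{thm:GOK} lies in $\ew\subseteq\ebw$ and realizes $\langle h,\g_{\wb}\rangle+\Fw(\g_{\wb})=E_{\wb}$; since $\Fbw\le\Fw$ pointwise while the relaxed infimum equals the original value $E_{\wb}$, the same $\g_{\wb}$ attains the relaxed minimum, so solving the convex problem indeed returns the 1RDM of the $\wb$-minimizer. I expect the main obstacle to be the domain identification of the previous paragraph: one must be sure that the convex hull of the epigraph does not create spurious finite values at points outside $\mathrm{conv}(\ew)$, which is exactly what the boundedness of $\Fw$ on the compact set $\ew$ (itself the continuous linear image $N\Tr_{N-1}[\Ew]$ of the compact $\Ew$) secures; the accompanying closure subtlety of Theorem~\ref{thm:biconjug} is harmless here because attainment is already furnished by $\g_{\wb}$.
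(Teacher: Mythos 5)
Your proof is correct and follows essentially the same route as the paper, which justifies Theorem~\ref{thm:relax} precisely by applying the exact convex relaxation scheme of Sec.~\ref{sec:relax} to the constrained-search identity \eqref{Levy1}. The details you supply --- peeling off the affine term $\langle h,\cdot\rangle$ to identify the relaxed functional as $\Fbw=\mathrm{conv}(\Fw)$ (hence its universality), identifying $\mathrm{dom}(\Fbw)=\mathrm{conv}(\ew)$ via \eqref{convf}, and securing attainment of the minimum at the 1RDM of the GOK minimizer --- are exactly the steps the paper leaves implicit.
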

We would like to stress again that relaxing $\wb$-ensemble RDMFT to such a convex minimization problem has crucial advantages. First, it will turn out that the set $\ebw$ can be determined for \emph{arbitrary} system sizes $(N,d)$ and the constraints depend \emph{de facto} only on the number $r$ of non-vanishing weights $w_j$. This means that in striking contrast to the pure $N$-representability conditions (generalized Pauli constraints) they do effectively not depend on the particle number and the basis set size $d$. Second, the set $\ebw$ and the functional $\Fbw$ are both convex. This implies that any minimum obtained from the minimization of $\Fbw$ will automatically be the global one.

In analogy to the ensemble ground state functional $\xbar{\F}$ proposed by Valone, also our relaxed functional $\Fbw \equiv \mbox{conv}(\Fw)$ can be derived through the constrained search formalism
\begin{thm}\label{Fbwmajor}
The relaxed $\wb$-ensemble functional $\Fbw$ follows as
\begin{equation}\label{Fbwmajor}
\Fbw(\g) = \min_{\EbwS \ni \G \mapsto \g} \mathrm{Tr}[\G W]\,,
\end{equation}
and in addition we have (with the majorization $\prec$ defined in Sec.~\ref{sec:major})
\begin{eqnarray}
\Ebw  &=& \!\bigcup_{\wb'\prec\,\wb}\!\mathcal{E}^N(\wb')\equiv \{\G \in \E| \mathrm{spec}(\G) \prec \wb\} \label{Ebwmajor}  \\
\ebw &=& N\Tr_{N-1}[\Ebw]= \bigcup_{\wb'\prec\,\wb} \mathcal{E}^1_N(\wb')\,.\label{ebwmajor}
\end{eqnarray}
\end{thm}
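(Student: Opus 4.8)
The plan is to establish the three claims in a definite logical order, starting with the characterization \eqref{Ebwmajor} of $\Ebw$, since \eqref{ebwmajor} follows from it by applying the linear partial trace map and \eqref{Fbwmajor} follows from the constrained-search structure once the domain is understood. To prove \eqref{Ebwmajor}, I would first verify the rightmost equality $\bigcup_{\wb'\prec\,\wb}\mathcal{E}^N(\wb') = \{\G \in \E\,|\,\mathrm{spec}(\G)\prec \wb\}$, which is essentially a definition: a density operator $\G$ has spectrum majorized by $\wb$ precisely when $\mathrm{spec}^\downarrow(\G)=\wb'$ for some $\wb'\prec \wb$, and each such $\G$ lies in $\mathcal{E}^N(\wb')$ by the definition \eqref{ENw}. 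The substantive content is therefore the first equality, $\mathrm{conv}(\ew)$'s $N$-fermion analogue $\Ebw=\mathrm{conv}(\Ew)=\{\G\,|\,\mathrm{spec}(\G)\prec\wb\}$.

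For this central equality my approach would be a two-way inclusion built on Uhlmann's theorem (Theorem~\ref{thm:Uhl}). For the inclusion $\mathrm{conv}(\Ew)\subseteq \{\G\,|\,\mathrm{spec}(\G)\prec\wb\}$, I would take a finite convex combination $\G=\sum_i p_i \G_i$ with each $\mathrm{spec}^\downarrow(\G_i)=\wb$, and show its spectrum is majorized by $\wb$. This follows because each $\G_i = U_i \G_\wb U_i^\dagger$ for a fixed reference state $\G_\wb$ with spectrum $\wb$ (any two states with the same spectrum are unitarily equivalent), so $\G=\sum_i p_i U_i \G_\wb U_i^\dagger$; Uhlmann's theorem then gives $\mathrm{spec}(\G)\prec\mathrm{spec}(\G_\wb)=\wb$. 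For the reverse inclusion $\{\G\,|\,\mathrm{spec}(\G)\prec\wb\}\subseteq \mathrm{conv}(\Ew)$, I would take any $\G$ with $\mathrm{spec}(\G)\prec\wb$ and invoke Uhlmann's theorem in the other direction to write $\G=\sum_i p_i U_i \G_\wb U_i^\dagger$; since each $U_i \G_\wb U_i^\dagger$ has spectrum exactly $\wb$ and hence lies in $\Ew$, this exhibits $\G$ as a convex combination of elements of $\Ew$.

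With \eqref{Ebwmajor} in hand, \eqref{ebwmajor} is immediate: applying the linear map $N\Tr_{N-1}[\cdot]$ to both sides and using that a linear map sends $\mathrm{conv}(\Ew)$ to $\mathrm{conv}(N\Tr_{N-1}[\Ew])=\mathrm{conv}(\ew)=\ebw$ (the definition \eqref{ebw}) gives $\ebw = N\Tr_{N-1}[\Ebw]$, while distributing the partial trace over the union yields $\bigcup_{\wb'\prec\wb}N\Tr_{N-1}[\mathcal{E}^N(\wb')]=\bigcup_{\wb'\prec\wb}\mathcal{E}^1_N(\wb')$. Finally, for \eqref{Fbwmajor} I would argue that $\Fbw=\mathrm{conv}(\Fw)$ by Theorem~\ref{thm:relax}, and that the lower convex envelope of a constrained-search functional is itself a constrained-search functional over the convexified domain; concretely, $\mathrm{conv}(\Fw)(\g)=\inf\{\sum_i q_i \Fw(\g_i)\,|\,\sum_i q_i \g_i=\g\}$ by \eqref{convf}, and each $\Fw(\g_i)=\Tr[\G_i W]$ for some optimal $\G_i\in\Ew$ with $\G_i\mapsto\g_i$, so that $\sum_i q_i\G_i\in\mathrm{conv}(\Ew)=\Ebw$ maps to $\g$, giving $\mathrm{Tr}[(\sum_i q_i\G_i)W]=\sum_i q_i\Fw(\g_i)$ and hence identifying the two infima.

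The main obstacle I anticipate is the clean justification of the last step \eqref{Fbwmajor}: showing the two infima coincide requires care that the minimization over $\Ebw\ni\G\mapsto\g$ really reproduces the convex envelope and not merely an upper bound. One direction is the computation above; the reverse requires taking an arbitrary $\G\in\Ebw$ with $\G\mapsto\g$, decomposing it via $\mathrm{conv}(\Ew)$ into $\sum_i q_i \G_i$ with $\G_i\in\Ew$, noting each $\G_i\mapsto\g_i\equiv N\Tr_{N-1}[\G_i]$ with $\sum_i q_i\g_i=\g$, and using $\Tr[\G_i W]\ge\Fw(\g_i)$ to conclude $\Tr[\G W]\ge\sum_i q_i\Fw(\g_i)\ge\mathrm{conv}(\Fw)(\g)$. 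The only genuine subtlety is that attainment of the minima (rather than infima) relies on compactness of $\Ebw$ and continuity of the linear objective $\Tr[\cdot\,W]$, which hold since $\Ebw$ is a closed bounded convex subset of a finite-dimensional space.
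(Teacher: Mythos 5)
Your proposal is correct and follows essentially the same route as the paper: Uhlmann's theorem (exploited via the unitary-orbit structure of $\Ew$) for Eq.~\eqref{Ebwmajor}, linearity of the partial trace for Eq.~\eqref{ebwmajor}, and the decomposition of any $\G \in \mathrm{conv}(\Ew)$ into $\sum_i p_i \G_i$ with $\G_i \in \Ew$ to identify the constrained search over $\Ebw$ with the lower convex envelope $\mathrm{conv}(\Fw)$. The only difference is presentational — you phrase the key steps as two-way inclusions/inequalities and make the compactness-based attainment of the minima explicit, where the paper writes a single chain of equalities of nested minimizations.
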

\begin{proof}
Equation \eqref{Ebwmajor} follows directly from Uhlmann's theorem (see Theorem \ref{thm:Uhl}). To explain this, we first recall from Eq.~\eqref{ENw} that the set $\Ew$ is fully characterized through the fixed spectrum $\wb$. Since unitary transformations do not change the spectrum, $\Ew$ can be parameterized as the family of all unitary conjugations of some arbitrary $\G \in \Ew$, $\Ew = \{U\G U^\dagger\}$. Accordingly, Theorem \ref{thm:Uhl} states that a density operator $\G'\in \E$ can be written as a convex combinations of $\G_i \in \Ew$ (actually $\G_i \equiv U_i \G U_i^\dagger$) if and only if its spectrum is majorized by $\wb$. Relation \eqref{ebwmajor} follows then directly from the linearity of the partial trace and Eq.~\eqref{Ebwmajor}. To prove \eqref{Fbwmajor}, we use the fact that each element $\G \in \Ebw \equiv \mbox{conv}(\Ew)$ can be written as a convex combination $\sum_i p_i \G_i$ of $\G_i \in \Ew$, leading to
\begin{eqnarray}
\lefteqn{\min_{\EbwS \ni \G \mapsto \g} \mathrm{Tr}[\G W]} \hspace{1cm} \nonumber  \\
&=&\min_{\scriptsize\begin{array}{c}\sum_i p_i\G_i \mapsto \g, \\ \G_i \in \Ew \end{array}} \sum_i p_i \mathrm{Tr}[\G_i W] \nonumber  \\
&=& \min_{\scriptsize\begin{array}{c}\sum_i p_i\g_i=\g, \\ \g_i \in \ew \end{array}} \min_{\{\Ew \ni \G_i \mapsto \g_i\}}\sum_i p_i \mathrm{Tr}[\G_i W] \nonumber \\
&\equiv&  \min_{\scriptsize\begin{array}{c}\sum_i p_i\g_i=\g, \\ \g_i \in \ew \end{array}} \sum_i p_i \Fw(\g_i) = \Fbw(\g)\,.
\end{eqnarray}
\end{proof}

In the following, we refer to $\g \in \ew$ as being \emph{$\wb$-ensemble $N$-representable} and to $\g \in \ebw$ as being \emph{relaxed $\wb$-ensemble $N$-representable}.

\section{Characterization of $\ebw$}\label{sec:characterize}
We start with two general comments on functional theories. Clearly, without knowing the domain of the universal functional, the common process of constructing and testing functional approximations cannot be initiated. Moreover, a formal definition of the domain $\ebw$ as in Eqs.~\eqref{E1},\eqref{ebw} is neither sufficient. Instead, a description of $\ebw$ is required which allows one in practical minimizations algorithms to verify without much effort whether a given 1RDM $\g$ belongs to $\ebw$ or not.
Deriving such a compact description of $\ebw$ is exactly the goal of the following two sections.
Due to the duality correspondence explained in Sec.~\ref{sec:dual}, this goal can be achieved by studying for all Hermitian operators $h$ on the one-particle Hilbert space $\mathcal{H}_1$ the minimization of $\mbox{Tr}_1[h \g]$ on the convex compact set $\ebw$.
It is exactly this dual perspective on convex compact sets which establishes a fruitful equivalence between the problem of characterizing $\ebw$
and describing various systems of $N$ \emph{non-interacting} fermions with arbitrary one-particle Hamiltonians $h$.

We first reveal that $\ebw$ takes effectively the form of a convex polytope whose vertex representation we determine (recall Sec.~\ref{sec:polytopes}). Then, in Sec.~\ref{sec:vtoh} we turn this into a hyperplane representation, leading to a hierarchical generalization of Pauli's famous exclusion principle to ensemble states.

Before we begin to derive a compact description of $\ebw$ for arbitrary $\wb$, let us discuss the specific case of $\wb = \wb_0$.
It has been shown by Kuhn and Coleman \cite{K60,C63} that $\g \in \e\equiv \,\xbar{\mathcal{E}}^1_N(\wb_0)$ if and only if its eigenvalues (natural occupation numbers) fulfill the Pauli exclusion principle \eqref{PC}. By referring to the vector majorization, this can also expressed
as
\begin{equation}\label{PCmajor}
\mbox{spec}(\g) \prec (1,\ldots,1,0,\ldots)\,.
\end{equation}
In agreement with our previous comments, the constraints on $\ebw$ for $\wb = \wb_0$ are purely spectral and the natural orbitals of the 1RDM $\g$ have no influence on its $N$-representability.
Moreover, it can easily be shown that the extremal points of $\e$ are exactly those 1RDMs which emerge from single Slater determinants \cite{W39,K60,CY00,LS09}. In other words, the extremal 1RDMs are characterized by $\mbox{spec}(\gamma)^\downarrow=(1,\ldots,1,0,0,\ldots)$.
But what are the extremal points of $\ebw$ for an arbitrary $\wb \neq \wb_0$ and how could we calculate them?
In the following, we are answering those crucial questions in a constructive way.
\begin{figure}[htb]
\frame{\includegraphics[scale=0.49]{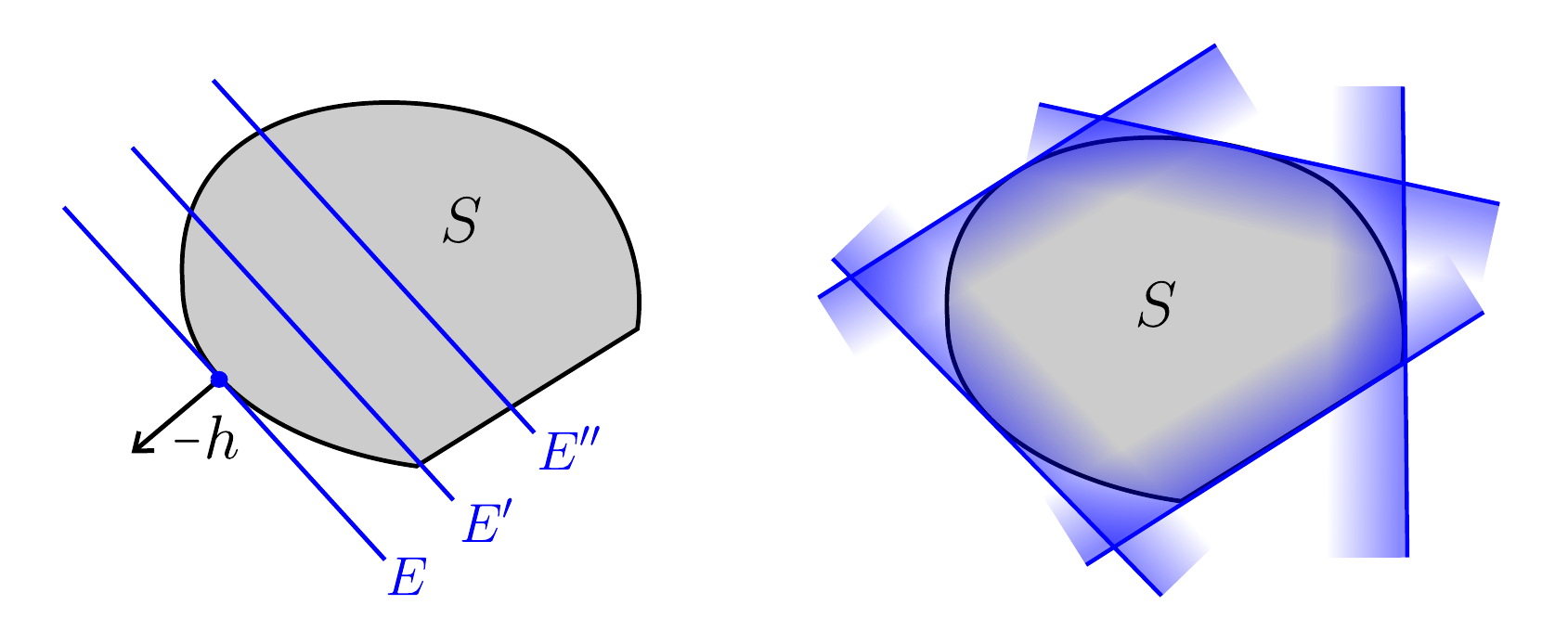}}
\caption{Geometric illustration of the duality correspondence, Theorem \ref{thm:dual}, for convex compact sets $S$. Left: the minimization of the linear function $\langle \cdot, h \rangle$ over $S$ means to shift the hyperplane (blue) of constant value $\langle \g, h\rangle$ along the direction of the normal vector $-h$ until the boundary is reached. Right: Realizing this minimization for all `directions' $-h$ fully characterizes the boundary of $S$ and therefore also $S$.}
\label{fig:dual}
\end{figure}

\subsection{Procedure for full characterization of $\ebw$}\label{sec:procedure}
The full characterization of $\ebw$ involves eight consecutive steps. This will reveal that $\ebw$ effectively takes the form of a convex polytope whose vertex representation can be derived in a systematic manner.

\subsubsection{Duality correspondence}
According to the duality correspondence presented in Sec.~\ref{sec:dual} (recall in particular Theorem \ref{thm:dual}) we can characterize the convex compact set $\ebw$ through the support function $\chi_{\ebwS}^\ast(h)$. Equivalently, it is sufficient to determine the minimizer $\g_h$ of $\langle h, \g\rangle_1 \equiv \mbox{Tr}_1[h \g]$ on $\ebw$ for any Hermitian one-particle operator $h$. We illustrate this minimization in Fig.~\ref{fig:dual}. The convex hull of all those minimizers (see right panel of Fig.~\ref{fig:dual}) coincides with $\ebw$. In a mathematically more formal way this can be stated as
\begin{equation}
\ebw = \mbox{conv}\Big(\!\Big\{\!\argmin_{\g \in \ebwS} \mbox{Tr}_1[\g h]\,\Big|\, h= h^\dagger\Big\}\!\Big).
\end{equation}
Actually, all those $\g \in \partial \ebw$ which emerge as unique minimizers of at least one $h$ are extremal points of $\ebw$.

\subsubsection{Lifting minimization to $N$-fermion level}
To access the information on the fixed spectral vector $\wb$ we lift this minimization process to the $N$-fermion level. This is based on the following (trivial) relation
\begin{equation}\label{Min1liftN}
\min_{\g \in \ebwS} \mbox{Tr}_1[h \g] = \min_{\G \in \EbwS} \mbox{Tr}_N[h \G]\,,
\end{equation}
where on the right-hand side $h$ should denote the one-particle operator from the left-hand side but lifted to the $N$-particle level (for the sake of simplicity we used the same mathematical symbol).

\subsubsection{Application of GOK variational principle}
Following the outline of Sec.~\ref{sec:relax} we have
\begin{equation}\label{ENrelx}
\min_{\G \in \EbwS} \!\!\mbox{Tr}_N[h \G] = \min_{\G \in \Ew}\!\! \mbox{Tr}_N[h \G]\,.
\end{equation}
The GOK variational principle, Theorem \ref{thm:GOK}, applied to the right-hand side of Eq.~\eqref{ENrelx} implies in combination with Eqs.~\eqref{Min1liftN},\eqref{ENrelx}
\begin{equation}\label{GOK0int}
\min_{\g \in \ebwS}\!\!\mbox{Tr}_1[h \g] =\sum_{i} w_i \tilde{E}_i\,.
\end{equation}
To avoid any confusion with the eigenvalues $E_i(h)$ of the interacting Hamiltonian \eqref{ham} we denote the eigenenergies of the non-interacting Hamiltonian $h$ by $\tilde{E}_i$.
Equation \eqref{GOK0int} means that the crucial expression $\min_{\g \in \ebwS} \mbox{Tr}_1[h \g]$ follows as the $\wb$-weighted sum of the eigenenergies $\tilde{E}_i$ of the  Hamiltonian $h$ on the $N$-fermion Hilbert space $\mathcal{H}_N$.  Even more importantly, the possible minimizers of the left-hand side in Eq.~\eqref{Min1liftN} follow as the 1RDMs of the $\wb$-weighted mixture of the eigenstates (ordered according to their energies) of $h$.

The interpretation of the mathematical problem on the right side of \eqref{ENrelx} as a minimization problem for $N$ non-interacting fermions with one-particle Hamiltonian $h$ is a key insight. This fruitful analogy will namely allow us to solve that mathematical problem in a rather elegant way by referring to our physical understanding of non-interacting fermion systems.

\subsubsection{Configuration states}
Since $h$ is a one-particle operator its eigenstates $\ket{\Psi_j}$ on the $N$-fermion Hilbert space are given by configuration states/Slater determinants,
\begin{equation}\label{SD}
\ket{\bd{i}}\equiv \ket{i_1,\ldots, i_N} \equiv f_{i_1}^\dagger \ldots f_{i_N}^\dagger \ket{0}\,.
\end{equation}
Here, by exploiting the formalism of second quantization, $\ket{0}$ denotes the vacuum state and $f_j^\dagger$ the fermionic creation operators referring to the eigenbasis $\{\ket{i}\}_{i=1}^d$ of $h$ as an operator on the one-particle Hilbert space,
\begin{equation}\label{h}
h\big|_{\Ho} = \sum_{i=1}^{d}h_i\, \big(f_i^\dagger f_i\big)\big|_{\Ho}\equiv \sum_{i=1}^{d}h_i \ket{i}\!\bra{i}\,.
\end{equation}
Moreover, we introduce the set
\begin{equation}\label{config}
\mathcal{I}_{N,d}\equiv \{\bd{i}\equiv (i_1,\ldots,i_N)\,|\, 1\leq i_1<i_2<\ldots <i_N \leq d\}\,,
\end{equation}
of all $N$-fermion configurations $\bd{i}$.

Clearly, by restricting to operators $h$ with unique minimizers $\g_h$, the eigenbases of $h$ and $\g_h$ necessarily coincide.

\subsubsection{Introduction of spectral polytopes}
The set $\ebw$ is invariant under unitary conjugation, i.e.,
\begin{equation}
u\, \ebw\, u^\dagger = \ebw
\end{equation}
for all unitary operators $u$ on the one-particle Hilbert space. This allows us to restrict to one fixed choice of (ordered) natural orbitals.
Based on this relation, we define for the following
\begin{eqnarray}\label{Sigma}
\Sigma^{\downarrow}(\wb) &\equiv& \mbox{spec}^\downarrow\big(\ebw\big)  \nonumber \\
\Sigma(\wb) &\equiv& \mbox{spec}\big(\ebw\big) \,.
\end{eqnarray}
Here, $\mbox{spec}\big(\ebw\big)$ shall denote the set of all possible spectra of 1RDMs $\g \in \ebw$ whose entries (natural occupation numbers) are arranged in any possible order.
By introducing the Pauli simplex
\begin{equation}\label{PCsimplex}
\Delta= \{\bd{\lambda}\in \RR^d\,|\,1\geq \lambda_1\geq \lambda_2\geq \ldots \geq \lambda_d\geq 0\}
\end{equation}
we could express the relation between both spectral polytopes \eqref{Sigma} in the following geometric way
\begin{equation}\label{Sigmarel}
\Sigma^{\downarrow}(\wb) = \Sigma(\wb)\cap \Delta\,.
\end{equation}
We illustrate $\Sigma(\wb)$ and $\Sigma^\downarrow(\wb)$ in Fig.~\ref{fig:ew0}: For the setting $(N,d)=(2,3)$ we present the two-dimensional spectral polytopes of the largest two natural occupation numbers $(\lambda^\downarrow_1,\lambda^\downarrow_2)$, where $\lambda^\downarrow_3$ follows from the normalization condition, $\lambda^\downarrow_1+\lambda^\downarrow_2+\lambda^\downarrow_3=2$.
Minimization of $h_1\lambda_1^\downarrow+h_2\lambda_2^\downarrow+h_3\lambda_3^\downarrow=(h_1- h_3)\lambda_1^\downarrow+(h_2- h_3)\lambda_2^\downarrow + 2h_3$ for all $\bd{h}$ with $h_1\leq h_2 \leq h_3$ (range shown in red) always leads to a single generating vertex $\bd v$. For the chosen weight vector $\bd w = (0.5, 0.4, 0.1)$ in Fig.~\ref{fig:ew0} this yields $\bd{v}=(0.9,0.6,0.5)$. Permutation of the entries of $\bd{v}$ leads to the remaining five vertices of the permutohedron $\mathcal{P}_{\bd{v}}=\Sigma(\wb)$. Anticipating the results of Sec.~\ref{sec:examples}, the polytope $\Sigma^\downarrow(\wb)$ is described by the ordering constraints $\lambda^\downarrow_1 \geq \lambda^\downarrow_2 \geq \lambda^\downarrow_3=2-\lambda^\downarrow_1-\lambda^\downarrow_2$ complemented by the two generalized exclusion principle constraints $\lambda^\downarrow_1+\lambda^\downarrow_2 \leq 1+w_1$ and $\lambda^\downarrow_1 \leq w_1+w_2$.

\begin{figure}[htb]
\includegraphics[scale=0.55]{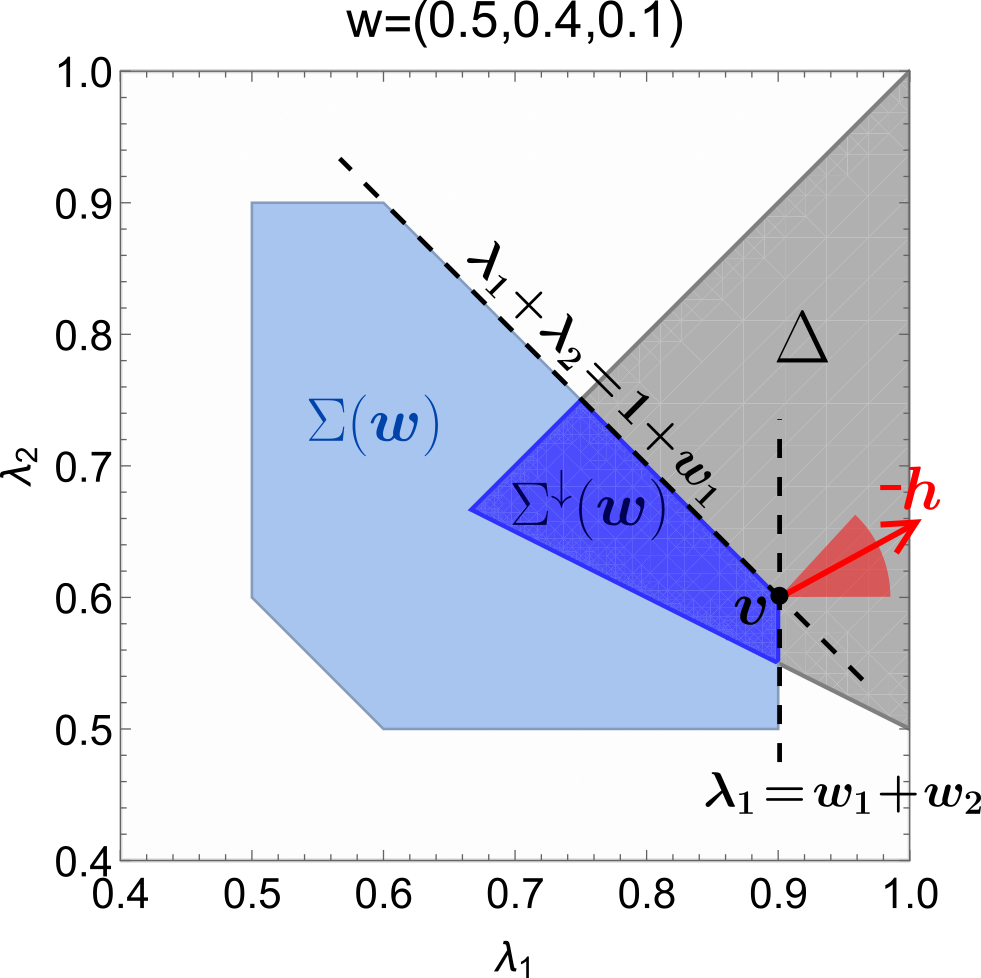}
\caption{Illustration of spectral polytopes $\Sigma(\wb)$ (light blue) and $\Sigma^{\downarrow}(\wb)= \Sigma(\wb) \cap \Delta$ (blue) \eqref{Sigma} for an exemplary $\wb$ in the setting $(N,d)=(2,3)$. Pauli simplex $\Delta$ \eqref{PCsimplex} shown in gray and normalization is used to substitute $\lambda_3=2-\lambda_1-\lambda_2$.}
\label{fig:ew0}
\end{figure}

\subsubsection{Generating vertices}
To calculate $\Sigma^{\downarrow}(\wb)$ and $\Sigma(\wb)$, respectively, we just need to determine the vector $\bd{\lambda}^\downarrow$ of decreasingly ordered natural occupation numbers for the minimizer states $\G$ for all possible Hamiltonians $h$. Without loss  of generality the class of those Hamiltonians \eqref{h} can be restricted to those
with an arbitrary but \emph{from now onwards fixed} eigenbasis $\mathcal{B}_1\equiv \{\ket{i}\}_{i=1}^d$ and $h_1 \leq h_2 \leq \ldots \leq h_d$. Independent of the number $r$ of non-vanishing weights, this consideration will always lead to a finite number of corresponding vectors $\bd{v}^{(j)}$, $j=1,2,\ldots, R<\infty$ of natural occupation numbers. In retrospective, this will prove our claim that the spectral sets $\Sigma(\wb)$ and $\Sigma^{\downarrow}(\wb)$ take the form of convex polytopes. Due to the ordering $h_i \leq h_{i+1}$ and $w_i \geq w_{i+1}$, the entries of any vector $\bd{v}^{(j)}$ will be automatically ordered decreasingly. The spectral polytope $\Sigma(\wb)$ then follows as
\begin{equation}\label{polyvertices}
\Sigma(\wb) = \mbox{conv}\big(\big\{\pi(\bd{v}^{(j)})\,\big|\, j=1,\ldots,R,\, \pi \in \mathcal{S}^d\big\}\big)\,.
\end{equation}
Since $\Sigma(\wb)$ is a permutation-invariant polytope we can restrict ourselves to $\Sigma^\downarrow(\wb)$. Expressing $\Sigma^\downarrow(\wb)$ in the hyperplane representation will then lead to a rather compact description of the spectral polytope, which after all will turn out to be effectively independent of $N,d$. We present several examples illustrating the hyperplane representation of $\Sigma^\downarrow(\wb)$ in Sec.~\ref{sec:examples}.

\subsubsection{Partial ordering of configurations}
To determine all extremal points of $\Sigma(\wb)$, or to be more precise all vectors $\bd{v}^{(j)}$, we just need to determine all possible $\wb$-ensemble minimizer states $\G$ for the specific class of Hamiltonians $h$ as defined in point 6. Due to the separation of the natural occupation numbers and the irrelevant natural orbitals this eventually amounts to a purely combinatorial problem. For practically relevant (i.e., not too large) $r$ this can be solved in a straightforward manner. To explain this, we introduce for each set of energy levels
\begin{equation}\label{hvec}
\bd{h}\equiv \bd{h}^{\uparrow}\equiv (h_1,\ldots, h_d)
\end{equation}
a \emph{linear  ordering} $\leq_{\bd{h}}$ on the set $\mathcal{I}_{N,d}$ \eqref{config} of $N$-fermion configurations by
\begin{equation}\label{linearorder}
\bd{i} \leq_{\bd{h}} \bd{j} \quad :\Leftrightarrow \quad \tilde{E}_{\bd{i}} \leq \tilde{E}_{\bd{j}}\,,
\end{equation}
where $\tilde{E}_{\bd{i}}\equiv \bra{\bd{i}}h\ket{\bd{i}}=\sum_{i \in \bd{i}} h_i$.
For any fixed $\bd{h}$, this leads to a linear hierarchy of all configurations in  $\mathcal{I}_{N,d}$. Actually, independent
of the values $h_1 \leq h_2 \leq \ldots \leq h_d$ the first (lowest) one is always $(1,2,\ldots, N)$, which is always followed by
$(1,\ldots,N-1,N+1)$. The third lowest (and the following ones) depend on the values $h_i$ though.
We therefore would need to determine for some fixed $r$, by considering various $\bd{h}$, all possible sequences of the lowest $r$ configurations,
\begin{equation}\label{seqh}
\bd{i}_1\leq_{\bd{h}} \bd{i}_2 \leq_{\bd{h}}\ldots \leq_{\bd{h}} \bd{i}_r \,,
\end{equation}
where $\bd{i}_1=(1,\ldots,N)$ and $\bd{i}_2=(1,\ldots,N-1,N+1)$. There will be only finitely many different sequences. Each of them leads to one vector $\bd{v}$ according to
\begin{equation}\label{vvsni}
\bd{v}= \sum_{j=1}^r w_j \bd{n}_{\bd{i}_j}\,,
\end{equation}
where we introduced the occupation number vector $\bd{n}_{\bd{i}}$ of the configuration $\bd{i}$. The $k$-th entry of that vector is one whenever $k$ is occupied, $k \in \bd{i}$, and zero otherwise. For instance, we have $\bd{n}_{(1,\ldots,N)}=(1,\ldots,1,0,\ldots)$.
The examples provided in Sec.~\ref{sec:examples} will illustrate this procedure of determining in a systematic way the vertex representation of the spectral polytope $\Sigma(\wb)$.

To determine for a fixed $r$ all those sequences, we recall that some pairs of configurations $\bd{i}, \bd{j}$ share a relation regardless of the values of $\bd{h}\equiv \bd{h}^\uparrow$. For instance, as already explained above, any $\bd{j}$ is related to the configuration $(1,\ldots,N)$, as $(1,\ldots,N) \leq_{\bd{h}} \bd{j}$, for all $\bd{h}$. Such universal relations simplify our task of identifying all sequences \eqref{seqh} of length $r$ considerably and we introduce therefore the partial ordering on $\mathcal{I}_{N,d}$
\begin{eqnarray}\label{order}
\bd{i} \leq \bd{j} \quad &:\Leftrightarrow& \quad  \bd{i} \leq_{\bd{h}} \bd{j}\,, \quad \forall \bd{h}\equiv \bd{h}^\uparrow  \nonumber \\
&\,\,\Leftrightarrow&\sum_{k=1}^N h_{i_k} \leq \sum_{k=1}^N h_{j_k}\,, \quad \forall \bd{h}\equiv \bd{h}^\uparrow \,.
\end{eqnarray}
Clearly, one has $\bd{i} \leq \bd{j}$ if and only if $i_k \leq j_k$ for all $k=1,2,\ldots,N$. Hence, Eq.~\eqref{order} provides the connection between the elements of the $N$-tuples $\bd i$ and the energies $h_1\leq h_2\leq \ldots\leq h_d$.

\begin{figure}[htb]
\frame{\includegraphics[scale=0.57]{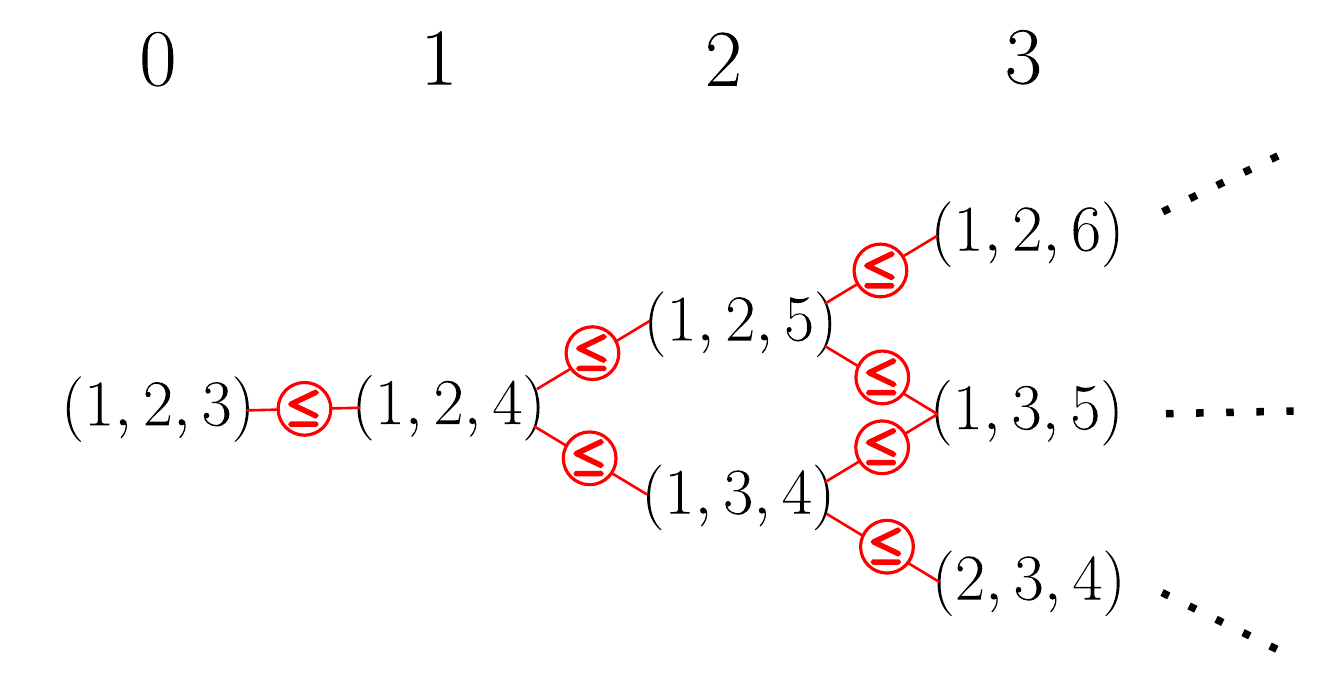}}
\caption{Illustration of the partial ordering \eqref{order} between the energetically lowest few configurations $\bd{i}$ of
the one-particle Hamiltonian $\bd{h} \equiv \bd{h}^\uparrow$ for the case of $N=3$ fermions (see text for more details).}
\label{fig:excit}
\end{figure}
The partial ordering $\leq$ of the lowest few configurations is illustrated in Fig.~\ref{fig:excit} for the case $N=3$. A huge advantage of our whole procedure is that such excitation patterns are effectively independent of $N,d$, as long as
\begin{equation}\label{Ndmin}
N \geq r-1\quad \mbox{and}\quad d-N \geq r-1\,.
\end{equation}
To be more specific, drawing for arbitrary $N$ and $d$ which obey \eqref{Ndmin} the analogous pattern of the one shown in Fig.~\ref{fig:excit} is trivial: One just needs to increase the orbital index in each configuration by $N-3$ and then to add to each configuration the orbital indices $1,2,\ldots,N-3$. In the opposite case, when $N$ or $d-N$ is reduced (for fixed $r$) to violate condition \eqref{Ndmin}, it just means that some of the sequences \eqref{seqh} are left out. For these cases $N,d$ one would need to redo parts of the derivation. Yet, this would be a rather elementary exercise since $N$ and $d-N$ are small. Moreover, this effective independence of $N,d$ of the excitation pattern turns out to be the reason why the hyperplane representation derived in Sec.~\ref{sec:vtoh} will be effectively independent of $N$ and $d$ as well.

\subsubsection{Putting everything together}
Last but not least, we work out the sequences
\begin{equation}\label{sequence}
\bd h\mapsto \Gamma \mapsto \gamma\mapsto \bd v\,,
\end{equation}
to obtain all the required vertices for generating the spectral polytope $\Sigma(\wb)$ according to Eq.~\eqref{polyvertices}. For this, we recall that in virtue of Eq.~\eqref{seqh} every vector $\bd h \equiv \bd{h}^\uparrow $ determines a lineup
\begin{equation}
\bd i_1\to \bd i_2 \to \ldots\to \bd i_r\,,
\end{equation}
given by the first $r$ configurations $\bd  i_1,\ldots, \bd  i_r$ with respect to the linear order $\leq_{\bd{h}}$.
Consequently, for any fixed $r$ there will be only finitely many such lineups and thus only finitely many generating vertices $\bd{v}^{(j)}$.

For every lineup, the corresponding $N$-fermion density operator $\Gamma$ follows as
\begin{equation}
\Gamma = \sum_{j=1}^r w_j\ket{\bd i_j}\!\bra{\bd i_j}\,,
\end{equation}
and tracing out $N-1$ particles yields the corresponding $\gamma$. Since the occupation number vectors $\bd{n}_{\bd{i}}$ occurring in Eq.~\eqref{vvsni} are nothing else than the spectra of the 1RDMs of the configuration states $\ket{\bd{i}}$, we eventually obtain
\begin{equation}
\bd  v = \sum_{j=1}^r w_j\mathrm{spec}\big(N\Tr_{N-1}\left[\ket{\bd i_j}\!\bra{\bd i_j}\right]\big)\,.
\end{equation}

From a mathematical point of view, we would be done now. We namely succeeded in determining the spectral polytope $\Sigma(\wb)$ \eqref{Sigma} according to Eq.~\eqref{polyvertices} and therefore also $\ebw$. Yet, from a practical point of view, this $v$-representation of $\Sigma(\wb)$ is of little use. To be more specific, when minimizing an approximate functional by a gradient method one would need to know when the vector $\bd{\lambda}$ is leaving the polytope $\Sigma(\wb)$. In striking contrast to the $v$-representation this can easily be verified within the $h$-representation by checking whether all linear hyperplane conditions $D_k(\bd{\lambda})\geq 0$ are still fulfilled. Another huge advantage of the $h$-representation is that the number of inequalities (and even their form) turns out to be independent of $N$ and $d$. This is in striking contrast to the number of vertices of $\Sigma(\wb)$ which grows roughly as $d^{N+r-1}/(N-r+1)!$ and therefore would diverge in the complete basis set limit $d\rightarrow \infty$.

\subsection{Translating $\Sigma(\wb)$'s $v$-representation into an $h$-representation}\label{sec:vtoh}
As far as the practical application of $\wb$-ensemble RDMFT is concerned, we still need to translate the vertex representation of the permutation-invariant polytope $\Sigma(\wb)$ into the (expected) compact hyperplane representation. This means, starting from the definition of the generating vertices in Eq.~\eqref{vvsni}, we would like to derive the minimal set of facet-defining inequalities. Restricting in this context to vectors $\bd{\lambda}^\downarrow$ of decreasingly ordered natural occupation numbers $\lambda_i^\downarrow$ will drastically reduce this set of inequalities. In the following, we explain the strategy for this and discuss general properties of the hyperplane representation, while the inequalities for different numbers $r$ of finite weights $w_1,\ldots, w_r$ are presented and illustrated in the subsequent section, Sec.~\ref{sec:examples}.

We first consider the two most significant cases $r=1,2$. They correspond to the well known ground state RDMFT and an RDMFT for describing ground state gaps, respectively. As it has been explained in the previous section, there is only one lineup possible for $r=1,2$ and therefore only one generating vertex $\bd{v}$ \eqref{vvsni}. The spectral polytope $\Sigma(\wb)$ is therefore given as the permutohedron $\mathcal{P}_{\bd{v}}$ (recall
Eq.~\eqref{polyvertices}). According to Rado's theorem (Theorem \ref{thm:Rado}), the hyperplane representation follows immediately as
\begin{equation}\label{SigmaRado}
\Sigma(\wb) = \mathcal{P}_{\bd{v}}=\{\bd{\lambda}\in \RR^d\,|\,\bd{\lambda} \prec \bd{v}\}\,.
\end{equation}
The majorization condition $\bd{\lambda} \prec \bd{v}$ contains $d$ inequalities for $\bd{\lambda}^\downarrow$. This represents a drastic simplification of the description of the polytope compared to its $v$-representation with its huge number of vertices. As a matter of fact, most of those inequalities are even redundant (see also Sec.~\ref{sec:examples}).

For each $r>2$, there is more than one lineup of length $r$ (see also Fig.~\ref{fig:excit}) and thus $R>1$ many generating vectors $\bd{v}^{(j)}$. While $\Sigma(\wb)$ is therefore not a permutohedron anymore it is still permutation-invariant. To determine the hyperplane representation for larger $r$, we propose a crucial generalization of Rado's theorem
\begin{thm}\label{thm:Radogener}
Given finitely many vectors $\bd{v}^{(1)},\ldots, \bd{v}^{(R)}\in \RR^d$. The polytope
\begin{equation}
\mathcal{P} = \mathrm{conv}\big(\big\{\pi(\bd{v}^{(j)})\,\big|\, j=1,\ldots,R,\, \pi \in \mathcal{S}^d\big\}\big)
\end{equation}
can be characterized equivalently as
\begin{equation}
\mathcal{P} = \Big\{\bd{\lambda}\,\Big|\,\exists \,\mathrm{conv.~comb.}\,\sum_{i=1}^R p_i\bd{v}^{(j)}\equiv \bd{u}\!: \bd{\lambda} \prec \bd{u} \Big\}\,.
\end{equation}
\end{thm}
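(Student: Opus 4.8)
The plan is to prove the two set inclusions separately, using Rado's theorem (Theorem \ref{thm:Rado}) as the engine that converts the single-permutohedron statement into the multi-vertex generalization. Throughout I write $\mathcal{P}_{\bd u}$ for the permutohedron of a vector $\bd u$, so that Rado's theorem reads $\mathcal{P}_{\bd u}=\{\bd\lambda\mid \bd\lambda\prec\bd u\}$.

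For the inclusion $\supseteq$, suppose $\bd\lambda\prec\bd u$ for some convex combination $\bd u=\sum_{i} p_i \bd v^{(j_i)}$ of the given vectors. By Rado's theorem this means $\bd\lambda\in\mathcal{P}_{\bd u}$, i.e. $\bd\lambda$ is a convex combination of the permuted vectors $\{\pi(\bd u)\}_{\pi\in\mathcal{S}^d}$. Since permutation is linear, $\pi(\bd u)=\sum_i p_i\,\pi(\bd v^{(j_i)})$, and each $\pi(\bd v^{(j_i)})$ is one of the generating vertices of $\mathcal{P}$. Hence every $\pi(\bd u)$ already lies in $\mathcal{P}$, and because $\mathcal{P}$ is convex, the convex combination $\bd\lambda$ of these points lies in $\mathcal{P}$ as well.

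For the reverse inclusion $\subseteq$, take $\bd\lambda\in\mathcal{P}$, so $\bd\lambda=\sum_{j,\pi} q_{j,\pi}\,\pi(\bd v^{(j)})$ for nonnegative weights summing to one. The idea is to pass to the decreasingly ordered rearrangement and group the weights by index $j$: set $p_j\equiv\sum_\pi q_{j,\pi}$ and let $\bd u\equiv\sum_j p_j\,\bd v^{(j)}$, which is by construction a convex combination of the $\bd v^{(j)}$. It then remains to show $\bd\lambda\prec\bd u$. The natural route is to verify the defining partial-sum inequalities of weak majorization directly: for each $k$, the sum of the $k$ largest entries of $\bd\lambda$ is bounded above using the fact that each inner block $\sum_\pi \tfrac{q_{j,\pi}}{p_j}\,\pi(\bd v^{(j)})$ already majorizes is majorized by $\bd v^{(j)}$ (again by Rado applied within the $j$-block), followed by additivity of the Ky-Fan-type quantity ``sum of $k$ largest entries'' under the convex combination over $j$; the normalization of the total weight handles the equality at $k=d$.

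The step I expect to be the main obstacle is this last majorization estimate $\bd\lambda\prec\bd u$. The subtlety is that the function $\bd\lambda\mapsto\sum_{i=1}^k\lambda_i^\downarrow$ is convex but \emph{not} linear, so one cannot simply commute it past the convex combination; instead one must invoke its convexity (equivalently, the Ky Fan variational characterization of Sec.~\ref{sec:variational}, $\sum_{i=1}^k\lambda_i^\downarrow=\max_{|V|=k}\Tr[P_V\,\mathrm{diag}(\bd\lambda)]$) to obtain the needed upper bound $\sum_{i=1}^k\lambda_i^\downarrow\le\sum_j p_j\sum_{i=1}^k (v^{(j)}_i)^\downarrow=\sum_{i=1}^k u_i^\downarrow$ after the regrouping. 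A clean alternative that sidesteps the partial-sum bookkeeping is to argue via Theorem \ref{thm:HLP}: rewrite $\bd\lambda=\sum_j p_j \bd\lambda^{(j)}$ where $\bd\lambda^{(j)}=\sum_\pi\tfrac{q_{j,\pi}}{p_j}\pi(\bd v^{(j)})\prec\bd v^{(j)}$ gives doubly stochastic matrices $M_j$ with $\bd\lambda^{(j)}=M_j\bd v^{(j)}$, and then exhibit a single doubly stochastic $M$ with $\bd\lambda=M\bd u$ by a convexity argument on the Birkhoff polytope (Theorem \ref{thm:Birk}); the existence of such an $M$ is equivalent to $\bd\lambda\prec\bd u$ and is where the real work lies.
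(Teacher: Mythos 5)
Your proposal is correct and takes essentially the same route as the paper's proof in Appendix~\ref{app:Rado2proof}: both inclusions rest on Rado's theorem, and for $\mathcal{P}\subset\mathcal{P}'$ you perform exactly the paper's regrouping $p_j\equiv\sum_{\pi}q_{j,\pi}$, $\bd{\lambda}_j\equiv p_j^{-1}\sum_{\pi}q_{j,\pi}\,\pi(\bd{v}^{(j)})\prec\bd{v}^{(j)}$, followed by the convex-combination majorization $\bd{\lambda}=\sum_j p_j\bd{\lambda}_j\prec\sum_j p_j\bd{v}^{(j)}$. If anything, you are more explicit than the paper at the one delicate step: the paper asserts this last implication outright, whereas you correctly note that it relies on the sublinearity of the Ky Fan partial sums $\bd{\lambda}\mapsto\sum_{i=1}^k\lambda_i^{\downarrow}$ together with the identity $\sum_j p_j\sum_{i=1}^k\big(v^{(j)}\big)_i^{\downarrow}=\sum_{i=1}^k u_i^{\downarrow}$, which holds because the generating vectors are consistently (decreasingly) ordered in the paper's setting.
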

We present the proof of Theorem \ref{thm:Radogener} in \mbox{Appendix \ref{app:Rado2proof}}. Here, in the main text, we illustrate and motivate instead Theorem \ref{thm:Radogener} in the form of Fig.~\ref{fig:E1Nconst}. There, we present (schematically) a part of the polytope $\Sigma(\wb)$ with $R=2$ generating vertices $\bd{v}^{(1)}, \bd{v}^{(2)}$. Each of them generates a separate permutohedron $\mathcal{P}_{\bd{v}^{(i)}}$, shown in `blue' and `red' in the right panel. Apparently, taking the convex hull of the union of those two permutohedra leads to the polytope $\Sigma(\wb)$, $\Sigma(\wb) = \mbox{conv}(\mathcal{P}_{\bd{v}^{(1)}}\cup \mathcal{P}_{\bd{v}^{(2)}})$. This is nothing else than a reformulation of Theorem \ref{thm:Radogener}, where the convex combination of the vertices $\bd{v}^{(i)}$ defines a point on the line between $\bd{v}^{(1)}$ and $\bd{v}^{(2)}$.
\begin{figure}[htb]
\includegraphics[scale=0.8]{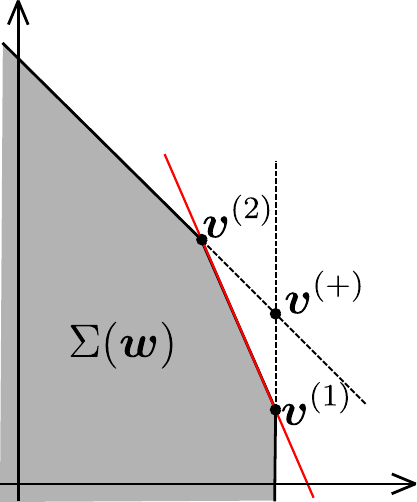}
\hspace{0.9cm}
\includegraphics[scale=0.8]{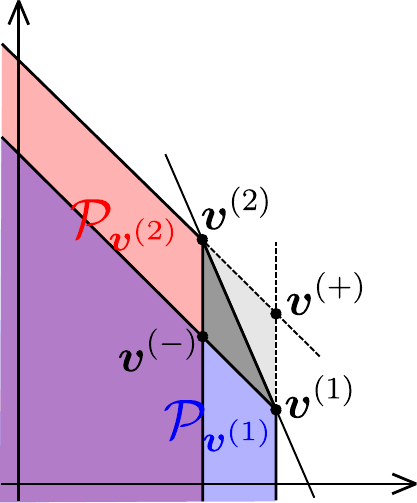}
\caption{Schematic illustration of $\Sigma(\wb)$ for $R=2$ generating vertices $\bd{v}^{(1)},\bd{v}^{(2)}$ and its inner and outer approximations \eqref{SigmaApprox} given by the
permutohedra $\mathcal{P}_{\bd{v}^{(-)}}$ and $\mathcal{P}_{\bd{v}^{(+)}}$, respectively.
The spectral polytope $\Sigma(\wb)$ is given by those $\bd{\lambda}$ which are majorized by some point on the line between the two vertices $\bd{v}^{(1)}, \bd{v}^{(2)}$ and $\mathcal{P}_{\bd{v}^{(-)}}$ (purple) follows as the intersection of the permutohedra $\mathcal{P}_{\bd{v}^{(1)}}$ (blue) and $\mathcal{P}_{\bd{v}^{(2)}}$ (red).
}
\label{fig:E1Nconst}
\end{figure}

In addition, Fig.~\ref{fig:E1Nconst} illustrates how a simple inner and outer approximation to the permutation-invariant polytope
$\Sigma(\wb)$ could be found, for an arbitrary number $R$ of generating vertices $\bd{v}^{(i)}$:
For a set of vectors $\bd{v}^{(1)},\ldots,\bd{v}^{(R)}$ whose entries are decreasingly ordered we consider the family of vectors which majorize all $\bd{v}^{(i)}$. Among that set there exists a vector $\bd{v}^{(+)}$ smallest with respect to $\prec$. Indeed, such a vector exists and it is  uniquely characterized by
\begin{equation}
\sum_{i=1}^k v_i^{(+)}  \equiv \max_{j}\left( \sum_{i=1}^k v_i^{(j)}\right)\,,\quad \forall k\,.
\end{equation}
In the same way, we introduce a maximal vector $\bd{v}^{(-)}$ within the set of those vectors which are majorized simultaneously by all $\bd{v}^{(j)}$,
\begin{equation}
\sum_{i=1}^k v_i^{(-)}  \equiv \min_{j}\left( \sum_{i=1}^k v_i^{(j)}\right)\,.
\end{equation}
One can then use the majorization conditions $\bd\lambda\prec \bd{v}^{(+)}$ and $\bd\lambda\prec \bd{v}^{(-)}$ to obtain an inner and outer approximation to the permutation invariant polytope $\Sigma(\wb)$,
\begin{equation}\label{SigmaApprox}
\mathcal{P}_{\bd{v}^{(-)}} \subset \Sigma(\wb) \subset \mathcal{P}_{\bd{v}^{(+)}}\,.
\end{equation}
Without too much effort, one can actually determine the vector $\bd{v}^{(+)}$ for arbitrary $N,d,r$. It follows as
\begin{equation}
\bd{v}^{(+)} = (\underbrace{1,\ldots,1}_{N-1},w_1,1-w_1,0,\ldots)\,.
\end{equation}
The vector $\bd{v}^{(-)}$, however, depends on the value $r$ and providing a concrete expression would be more tedious (it would in particular involve several floor functions). From a geometric point of view, $\mathcal{P}_{\bd{v}^{(-)}}$ is nothing else than the intersection of various permutohedra $\mathcal{P}_{\bd{v}^{(i)}}$,
\begin{equation}
\mathcal{P}_{\bd{v}^{(-)}}= \bigcap_{i=1}^R \mathcal{P}_{\bd{v}^{(i)}}\,.
\end{equation}

Before deriving the hyperplane representations of $\Sigma(\wb)$ in Sec.~\ref{sec:examples} for different settings $r$, we would like to discuss a remarkable connection between them. As an immediate consequence of Eq.~\eqref{ebwmajor}, an inclusion relation exists between the domains of related weight vectors $\wb, \wb'$,
\begin{equation}\label{inclusion}
\wb^\prime\prec\wb\quad\Leftrightarrow\quad\xbar{\mathcal{E}}^1_N(\wb^\prime)\subset\ebw\,.
\end{equation}
In Fig.~\ref{fig:ew}, we illustrate this inclusion relation for the exemplary case $N=2$ and $d=3$. For this, we present the spectral polytopes $\Sigma(\wb)$ and $\Sigma^\downarrow(\wb)$ for five weight vectors $\wb$ which decrease with respect to $\prec$ from left and right. Indeed, we observe that  the corresponding polytopes obey an inclusion relation in agreement with \eqref{inclusion}.

A more comprehensive mathematical analysis presented in our work \onlinecite{CLLS21} reveals an even stronger connection between the spectral polytopes $\Sigma(\wb)$ for different weight vectors $\wb$. Comparing their hyperplane representations for different values $r$ reveals a hierarchical generalization of Pauli's exclusion principle. For $r=1$, $\Sigma(\wb)$ is described by the Pauli exclusion principle constraint $\lambda_1^\downarrow \leq 1$. Then, by increasing the value $r$ step by step, more and more additional linear constraints on $\bd{\lambda}^\downarrow$ occur. This means that the inequalities for the setting $r'=r+1$ are given by those of the setting $r$, complemented by additional new ones. In particular, each generalized exclusion principle constraint derived for some setting $r$ will hold also for all larger settings $r'>r$ (and they remain facet-defining). This is remarkable since the constraints for the setting $r$ depend on a vector $\wb$ with $r$ non-vanishing weights $w_j$ summing up to one. Yet, these inequalities still hold for larger settings with some $\wb'$, by replacing $(w_j)_{j=1}^r$ by  $(w'_j)_{j=1}^r$, where the sum $\sum_{j=1}^r w_j'$ is typically smaller than one.
\onecolumngrid

\begin{figure}[htb]
\fbox{\includegraphics[scale=0.27]{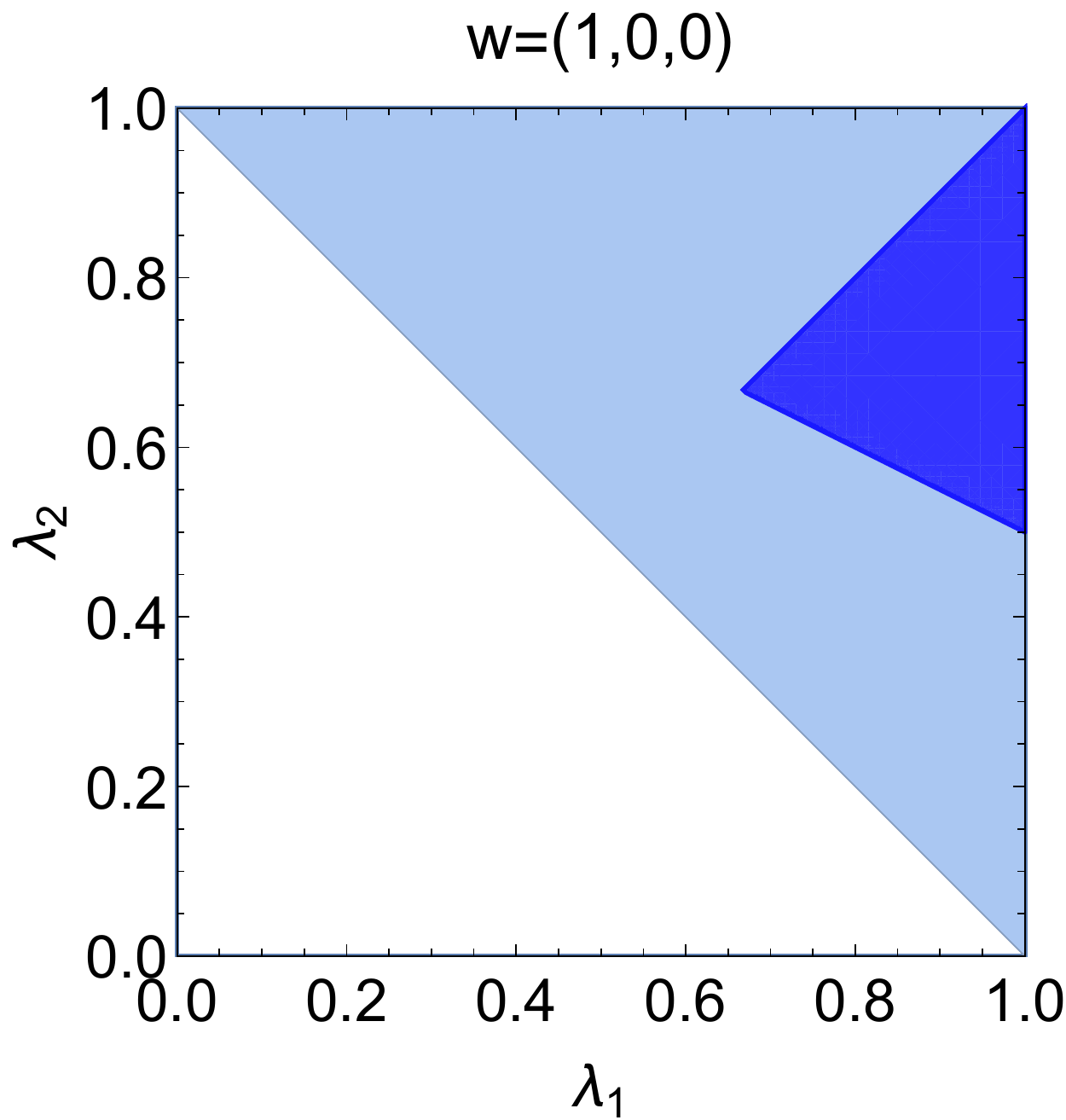}
\includegraphics[scale=0.27]{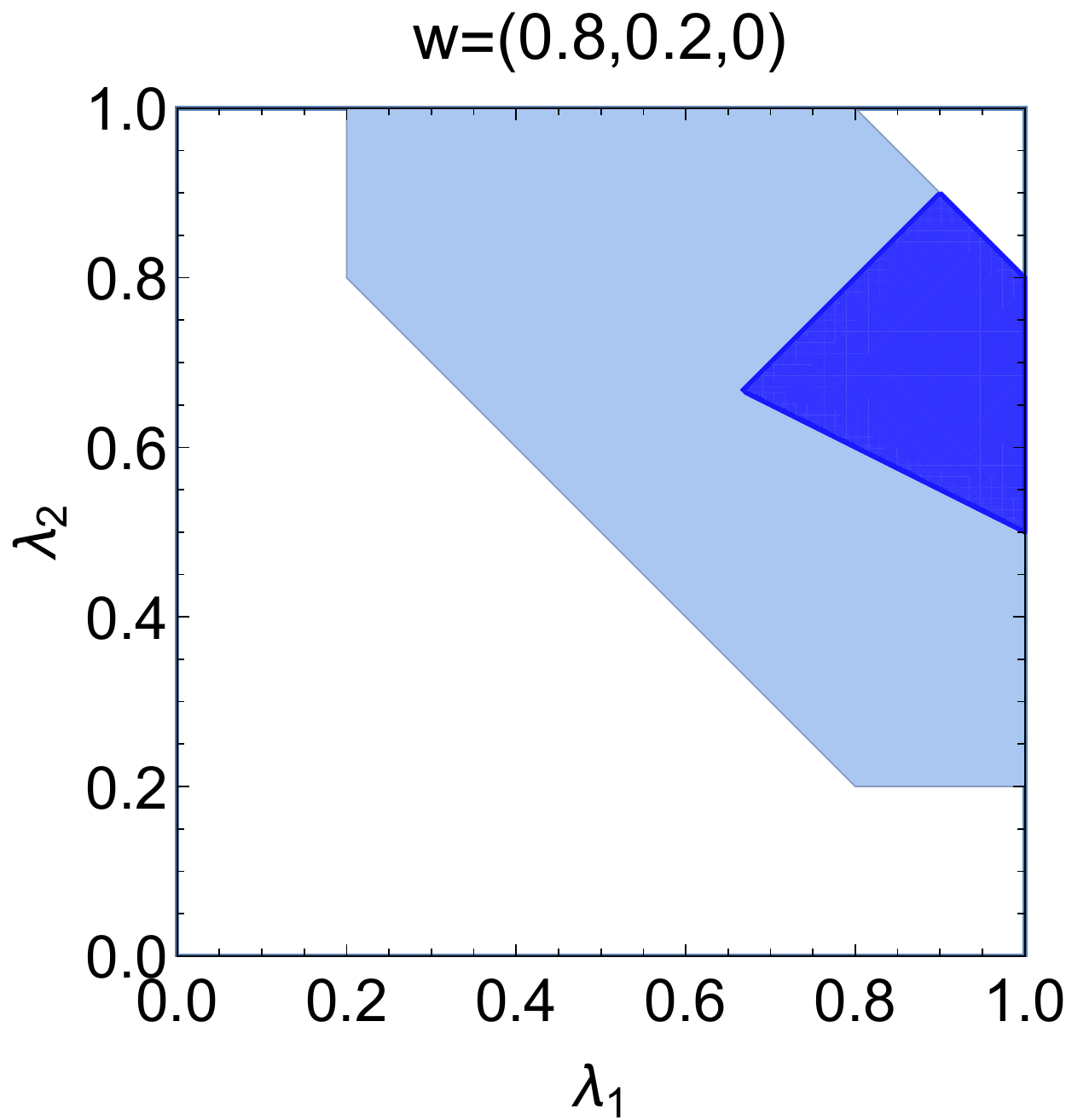}
\includegraphics[scale=0.27]{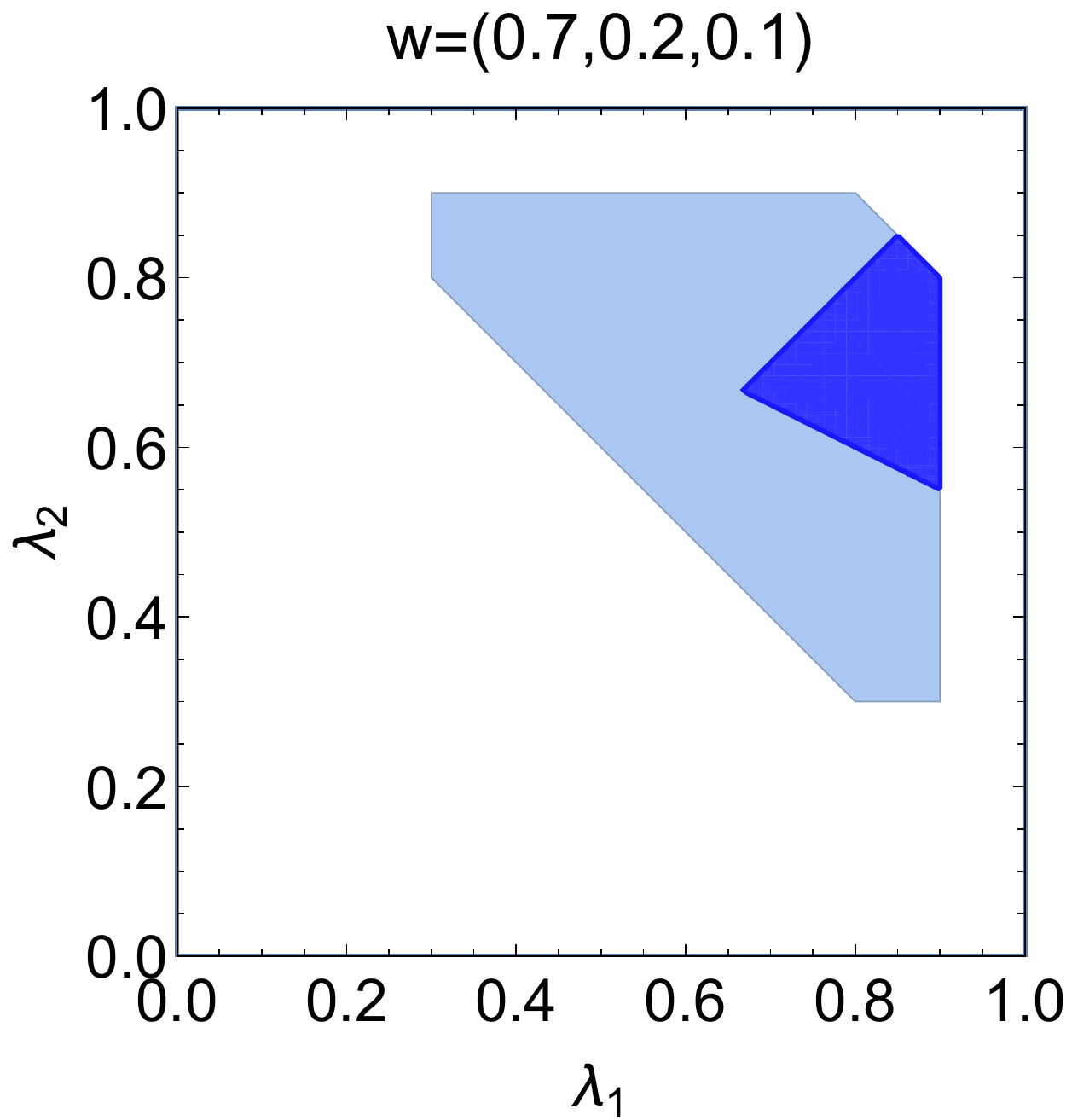}
\includegraphics[scale=0.27]{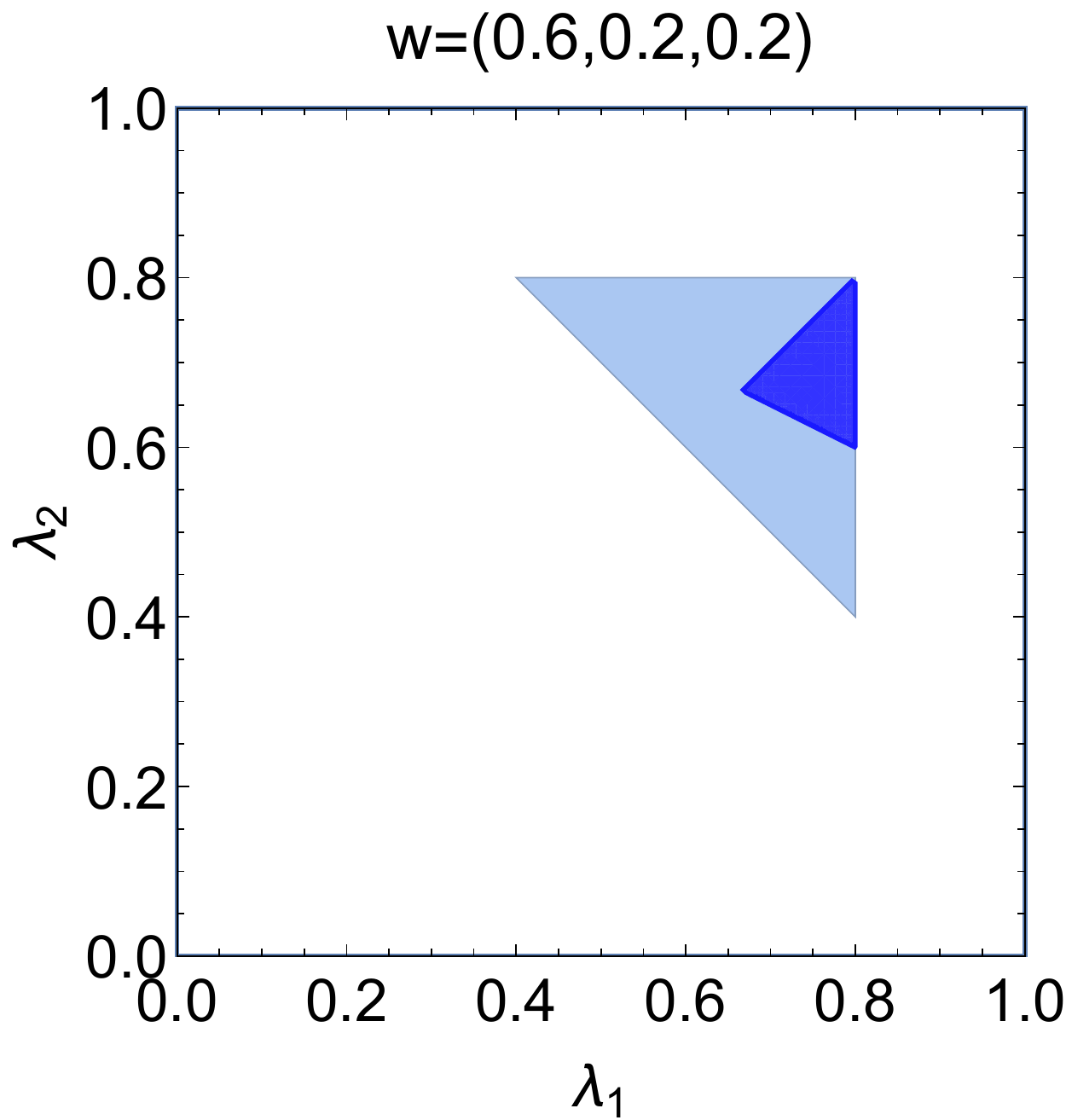}
\includegraphics[scale=0.27]{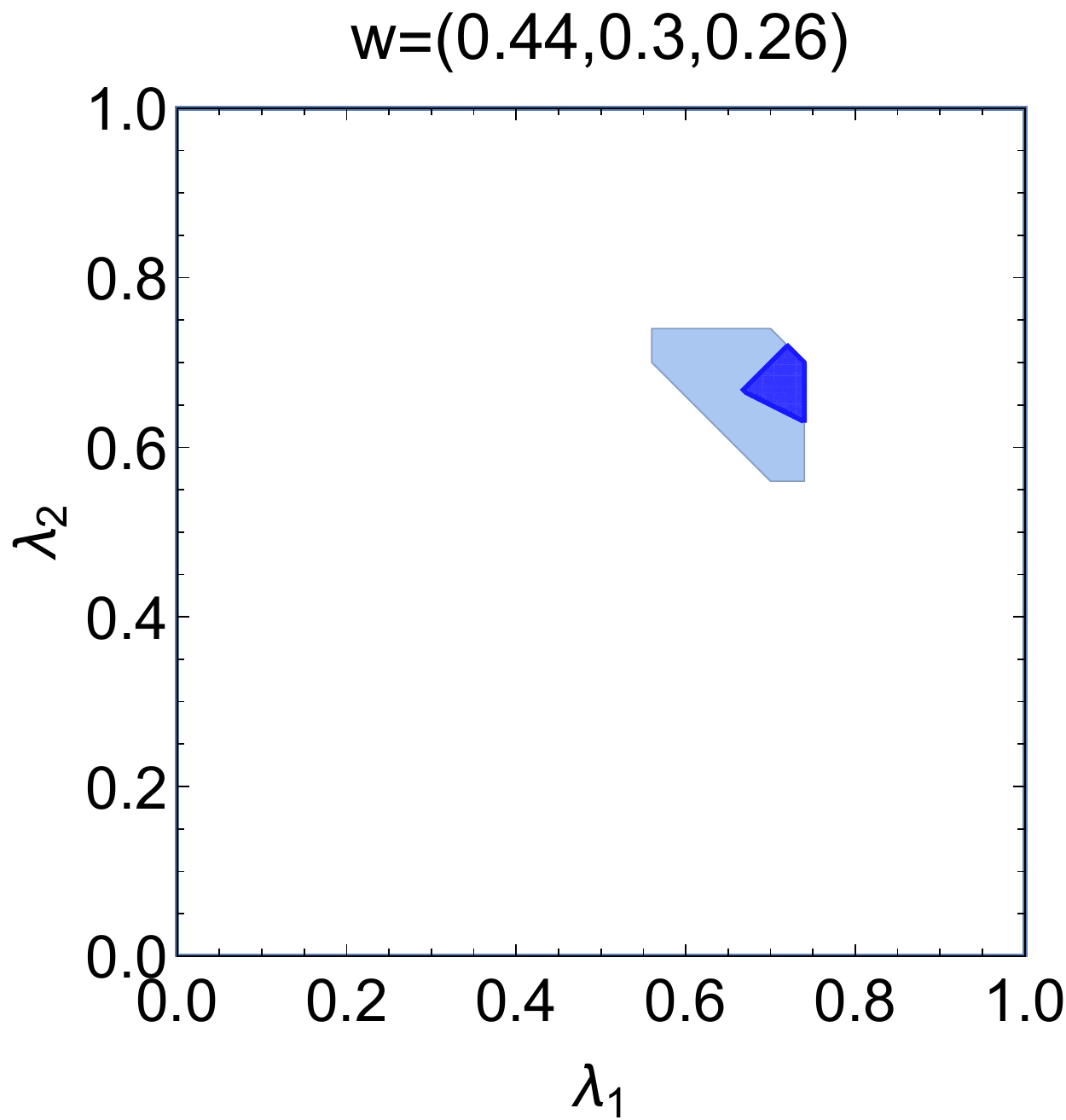}}
\caption{For the case of two fermions and a three-dimensional one-particle Hilbert space we present the spectral polytope $\Sigma(\wb)$ (light blue) \eqref{Sigma} of $\wb$-ensemble $N$-representable 1RDMs for five exemplary weight vectors $\wb$ to illustrate the inclusion relation in Eq.~\eqref{inclusion}. Normalization leads to a reduction of dimensionality according to  $\lambda_3=2-\lambda_1-\lambda_2$. By assuming a decreasing order of the natural occupation numbers, the permutation-invariant polytope $\Sigma(\wb)$ reduces to the polytope $\Sigma^{\downarrow}(\wb)$ (blue).}
\label{fig:ew}
\end{figure}
\twocolumngrid

A further crucial feature of the generalized exclusion principle is that increasing $N$ and $d$ does not add new inequalities, provided \eqref{Ndmin} is respected. Hence, $r$ determines the number of facet-defining inequalities entirely. This is a pleasant feature of the relaxed $\wb$-ensemble $N$-representability constraints compared to the generalized Pauli constraints \cite{KL06,AK08} whose complexity depends strongly on the setting $(N,d)$.
Hence, the hierarchy of generalized exclusion principle constraints is independent of the values of $N$ and $d$.
Actually, not only the number of inequalities is independent of $N$ and $d$ but even their form (see also the examples in the next section and in Ref.~\onlinecite{CLLS21}). We complete the present section by presenting the number of generating vectors $\bd v^{(i)}$ and inequalities for $r$ up to twelve in Table \ref{tab:r}.
\begin{table}[htb]
\resizebox{\linewidth}{!}{
\renewcommand{\arraystretch}{1.1}
\begin{tabular}{|c|c|c|c|c|c|c|c|c|c|c|c|c|}
\hline
r       & 1& 2& 3& 4&  5&  6&  7&   8& 9& 10& 11 &12  \\ \hline
$\#(\bd{v}^{(i)})$   & 1& 1& 2& 4& 10& 28& 90& 312& 1160&4518 & 18008 & 73224 \\ \hline
$\#(\text{ineq.})$ & 1& 2& 3& 5&  8& 13& 23&  42& 88 & 203& 486& 1257 \\ \hline
$\#(\text{new ineq.})$ & 1&1&1&2&3&5&10&19&46&115&283&771 \\ \hline
\end{tabular}
}
\caption{Number $\#(\bd{v}^{(i)})$ of generating vertices $\bd{v}^{(i)}$ and number $\#(\text{ineq.})$ of generalized exclusion principle constraints (relaxed $\wb$-ensemble N-representability conditions) on $\bd{\lambda}^\downarrow$ for the settings $r\leq 12$. A hierarchical generalization of Pauli's exclusion principle exists:  $\#(\text{new ineq.})$ denotes the number of new inequalities in the setting $r$ which did not exist yet in the setting $r-1$.}
\label{tab:r}
\end{table}

\section{Illustration and Examples}\label{sec:examples}
In this section, we derive and discuss the relaxed $\wb$-ensemble $N$-representability constraints for the cases \mbox{$r\leq 4$}. From a physical point of view, these are the most important cases since the corresponding $\wb$-ensemble RDMFT would allow one to determine the energies of the lowest four eigenstates. For each of those $r$, we first discuss the exemplary case of $N=3$ fermions and in that context refer to Fig.~\ref{fig:excit}.
Our general strategy (as explained in the previous section) will reveal that the restriction to $N=3$ is unnecessary and the results for arbitrary settings $(N,d)$ follow in a straightforward manner from those of $N=3$. Also, the generalized exclusion  principle constraints derived for any value $r$ will de facto be independent of the dimension $d$ and therefore the scope of our results will include the important complete basis set limit, i.e., $d\rightarrow \infty$.

\subsection{Relaxed $\wb$-ensemble $N$-representability conditions for $r\leq 2$ non-vanishing weights}\label{sec:r=2}
The simplest case, $r=1$ non-vanishing weights, corresponds to ground state RDMFT. As it is illustrated in Fig.~\ref{fig:excit} for $N=3$, there is a unique minimal configuration, namely $(1,2,3)$. The unique minimizer state is thus given by $\G=\ket{1,2,3}\!\bra{1,2,3}$ and the generating vertex follows as $\bd{v}=(1,1,1,0,\ldots)$ in agreement with Eq.~\eqref{PCmajor}. This majorization of a natural occupation number vector $\bd\lambda$ then takes the concrete form $1\geq \lambda_1^\downarrow\geq \lambda_2^\downarrow \geq  \ldots \geq \lambda_d^\downarrow\geq 0$, i.e., we recovered the well-known Pauli exclusion principle. 

For $r=2$ weights, there is only one lineup of length two possible, namely (see also Fig.~\ref{fig:excit})
\begin{equation}
\quad(1,2,3)\rightarrow (1,2,4)\,.
\end{equation}
This means that there is again only one minimizer $\G$ for the whole class of $\bd{h}$, with $h_1 \leq h_2 \leq \ldots < h_d$. It is given by (recall $w_1+w_2=1$)
\begin{equation}
\G= w_1\ket{1,2,3}\!\bra{1,2,3}+w_2\ket{1,2,4}\!\bra{1,2,4}
\end{equation}
and its vector of natural occupation numbers reads
\begin{equation}
\bd{v} = (1,1,w_1,w_2,0,\ldots)\,.
\end{equation}
The spectral polytope $\Sigma(\wb)$ is then given as the permutohedron $P_{\bd{v}}$. According to Rado's theorem, Theorem \ref{thm:Rado}, we eventually have
\begin{equation}
\ebw = \{\g \in \e\,|\, \mbox{spec}(\g) \prec \bd{v}\}\,.
\end{equation}
This result in its (almost) final form holds not only for $N=3$ but for any fermion number $N$ and any basis set size $d$.
Finally, the most practical form of our ultimate result reads: For any fermion number $N$ and any basis set size $d$, a 1RDM $\g$ is relaxed $\wb$-ensemble $N$-representable, $\g \in \ebw$, if and only if its decreasingly ordered natural occupation numbers fulfill the following constraints
\begin{eqnarray}\label{inr=2}
\lambda_1^\downarrow+ \lambda_2^\downarrow+ \ldots +\lambda_d^\downarrow &= & N\,, \nonumber \\
\lambda_1^\downarrow &\leq & 1\,, \nonumber \\
\lambda_1^\downarrow+ \lambda_2^\downarrow+\ldots +\lambda_N^\downarrow &\leq & N-1 + w_1\,.
\end{eqnarray}
In agreement with the anticipated hierarchical structure of generalized exclusion principle constraints, the condition $\lambda_1^\downarrow \leq  1$ from the setting $r=1$ is part of the conditions \eqref{inr=2} for the case $r=2$, complemented by one additional new constraint.

\subsection{Relaxed $\wb$-ensemble $N$-representability conditions for $r=3$ non-vanishing weights}\label{sec:r=3}
For the case of three weights, there are two lineups of length three (see also Fig.~\ref{fig:excit}),
\begin{eqnarray}
\mbox{(1):}&\quad&(1,2,3)\rightarrow (1,2,4)\rightarrow(1,2,5)\,, \nonumber \\
\mbox{(2):}&\quad&(1,2,3)\rightarrow (1,2,4)\rightarrow(1,3,4)\,.
\end{eqnarray}
The vector of natural occupation numbers of the corresponding $\wb$-minimizers follow as (recall $w_1+w_2+w_3 =1$)
\begin{eqnarray}
\bd{v}^{(1)}&=&(1,1,w_1,w_2,w_3,0,\ldots)\,, \nonumber \\
\bd{v}^{(2)}&=&(1,w_1+w_2,w_1+w_3,w_2+w_3,0,\ldots)\,.
\end{eqnarray}
According to Theorem \ref{thm:Radogener}, it follows that $\g \in \ebw$ if and only if there exists at least one convex combination $\bd{u}\equiv q \bd{v}^{(1)}+(1-q)\bd{v}^{(2)}$ such that $\mbox{spec}(\g)\prec \bd{u}$. To understand how the spectral polytope $\Sigma^\downarrow(\wb)$ looks like we state explicitly this majorization condition:
\begin{eqnarray}\label{PCmajorN3r3}
\lambda_1^\downarrow &\leq& 1\,, \nonumber \\
\lambda_1^\downarrow+\lambda_2^\downarrow &\leq &1+ w_1 + w_2+ q (1-w_1-w_2)\,, \nonumber \\
\lambda_1^\downarrow+\lambda_2^\downarrow +\lambda_3 &\leq &2+w_1\,,  \nonumber \\
\lambda_1^\downarrow+\lambda_2^\downarrow +\lambda_3^\downarrow +\lambda_4^\downarrow&\leq& 3-q (1-w_1-w_2)\,, \nonumber \\
 \sum_{j=1}^{k} \lambda_j^\downarrow &\leq& 3\,,\quad \forall k \geq 5 \,.
\end{eqnarray}
The restrictions on $\sum_{j=1}^{k}\lambda_j^\downarrow$ for $k\geq 5$ are redundant (the normalization of $\g$ implies that they are automatically fulfilled). To determine the spectral polytope $\Sigma^\downarrow(\wb)$ in the minimal hyperplane representation we need to get rid of the parameter $q$. Since $q$ is restricted to $[0,1]$ the upper bound on $\lambda_1^\downarrow+\lambda_2^\downarrow$ can vary between $1 + w_1 + w_2$ and $2$ and the one on $\lambda_1^\downarrow+\lambda_2^\downarrow +\lambda_3^\downarrow +\lambda_4^\downarrow$ between $2+w_1+w_2$ and $3$.
This implies that independent of the value $q$ there is definitely the constraint $\lambda_1^\downarrow+\lambda_2^\downarrow\leq 2$. Moreover, whenever $\lambda_1^\downarrow+\lambda_2^\downarrow$ respects this bound but exceeds the lower one (value $1+w_1+w_2$) it requires us to increase the value $q$ according to
\begin{equation}
\frac{\lambda_1^\downarrow+\lambda_2^\downarrow-1-w_1-w_2}{1-w_1-w_2}\leq q \,.
\end{equation}
This in turn tightens the upper bound (taking the value $3$) on $\lambda_1^\downarrow+\lambda_2^\downarrow+\lambda_3^\downarrow+\lambda_4^\downarrow$ to the value $4+w_1+w_2-\lambda_1^\downarrow-\lambda_2^\downarrow$.
Consequently, the two inequalities in \eqref{PCmajorN3r3} involving the parameter $q$ can equivalently be stated as
\begin{eqnarray}
\lambda_1^\downarrow+\lambda_2^\downarrow &\leq &2\,, \nonumber \\
2 \lambda_1^\downarrow+2\lambda_2^\downarrow+\lambda_3^\downarrow +\lambda_4^\downarrow &\leq& 4 + w_1 + w_2\,.
\end{eqnarray}
Repeating these steps again for \emph{arbitrary} $N$, leads to the final form of our result: For any fermion number $N$ and any basis set size $d$, a 1RDM $\g$ is relaxed $\wb$-ensemble $N$-representable, $\g \in \ebw$, if and only if its decreasingly ordered natural occupation numbers fulfill the following constraints
\begin{eqnarray}\label{inr=3}
\lambda_1^\downarrow+ \lambda_2^\downarrow+ \ldots +\lambda_d^\downarrow &=& N\,, \nonumber \\
\lambda_1^\downarrow &\leq& 1\,, \nonumber \\
\lambda_1^\downarrow+ \lambda_2^\downarrow+\ldots +\lambda_{N}^\downarrow &\leq & N-1 + w_1\,,\nonumber \\
2\sum_{j=1}^{N-1}\lambda_j^\downarrow + \lambda_N^\downarrow+\lambda_{N+1}^\downarrow&\leq & 2(N-1)+w_1+w_2\,.
\end{eqnarray}
In agreement with Table \ref{tab:r}, we have three (non-trivial) facet-defining inequalities for $r=3$.
In particular, the last inequality is the only additional new one compared to $r=2$. This illustrates again the hierarchy of generalized exclusion principle constraints.

\subsection{Relaxed $\wb$-ensemble $N$-representability conditions for $r=4$ non-vanishing weights}\label{sec:r=4}
For the case of four weights, there are four lineups of length four (see also Fig.~\ref{fig:excit}),
\begin{eqnarray}
\mbox{(1):}&\quad&(1,2,3)\rightarrow (1,2,4)\rightarrow(1,2,5)\rightarrow(1,3,4)\,, \nonumber \\
\mbox{(2):}&\quad&(1,2,3)\rightarrow (1,2,4)\rightarrow(1,3,4)\rightarrow(1,2,5)\,, \nonumber \\
\mbox{(3):}&\quad&(1,2,3)\rightarrow (1,2,4)\rightarrow(1,2,5)\rightarrow(1,2,6)\,, \nonumber \\
\mbox{(4):}&\quad&(1,2,3)\rightarrow (1,2,4)\rightarrow(1,3,4)\rightarrow (2,3,4)\,.
\end{eqnarray}
The corresponding natural occupation number vectors $\bd v^{(i)}$ for each lineup are presented in Tab.~\ref{tab:SigmaR4}.
There, we skip various additional entries $0$ in case the one-particle Hilbert space has a dimension $d>6$. The reader shall also
recall that $w_1+w_2+w_3+w_4=1$.

\begin{table}[htb]
\centering
$
\begin{array}{|c|c|c|c|c|}
\hline
&\bd{v}^{(1)}&\bd{v}^{(2)}&\bd{v}^{(3)}&\bd{v}^{(4)} \nonumber \\  \hline
v_1&1&1&1&w_1+w_2+w_3 \nonumber \\ \hline
v_2&w_1+w_2+w_3&w_1+w_2+w_4&1&w_1+w_2+w_4 \nonumber \\ \hline
v_3&w_1+w_4&w_1+w_3&w_1&w_1+w_3+w_4 \nonumber \\ \hline
v_4&w_2+w_4&w_2+w_3&w_2&w_2+w_3+w_4 \nonumber \\ \hline
v_5&w_3&w_4&w_3&0 \nonumber \\ \hline
v_6&0&0&w_4&0 \nonumber \\ \hline
\end{array}
$
\caption{For the setting $(N,d)=(3,6)$ and four weights there are four generating vertices $\bd{v}^{(i)}$ emerging from the $\wb$-minimization of the energy expectation value \eqref{Min1liftN}. The corresponding results are representative for all $(N,d)$ and one would just need to add entries `1' and `0' accordingly.}\label{tab:SigmaR4}
\end{table}

Due to the larger number of generating vertices $\bd v^{(i)}$ compared to the cases $r\leq 3$, turning the vertex representation into a halfspace representation is more tedious and in particular requires additional mathematical tools. We present the mathematical formalism for this in Ref.~\onlinecite{CLLS21} and provide in the following only the inequalities. For the setting $(N,d)=(3,6)$, the minimal hyperplane representation consists of all facet-defining inequalities for $r=3$, complemented by the following two new constraints
\begin{eqnarray}\label{eq:N3r4}
2\lambda_1^\downarrow+ \lambda_2^\downarrow+ \lambda_3^\downarrow+ \lambda_4^\downarrow &\leq & 3+w_1+w_2+w_3\,, \nonumber \\
2\lambda_1^\downarrow+ 2\lambda_2^\downarrow+ \lambda_3^\downarrow+ \lambda_4^\downarrow + \lambda_5^\downarrow &\leq & 4 + w_1 +w_2+w_3 \,.
\end{eqnarray}
Thus, we obtain in total five (non-trivial) constraints on the natural occupation number vector $\bd\lambda^\downarrow$, in agreement with Table \ref{tab:r}. Hence (see also \onlinecite{CLLS21}), for \emph{arbitrary} fermion number $N$ and \emph{arbitrary} basis set size $d$, a 1RDM $\g$ is relaxed $\wb$-ensemble $N$-representable for $r=4$, $\g \in \ebw$, if and only if its decreasingly ordered natural occupation numbers fulfill the following constraints
\begin{widetext}
\begin{eqnarray}\label{inr=4}
\lambda_1^\downarrow+ \lambda_2^\downarrow+ \ldots +\lambda_d^\downarrow &=& N\,, \nonumber \\
\lambda_1^\downarrow &\leq& 1\,, \nonumber \\
\lambda_1^\downarrow+ \lambda_2^\downarrow+\ldots +\lambda_{N}^\downarrow &\leq & N-1 + w_1\,,\nonumber \\
2\sum_{j=1}^{N-1}\lambda_j^\downarrow + \lambda_N^\downarrow+\lambda_{N+1}^\downarrow&\leq & 2(N-1)+w_1+w_2\,, \nonumber \\
2\sum_{i=1}^{N-2}\lambda_i^\downarrow + \lambda_{N-1}^\downarrow + \lambda_N^\downarrow + \lambda_{N+1}^\downarrow &\leq&  2(N-2)+1+w_1+w_2+w_3\,, \nonumber \\
2\sum_{i=1}^{N-1}\lambda_i^\downarrow + \lambda_N^\downarrow + \lambda_{N+1}^\downarrow + \lambda_{N+2}^\downarrow &\leq&  2(N-1) + w_1 +w_2+w_3 \,.
\end{eqnarray}
\end{widetext}

\section{Role of generalized exclusion principle in lattice DFT}\label{sec:latticeDFT}
The relaxed $\wb$-ensemble $N$-representability conditions (generalized exclusion principle constraints) naturally emerged in the context of $\wb$-ensemble RDMFT. Contrarily, in the related $\wb$-ensemble density functional theory (GOK-DFT), the functional's domain is assumed to be not restricted by any non-trivial constraints. This assumption is incorrect, however, at least for discretized models. To explain this,
we consider for the moment spinless fermions on a lattice with $d$ sites. The relaxed universal density functional, denoted by $\Gbw(\bd{n})$ to avoid confusion with the $\wb$-RDMFT functional, depends on the vector $\bd{n}$ of occupation numbers $n_i$ of the lattice site states $\ket{i}$, $i=1,2,\ldots,d$. Similarly to RDMFT, one identifies in DFT a class of Hamiltonians of interest which defines the scope of the corresponding DFT. It is given by
\begin{equation}\label{hamDFT}
H(v)\equiv v+t +W\,,
\end{equation}
i.e., not only the pair interaction $W$ but also the kinetic energy operator $t$ is fixed, while the external potential $v$ remains to be a free  parameter. Consequently, DFT is a special case of RDMFT which is obtained by restricting the family of \emph{all} one-particle Hamiltonians $h$ to the (affine) subclass $h \equiv h(v)\equiv v+t$ with some fixed $t$. To understand the form of the relaxed functional's domain in ground state or more generally $\wb$-ensemble DFT, one needs to address the corresponding relaxed $N$-representability problem: Which occupation number vectors $\bd{n}$ are compatible to some $N$-fermion ensemble state $\G$ with a spectrum majorized by $\wb$? By introducing the function $\mbox{diag}(\cdot)$ which maps 1RDMs $\g$ to their vectors $\bd{n}$ of diagonal entries (with respect to the lattice site basis states $\{\ket{i}\}$) this means to determine the domain $\mbox{dom}(\Gbw)$ of the density functional $\Gbw$ as
\begin{eqnarray}\label{DFTdomain}
\mbox{dom}(\Gbw) &=& \mbox{diag}\Big(N \mbox{Tr}_{N-1}\big[\Ebw\big]\Big) \nonumber \\
& =&  \mbox{diag}\Big(\ebw\Big)\,.
\end{eqnarray}
Based on the insights of our work and the comprehensive introduction into convex analysis, we can answer the question above in a straightforward manner. This involves two steps according to \eqref{DFTdomain}: describing the set $\ebw$ and then understanding the image of the map $\mbox{diag}(\cdot)$. The first problem was comprehensively solved in our work and led to the generalized exclusion principe constraints, illustrated in Sec.~\ref{sec:examples}. For the second step, we just need to recall Theorem \ref{thm:Schur}. Applying it to the 1RDM and the reference basis of lattice site states $\ket{i}$, it states that $\bd{n}$ is compatibly to a 1RDMs $\g$ if and only if $\bd{n}$ is majorized by the eigenvalues of $\g$, i.e.,
\begin{equation}
\bd{n} \equiv (\bra{i}\g \ket{i})_{i=1}^d \prec \mbox{spec}(\g) \equiv \bd{\lambda}\,.
\end{equation}
These two steps can be combined thanks to our generalization of Rado's theorem (Theorem \ref{thm:Radogener}) and by referring to the transitivity property of the majorization $\prec$. The latter means that for any $\bd{u},\bd{v},\bd{w}$ the relations $\bd{u} \prec \bd{v}$ and $\bd{v} \prec \bd{w}$ imply $\bd{u} \prec \bd{w}$. Theorem \ref{thm:Radogener} then yields immediately that $\bd{n}$ lies in the domain of the density functional $\Gbw$ if and only if $\bd{n}$ fulfills the relaxed one-body $\wb$-ensemble $N$-representability constraints.  Hence, our generalized exclusion principle conditions also restrict the domain of GOK-DFT. Yet, it is worth noticing that they are typically much less restrictive than in $\wb$-RDMFT. This is due to the fact that saturating them for some $\bd{n}$ in the context of DFT would in addition require to saturate some of the conditions underlying the majorization $\bd{n} \prec \bd{\lambda}$ (recall Sec.~\ref{sec:major}). This is only possible if the natural orbitals with significant occupation numbers are almost identical to the lattice site states $\ket{i}$. An interesting physical regime for this to happen is the one of strong interactions. For instance in the Hubbard model with on-site interaction $U$ and nearest neighbor hopping of strength $t$, this would correspond to the limit $U/t\gg 1$ describing a Mott insulator.

Let us also briefly comment on functional theories for one-particle Hamiltonians $h$ in \eqref{ham} or external potentials $v$ in \eqref{hamDFT} restricted to the subclass of \emph{spin-independent} operators. In those cases, the corresponding $\wb$-RDMFT or GOK-DFT would be based on a universal functional of the orbital part of the 1RDM and the spin-averaged occupation numbers $n_i= n_{i\uparrow}+n_{i\downarrow}$, respectively. For such a restricted RDMFT one could repeat various steps of our work to determine the corresponding relaxed $\wb$-ensemble $N$-representability constraints on the eigenvalues of the orbital 1RDM $\g_l$, obtained from the full 1RDM $\g$ by tracing out the spin, $\g_l \equiv \mbox{Tr}_s[\g]$. This would then lead to a hierarchical generalization of the Pauli exclusion principle $0 \leq \lambda_i^{(l)} \leq 2$ for orbital natural occupation numbers $\lambda_i^{(l)}$. Working this out shall be a challenge for a different project though. In analogy to the case of spin-dependent functional theories, these conditions on orbital natural occupation numbers would then apply to the corresponding DFT as well (which is nothing else than ordinary GOK-DFT). Consequently, it would restrict the functional's domain in GOK-DFT for electrons on a lattice to a smaller subset as the one described by $0 \leq n_i \leq 2$. Due to the argument above concerning the Schur-Horn theorem it remains a future challenge to understand how significant these additional constraints are for the application of GOK-DFT.

\section{Summary and Conclusions}\label{sec:concl}
In the first part of our work, we provided a more comprehensive introduction into key concepts of convex analysis, such as convex hulls, lower convex envelops, exact convex relaxation, majorization, permutohedra and conjugation. We expect that those concepts will play a pivotal role in the future development of functional theories in general, in particular as far as their foundations are concerned. For instance, a basic understanding of convex conjugation and the geometry of density matrices was already sufficient for disproving and correcting very recently one of the fundamental theorems in RDMFT \cite{S18}.

Also for the main part of our work, those concepts from convex analysis were fundamentally important.
They namely allowed us to develop a solid foundation for $\wb$-RDMFT which was recently proposed in Ref.~\onlinecite{SP21} for targeting excitation energies.
First, we used a generalization of the Ritz variational principle to ensemble density operators with eigenvalues $\bd{w}$ in combination with the constrained search to define a universal functional of the one-particle reduced density matrix. Yet, only by realizing an exact convex relaxation this could be turned into a viable functional theory which shall be called $\wb$-ensemble RDMFT or just $\wb$-RDMFT. This general procedure includes Valone's pioneering work \cite{V80} on ground state RDMFT as the special case $\bd{w}=(1,0,\ldots)$. This remarkable observation provides further evidence for the relevance of convex analysis for the understanding and development of functional theories.
Then, we worked out in a comprehensive manner a methodology for deriving a compact description of the functional's domain. This led to a hierarchy of generalized exclusion principle constraints which we illustrate in great detail.
Most importantly, in analogy to Pauli's exclusion principle and in striking contrast to the generalized Pauli constraints \cite{KL06,AK08,KL09}, these conditions are effectively independent of the particle number $N$ and the dimension $d$ of the underlying one-particle Hilbert space. To be more specific (as it will be explained and proven in a more mathematical context in Ref.~\onlinecite{CLLS21}), calculating these conditions  for some specific (smaller) setting $(N,d)$ would immediately imply the corresponding conditions for arbitrary $N,d$, including the important complete basis set limit, i.e., $d\rightarrow \infty$.

Intriguingly, as explained in Sec.~\ref{sec:latticeDFT} the generalized exclusion principle constraints also affect the application of density functional theory in the context of lattice systems. The same conditions namely apply as well to the vector of occupation numbers. It will be one of the future challenges in $\wb$-ensemble DFT (GOK-DFT) to understand better for which systems those conditions are particularly relevant.

Equipped with the constrained search expression \eqref{Fbwmajor} of the universal functional and the compact description of its domain $\ebw$,
the common process of proposing, testing and improving functional approximations can now be initiated. For instance, one potentially promising route would be to first use a generalization of the Hartree-Fock ansatz to propose the $\wb$-analogue of the Hartree-Fock functional. Then, one may tweak this functional to realize the correct normalization condition between the 2RDM and 1RDM, leading to the $\wb$-analogue of the M\"uller functional \cite{M84,BB02}.

Last but not least, it is worth recalling that the future success of our novel method and RDMFT in general necessitates more efficient numerical minimization schemes.  In that regard, recent developments suggest to upgrade functionals from DFT to RDMFT \cite{M21} and put forward  a Kohn-Sham approach based on hypercomplex numbers to unify DFT and RDMFT \cite{Su21a,Su21b}. From a more conceptual point of view, one may wonder to which degree the limitations of the recent natural orbital-driven RDMFT can actually be overcome. For instance, it has been shown that the distinctive form of the Coulomb interaction leads to discontinuous  high-order derivatives of the off-diagonal entries of 1RDM in spatial representation \cite{C20b}. In turn, these and further related results \cite{Hill83,C20c,Sobolev21,CioSt21} explain the surprisingly slow convergence of common minimization schemes with respect to the  number of natural orbitals and provide ideas for systematic improvements  \cite{C20b,C20c,CioSt21}.

\begin{acknowledgments}
C.S. thanks S.\hspace{0.5mm}Pittalis for inspiring and helpful discussions and M.\hspace{0.5mm}Lehn and G.\hspace{0.5mm}M.\hspace{0.5mm}Ziegler for their support which led to this fruitful collaboration with F.C.~and J.-P.L.. We acknowledge financial support from the German Research Foundation (Grant SCHI 1476/1-1) (J.L., C.S.) and the UK Engineering and Physical Sciences Research Council (Grant EP/P007155/1) (C.S.).
\end{acknowledgments}

\appendix
\section{Proof of Theorem \ref{thm:Radogener}}\label{app:Rado2proof}
Theorem \ref{thm:Radogener} states that the two sets
\begin{equation}
\mathcal{P}\equiv \mbox{conv}\big(\big\{\pi(\bd{v}^{(j)})\,\big|\,j=1,\ldots,R, \pi \in \mathcal{S}^d\big\}\big)
\end{equation}
and
\begin{equation}
\mathcal{P}'\equiv \{\bd{\lambda}\in \RR^d \,|\, \exists \bd{v}=\sum_{j=1}^{R}p_j\bd{v}^{(j)}: \bd{\lambda}\prec \bd{v}\}
\end{equation}
coincide.
We first show $\mathcal{P}'\subset \mathcal{P}$. For this we consider $\bd{\lambda} \in \mathcal{P}'$ which is majorized by some
convex combination $\bd{v}=\sum_{j=1}^{R}p_j\bd{v}^{(j)}$, $\bd{\lambda} \prec \bd{v}$. According to Rado's Theorem \ref{thm:Rado}, $\bd{\lambda}$ is contained in the specific permutohedron $\mathcal{P}_{\bd{v}}$. Since $\mathcal{P}_{\bd{v}}$ is in particular a polytope, we can express $\bd{\lambda}$ as a convex combination of its vertices $\{\pi(\bd{v})\}_{\pi \in \mathcal{S}^d}$. Moreover, each $\pi(\bd{v})= \sum_{j=1}^R p_j \pi(\bd{v}^{(j)})$ is apparently a convex combination of (some) vertices of $\mathcal{P}$. This implies that also $\bd{\lambda}$ can be written as a convex combination of the vertices of $\mathcal{P}$ and therefore $\bd{\lambda}\in \mathcal{P}$. To prove $\mathcal{P}\subset \mathcal{P}'$ let us consider some $\bd{\lambda}\in \mathcal{P}$.
Per assumption we can express $\bd{\lambda}$ as a convex combination of the vertices of $\mathcal{P}$ and find
\begin{equation}
\bd{\lambda} = \sum_{i=1}^{R} \sum_{\pi \in \mathcal{S}^d} q_{i,\pi}\pi(\bd{v}^{(i)}) \equiv \sum_{i=1}^{R} p_i \bd{\lambda}_i\,,
\end{equation}
where for all $i=1,2,\ldots,R$ we introduced $p_i\equiv \sum_{\pi \in \mathcal{S}^d} q_{i,\pi}$ and $\bd{\lambda}_i\equiv \sum_{\pi \in \mathcal{S}^d} q_{i,\pi}\pi(\bd{v}^{(i)})/p_i$. Per construction we have $\bd{\lambda}_i \in \mathcal{P}_{\bd{v}^{(i)}}$ and in particular all $\bd{\lambda}_i$ are correctly normalized. Rado's Theorem \ref{thm:Rado} then implies $\bd{\lambda}_i \prec \bd{v}^{(i)}$ for each $i$ and therefore also
\begin{equation}
  \bd{\lambda}= \sum_{i=1}^{R} p_i \bd{\lambda}_i \prec \sum_{i=1}^{R} p_i \bd{v}^{(i)}\,.
\end{equation}
This last relation means nothing else than $\bd{\lambda} \in \mathcal{P}'$ which finishes the proof.
We also would like to stress that the polytope $\mathcal{P}$ can expressed as
\begin{eqnarray}
\mathcal{P} &=& \bigcup \Big\{\mathcal{P}_{\bd{v}}\,\big|\, \bd{v} =\sum_{j=1}^{R}p_j\bd{v}^{(j)} \Big\}\nonumber \\
&=& \mbox{conv}\Big(\bigcup_{i} \mathcal{P}_{\bd{v}^{(i)}}\Big)\,.
\end{eqnarray}

\bibliography{Refs3}

\end{document}